\renewcommand*{\backrefalt}[4]{%
  \ifcase #1 %
    No citations.%
  \or
    Cited on page #2.%
  \else
    Cited on pages #2.%
  \fi
}
\newacronym{mmle}{{{\textsc{\small MMLE}}}}{maximum \textit{marginal} likelihood estimation}
\newacronym{mcmc}{{{\textsc{\small MCMC}}}}{Markov chain Monte Carlo}
\newacronym{pgd}{{{\textsc{\small PGD}}}}{particle gradient descent}
\newacronym{ipla}{{{\textsc{\small IPLA}}}}{interacting particle Langevin algorithm}
\newacronym{ula}{{{\textsc{\small ULA}}}}{unadjusted Langevin algorithm}
\newacronym{kiplmc}{{{\textsc{\small KIPLMC}}}}{kinetic interacting particle Langevin Monte Carlo}
\newacronym{kiplmc1}{{{\textsc{\small KIPLMC1}}}}{kinetic interacting particle Langevin Monte Carlo 1}
\newacronym{kiplmc2}{{{\textsc{\small KIPLMC2}}}}{kinetic interacting particle Langevin Monte Carlo 2}
\newacronym{lvm}{{{\textsc{\small LVM}}}}{latent variable model}
\newacronym{mle}{{{\textsc{\small MLE}}}}{maximum likelihood estimation}
\newacronym{em}{{{\textsc{\small EM}}}}{expectation maximisation}
\newacronym{soul}{{{\textsc{\small SOUL}}}}{Stochastic Optimisation via Unadjusted Langevin}
\newacronym{ips}{{{\textsc{\small IPS}}}}{interacting particle system}
\newacronym{kipld}{{{\textsc{\small KIPLD}}}}{kinetic interacting particle Langevin diffusion}
\newacronym{sde}{SDE}{stochastic differential equation}
\newacronym{mpgd}{{{\textsc{\small MPGD}}}}{momentum particle gradient descent}
\newacronym{klmc}{{{\textsc{\small KLMC}}}}{kinetic Langevin Monte Carlo}
\newacronym{mala}{{{\textsc{\small MALA}}}}{Metropolis adjusted Langevin algorithm}
\newacronym{lmc}{{{\textsc{\small LMC}}}}{Langevin Monte Carlo}
\newacronym{kl}{{{\textsc{\small KL}}}}{Kullback-Leibler}
\newacronym{bnn}{{{\textsc{\small BNN}}}}{Bayesian Neural Network}
\newacronym{klmc1}{{{\textsc{\small KLMC1}}}}{kinetic Langevin Monte Carlo 1}
\newacronym{mpgdnc}{\small MPGDnc}{\small MPGDnc}
\newtheorem{proposition}{Proposition}
\newtheorem{theorem}{Theorem}
\newtheorem{remark}{Remark}
\newtheorem{lemma}{Lemma}
\newtheorem{assumptionaux}{}
\newenvironment{assumption}
 {\begin{assumptionaux}\ignorespaces}
 {\end{assumptionaux}}
\newtheorem{assumptionauxG}{}
\newcommand{\argmax}{\operatornamewithlimits{argmax}}
\def\X{{\mathbf X}}
\def\B{{\mathbf B}}
\def\Z{{\mathbf Z}}
\def\V{{\mathbf V}}
\def\bR{{\mathbb R}}
\def\bE{{\mathbb E}}
\def\bN{{\mathbb N}}
\def\f0{{\mathbf 0}}
\def\md{{\mathrm d}}
\definecolor{bred}{rgb}{0.8,0,0}
\definecolor{mustyel}{rgb}{0.88, 0.67, 0.39}
\definecolor{iris}{rgb}{0.35, 0.31, 0.81}
\definecolor{darkchestnut}{rgb}{0.6, 0.41, 0.38}
\title{Kinetic Interacting Particle Langevin Monte Carlo}
\author{Paul Felix Valsecchi Oliva}
\author{O. Deniz Akyildiz}
\affil{Department of Mathematics, Imperial College London}
\affil[]{{\textcolor{blue}{\footnotesize \texttt{\{paul.valsecchi-oliva21, deniz.akyildiz\}@imperial.ac.uk}}}}
\begin{document}
\maketitle

\begin{abstract}
This paper introduces and analyses interacting underdamped Langevin algorithms, termed Kinetic Interacting Particle Langevin Monte Carlo (KIPLMC) methods, for statistical inference in latent variable models. We propose a diffusion process that evolves jointly in the space of parameters and latent variables and show that the stationary distribution of this diffusion concentrates around the maximum marginal likelihood estimate of the parameters. We then provide two explicit discretisations of this diffusion as practical algorithms to estimate parameters of statistical models. For each algorithm, we obtain nonasymptotic rates of convergence in Wasserstein-2 distance for the case where the joint log-likelihood is strongly concave with respect to latent variables and parameters. We achieve accelerated convergence rates clearly demonstrating improvement in dimension dependence. To demonstrate the utility of the introduced methodology, we provide numerical experiments that illustrate the effectiveness of the proposed diffusion for statistical inference. Our setting covers a broad number of applications, including unsupervised learning, statistical inference, and inverse problems. 
\end{abstract}

\section{Introduction}\label{sec:intro}

\Glspl*{lvm} are ubiquitous in many areas of statistical science, e.g., complex probabilistic models for text, audio, video and images \citep{blei2003latent, smaragdis2006probabilistic, hoff2002latent}, or inverse problems \citep{glyn2024statistical}. These models capture the underlying structure of the data in terms of low-dimensional variables -- a structure that often exists in real world \citep{whiteley2022statistical}. However, \glspl*{lvm} often require non-trivial procedures for their estimation as their likelihood is often intractable.

Consider a generic latent variable model as $p_\theta (x,y)$, parameterised by $\theta\in\mathbb{R}^{d_\theta}$, for \textit{fixed}, observed data $y \in \bR^{d_y}$, and latent variables $x \in \bR^{d_x}$. Thus, formally, we see the statistical model as a real-valued mapping $p_\theta(x,y) :\mathbb{R}^{d_x}\times \mathbb{R}^{d_\theta}\to \bR$. The task we are interested in is to estimate the parameter $\theta$ that explains the fixed dataset $y$. Often, this is achieved via the \gls*{mle}. In our setting, due to the presence of latent variables, we aim at finding the \gls*{mmle}, which is termed the \gls*{mmle} problem \citep{dempster1977maximum}. More precisely, our problem takes the form
\begin{align}\label{eq:max}
\bar{\theta}_\star \in \argmax_{\theta\in\mathbb{R}^{d_\theta}} \log p_\theta(y),
\end{align}
where $p_\theta(y) := \int p_\theta(x,y) \mathrm{d}x$ is the \textit{marginal likelihood} (also called the model evidence in Bayesian statistics \citep{bernardo2009bayesian}). It is apparent from \eqref{eq:max} that the problem cannot be solved via optimisation techniques alone for most statistical models.

Classically, this problem is solved with the iterative \gls*{em} algorithm \citep{dempster1977maximum}, which provably converges to a local maximum. At a given parameter estimate $\theta_n$, the \gls*{em} algorithm implements a map $\theta_n \mapsto \argmax_{\theta \in \bR^{d_\theta}} \mathbb{E}_{p_{\theta_n} (x|y)}[\log p_{\theta} (x, y)]$, which is guaranteed to increase the marginal likelihood at each iteration. This procedure requires an ``E-step'', namely the computation (or estimate) of the expectation w.r.t. the posterior distribution of the latent variables $p_{\theta_n}(x|y)$, and an ``M-step'', which is the maximisation of the expected log-likelihood w.r.t. $\theta$. These steps are intractable in general and can be approximated with a variety of methods, which have been extensively explored. For example, Monte Carlo \gls*{em} \citep{wei1990monte} and stochastic \gls*{em} \citep{celeux1985sem} have been widely studied  \citep{celeux1992stochastic, chan1995monte, sherman1999conditions, booth1999maximizing, cappe1999simulation, diebolt1995stochastic}.

In the most general case, the \gls*{em} algorithm is implemented using \gls*{mcmc} techniques for the E-step \citep{atchade2017}, and numerical optimisation techniques for the M-step \citep{meng1993maximum,liu1994ecme,lange1995gradient}. With the popularity of \gls*{ula} in Bayesian statistics and machine learning \citep{durmus2017nonasymptotic,durmus2019high,dalalyan2017theoretical}, novel variants of \gls*{em} algorithms use unadjusted chains. Most notably, \citet{de2021efficient} studied an algorithm termed \gls*{soul}, which performs unadjusted Langevin steps for the E-step and stochastic gradient ascent for the M-step, building on the ideas of \citet{atchade2017}. In the \gls*{mcmc} based \gls*{em} setting, the bias incurred finite \gls*{mcmc} steps complicates the theoretical analysis and requires stringent conditions for step-sizes.

An alternative approach was developed in \citet{kuntz2023particle} where the authors proposed an interacting particle system, consisting of $N$ particles to replace sequential \gls*{mcmc} steps. This framework, termed \gls*{pgd}, is based on the idea of joint optimisation and sampling to avoid a double-loop structure as in \gls*{mcmc}-based approaches, which results in an efficient algorithmic framework as well as one that is amenable to theoretical analysis \citet{caprio2024error}. Inspired by the approach in \citet{kuntz2023particle}, a closely related interacting particle system was proposed in \citet{akyildiz2025interacting}, where a scaled noise is injected in the $\theta$-dimension. The authors termed this method \gls*{ipla} and proved error bounds for the algorithm. This seemingly small modification is significant, making the algorithm an instance of a Langevin diffusion (an observation we build on in this paper), streamlining the theoretical analysis and opening the door to a wide variety of extensions with theoretical guarantees. The \gls*{ipla} methodology has been already utilised in other contexts, see, \citet{johnston2024taming} for superlinear extensions and \citet{encinar2025proximal} for a set of proximal methods based on \gls*{ipla}, and \citet{akyildiz2024multiscale} for relations between \gls*{ipla} and multiscale methods. \\

\noindent\textbf{Contributions.} In this paper, we develop optimisation methods for \gls*{mmle} based on underdamped Langevin samplers. Our contributions can be summarised as follows: \\

\noindent\textbf{(C1)} We propose the \gls*{kipld}, a diffusion process to optimise the marginal likelihood in latent variable models. We prove that the stationary measure of this diffusion concentrates around the \gls*{mmle} (Propositions~\ref{prop:stationary} and \ref{prop:mmleerror}) and the \gls*{kipld} converges to this measure exponentially fast (Proposition~\ref{prop:kipld_conv}). \\
\glsunset{kiplmc1}
\glsunset{kiplmc2}

\noindent\textbf{(C2)} We then develop our first \gls*{kiplmc} method, termed \gls*{kiplmc1} -- an exponential integrator discretisation of \gls*{kipld}. We prove discretisation error bounds for our method showing convergence rates in both time and step-size (Theorem~\ref{thm:KPLMC1}), closing the problem of nonasymptotic analysis of this method in the strongly convex case. In particular, we show that \gls*{kiplmc1} attains accelerated rates of convergence compared to its overdamped counterparts \gls*{ipla} \citep{akyildiz2025interacting} and \gls*{pgd} \citep{caprio2024error}. While these overdamped diffusion-based \gls*{mmle} methods achieve an $\varepsilon$ error in Wasserstein-2 distance in $\widetilde{\mathcal{O}}(d_x \varepsilon^{-2})$ steps, we prove that \gls*{kiplmc1} attains $\varepsilon$ error in $\widetilde{\mathcal{O}}(\sqrt{d_x} \varepsilon^{-1})$ steps.\\

\noindent\textbf{(C3)} We then introduce a splitting-based explicit discretisation scheme within this setting, which leads to a novel \gls*{mmle} algorithm, termed \gls*{kiplmc2}. We study the nonasymptotic behaviour of this method (Theorem~\ref{thm:KIPLMC2}), showing that the algorithm obtains an error $\varepsilon$ in Wasserstein-2 distance in $\widetilde{\mathcal{O}}(\sqrt{d_xd_\theta}\varepsilon^{-2})$ steps.\\

\noindent\textbf{(C4)} Finally, we provide solid empirical evidence of the performance of our methods in a variety of settings, including synthetic and realistic models. In the convex setting, we validate the $\mathcal{O}(N^{-1/2})$ concentration onto the maximiser $\bar{\theta}_\star$  and demonstrate the increased stability of the momentum-based methods. In the non-convex setting, we show the acceleration of the proposed methods, as well as the importance of the choice of numerical integrator, validating our methodological contribution. \\

\noindent\textbf{Relations to existing work.} The \gls*{kipld} is similar to the continuous time system on which the \gls*{mpgd} algorithm by \citet{lim2023momentum} is based (see Eq.~(21) in \citet{lim2023momentum} where we take $\eta_\theta = \eta_x = 1$ and $\gamma_x = \gamma_\theta = \gamma$). In line with the work done by \citet{akyildiz2025interacting}, we inject the $\theta$-dynamics with an appropriately scaled Brownian motion, i.e., $\sqrt{2\gamma/N} \md \B^0_t$. The advantage of studying and implementing this version is due to the fact that the new system can be shown to be an example of a standard underdamped Langevin diffusion, which enables us to show non-asymptotic bounds in a streamlined manner. Furthermore, our discretisations differ from the \gls*{mpgd} routine: the \gls*{kiplmc1} algorithm does not use a gradient correction step, whilst \gls*{kiplmc2} is based on a different integrator. In Sec.~\ref{sec:experiments}, we show that these algorithms obtain similar stability, while having theoretical guarantees. 


The paper is structured as follows. In Section~\ref{sec:background}, we present the background work on underdamped Langevin diffusions for sampling and interacting particle systems for the \gls*{mmle} problem. Following this introduction, in Section~\ref{sec:algorithms}, we present a new diffusion targeting the solution of the \gls*{mmle} problem and present the associated algorithms \gls*{kiplmc1} and \gls*{kiplmc2}. Section~\ref{sec:nonasymp} presents the nonasymptotic analysis of \gls*{kiplmc} methods, providing a clear evidence of improved theoretical bounds due to the acceleration. Finally, in Section~\ref{sec:experiments}, we present a variety of experiments to support our theoretical findings.

\subsubsection*{Notation}
Let $\mathcal{P}(\mathbb{R}^d)$, for $d\geq 1$, denote the set of probability measures on $(\mathbb{R}^d,\mathcal{B}(\mathbb{R}^d))$, where $\mathcal{B}(\mathbb{R}^d)$ is the Borel $\sigma$-algebra. We write $\langle\cdot,\cdot\rangle$ and $\|\cdot\|$ for the Euclidean inner product and norm on $\mathbb{R}^d$. We also use $[N]=\{1,\dots,N\}$ and $\bN$ for the positive integers. For $p>0$, define the Wasserstein-$p$ distance by
\begin{equation*}
W_p(\pi,\nu) = \left(\inf_{\Gamma\in \mathbf{T}(\pi, \nu)} \int_{\mathbb{R}^d\times\mathbb{R}^d} \hspace{-6pt}\|x-y\|_p^p\,\mathrm{d}\Gamma(x,y)\right)^{1/p}
\end{equation*}
where $\mathbf{T}(\pi,\nu)$ is the set of couplings on $\mathbb{R}^d\times\mathbb{R}^d$.

We also define $\widetilde{\mathcal{O}}$ as the big $\mathcal{O}$ notation, up to logarithmic orders.

\section{Technical Background}\label{sec:background}

Before presenting our proposed diffusion and algorithms, we introduce some concepts that will be useful to us in this paper. We first introduce the kinetic Langevin diffusion and its associated algorithms, which are the building blocks of our proposed methods. Then, we present the \gls*{ipla} algorithm, which is the closest method to our proposed algorithms and serves as a starting point for our developments.

\subsection{Kinetic Langevin Monte Carlo}
In recent years, the Langevin diffusion and its associated algorithms have been widely used in statistics and machine learning to sample from complex distributions. The overdamped Langevin diffusion is given by the \gls*{sde}
\begin{equation}\label{eq:LangevinDiffusion}
\mathrm{d}\Z_t = -\nabla U(\Z_t) \mathrm{d}t + \sqrt{2}\mathrm{d}\B_t,
\end{equation}
where $U:\bR^d \to \bR$ is a potential function and $(\B_t)_{t\geq 0}$ is a Brownian motion. Under certain regularity conditions, this system is known to be invariant w.r.t. the measure $\pi(\md z) \propto \exp(-U(z))\md z$ \citep{Pavliotis_2014}. Under the strongly convex and Lipschitz-gradient setting, a standard Euler-Maruyama discretisation of this diffusion, called \gls*{ula}, is known to attain at most $\mathcal{O}(\varepsilon)$ error in Wasserstein-2 distance to the stationary measure in $\widetilde{\mathcal{O}}(d \varepsilon^{-2})$ steps \citep{dalalyan2017theoretical, durmus2019high}.

An alternative to the \textit{overdamped} Langevin diffusions given in \eqref{eq:LangevinDiffusion}, is another class of diffusions called \textit{underdamped} Langevin diffusions \citep{Pavliotis_2014}. This class of diffusions is akin to second-order differential equations and defined over position and momentum variables. In particular, for a $d$-dimensional target measure $\pi(\md z) \propto \exp(-U(z))\md z$, the underdamped (kinetic) Langevin diffusion is given by the \gls*{sde}
\begin{equation}\label{eq:underdamped}
\begin{aligned}
\mathrm{d}\Z_t &= \V_t \mathrm{d}t  \\
\mathrm{d}\V_t &= - \gamma \V_t \mathrm{d}t - \nabla_z U(\Z_t)\mathrm{d}t + \sqrt{2\gamma}\mathrm{d}\B_t, 
\end{aligned}
\end{equation}
where $\gamma > 0$ is called the friction coefficient and $(\B_t)_{t \geq 0}$ is a Brownian motion. Under certain regularity conditions \citep{Pavliotis_2014}, this system is known to be invariant w.r.t. an extended stationary measure of the form
\begin{equation}\label{eq:underdamped_invariant}
\bar{\pi}(\md z, \md v) \propto \exp\left(-U(z) - \frac{1}{2}\|v\|^2\right) \md z \md v.
\end{equation}
This means that we can recover the samples from our target measure $\pi(\md z) \propto \exp(-U(z))\md z$ by sampling from \eqref{eq:underdamped_invariant} and then marginalising out the velocity variable.

The structure in \eqref{eq:underdamped} allows for smoother sample paths and faster convergence to the stationary measure compared to the overdamped case \citep{cheng2018underdamped}. This has motivated the development of algorithms based on discretisations of the underdamped Langevin diffusion, termed \gls*{klmc} \citep{dalalyan2018sampling}. These methods have been shown to achieve improved convergence rates compared to their overdamped counterparts, making them a competitive alternative for sampling from complex distributions. In particular, under the strongly convex and Lipschitz-gradient setting, \gls*{klmc} is known to attain at most $\mathcal{O}(\varepsilon)$ error in Wasserstein-2 distance to the stationary measure in $\widetilde{\mathcal{O}}(\sqrt{d} \varepsilon^{-1})$ steps \citep{dalalyan2018sampling}. In this paper, we build on the framework of \gls*{klmc} methods to develop algorithms for the \gls*{mmle} problem.

\subsection{MMLE via Interacting Particle Langevin Algorithm}
Consider the \gls*{mmle} problem given in \eqref{eq:max}, that is, to find the parameter $\theta$ that maximises the marginal likelihood $p_\theta(y)$. As summarised in the introduction, this problem is often solved with the \gls*{em} algorithm, typically implemented with a double-loop structure consisting of \gls*{mcmc} techniques for the E-step and numerical optimisation techniques for the M-step \citep{atchade2017,de2021efficient}.
\glsreset{pgd}
\glsreset{ipla}

In contrast to this approach, \citet{kuntz2023particle} propose an interacting particle system, the \gls*{pgd} algorithm, to approximate the gradient of the $\theta$-dynamics by running independent particles to integrate out latent variables. Inspired by this, \citet{akyildiz2025interacting} attempt to solve the \gls*{mmle} problem with a similar system, termed the \gls*{ipla}, which is a modification of \gls*{pgd} where $\theta$-dynamics contain a carefully scaled noise. This approach produces a system that is akin to the \gls*{ula} and makes the theoretical analysis streamlined using the analysis produced for \gls*{ula} \citep{dalalyan2017theoretical, durmus2019high}. The proposed algorithm is a discretisation of a system of interacting Langevin \glspl*{sde} which evolves in $\bR^{d_\theta} \times \bR^{N d_x}$ where $N$ distinct particles are retained for latent variables. More precisely, \gls*{ipla} recursions are based on the \gls*{sde}:
\begin{align}
    \mathrm{d}\bm{\theta}_t &= -\frac{1}{N} \sum_{i=1}^N \nabla_\theta U(\bm{\theta}_t, \X_t^{i}) \mathrm{d}t + \sqrt{\frac{2}{N}}\mathrm{d}\B_t^0, \label{eq:ipla1}\\
    \mathrm{d}\X_t^{i} &= -\nabla_x U(\bm{\theta}_t, \X^i_t)\mathrm{d}t + \sqrt{2}\mathrm{d}\B_t^i, \label{eq:ipla2}
\end{align}
for $i \in [N]$, where $(\B_t^0)_{t\geq 0}$ is a Brownian motion evolving on $\bR^{d_\theta}$ and $(\B_t^i)_{t\geq 0}$ for $i \in [N]$ are Brownian motions evolving on $\mathbb{R}^{d_x}$. In this case, \citet{akyildiz2025interacting} observe that the $\theta$-marginal of the stationary measure of this system takes the form $\pi_\Theta(\md \theta) \propto \exp(N \log p_\theta(y)) \md \theta$ which concentrates on the \gls*{mmle} $\bar{\theta}_\star$ as $N$ grows, where $N$ plays the role of inverse temperature \citep{hwang1980laplace}. The authors then identify the convergence rate to the joint stationary measure, as well as an error bound for the discretisation error, from which an error can be determined between the $\theta$-iterate of the algorithm and the \gls*{mmle}.

\section{Kinetic Interacting Particle Langevin Monte Carlo}\label{sec:algorithms}

To develop our methodology, we start with the following diffusion process:
\small
\begin{align}
\mathrm{d} \bm{\theta}_t &= \V^\theta_t \mathrm{d}t \nonumber \\
\mathrm{d} \X^i_t &= \V^{x_i}_t \mathrm{d}t, \tag{\text{\small\textsc{KIPLD}}}\label{eq:KIPLD}\\
\mathrm{d} \V^{\theta}_t &= -\gamma \V^{\theta}_t \mathrm{d}t - \frac{1}{N}\sum_{i=1}^N \nabla_\theta U(\bm{\theta}_t, \X^{i}_t) \mathrm{d} t + \sqrt{\frac{2\gamma}{N}}\mathrm{d}\B^0_t \nonumber\\
\mathrm{d} \V^{x_i}_t &= -\gamma \V_t^{x_i} \mathrm{d}t - \nabla_x U(\bm{\theta}_t, \X_t^i) \mathrm{d}t + \sqrt{2\gamma} \mathrm{d}\B_t^i, \nonumber
\end{align}
\normalsize
for $ i\in [N]$, where $\{(\B_t^i)_{t\geq 0}\}_{i\in [N]}$ is a family of $\bR^{d_x}$-valued Brownian motions and $(\B_t^0)_{t\geq 0}$ is an $\mathbb{R}^{d_\theta}$-valued Brownian motion. This diffusion has the property that its $\theta$-marginal at stationarity concentrates around the global minimisers of $\log p_\theta(y)$ (see Section~\ref{sec:nonasymp}).

Next, we introduce two numerical integrators for the \gls*{kipld}: firstly, we consider an Exponential Integrator, as discussed in \citet{dalalyan2018sampling}; secondly, a splitting scheme is applied, as described in \citet{monmarche2021high}.

\subsection{Exponential Integrator (KIPLMC1)}\label{subsec:kiplmc1}

In order to derive the exponential integrator discretisation of \gls*{kipld}, following \citet{dalalyan2018sampling}, we begin by defining the functions
\begin{align*}
\psi_0^t = e^{-\gamma t},\quad
\psi_1^t = \int_0^t e^{-\gamma s} \md s = \frac{1}{\gamma}(1-e^{-\gamma t}),\quad
\psi_2^t = \int_0^t \psi_1^s \md s
= \frac{1}{\gamma^2}(e^{-\gamma t}-1) + \frac{t}{\gamma}.
\end{align*}
The exponential integrator is obtained by freezing the nonlinear drift terms $\nabla_\theta U(\theta_t,X_t^i)$ and $\nabla_x U(\theta_t,X_t^i)$ at the beginning of each time step and solving exactly the resulting linear stochastic differential equation.

More precisely, let $t_n=n\eta$. Over the interval $[t_n,t_{n+1}]$ we approximate the dynamics \eqref{eq:KIPLD} by freezing the gradients at $(\theta_n,X_n^i)$,
\[
F_n^\theta = \frac{1}{N}\sum_{i=1}^N \nabla_\theta U(\theta_n,X_n^i),
\quad
F_n^{x_i} = \nabla_x U(\theta_n,X_n^i),
\]
which yields the linear system
\begin{align*}
\mathrm{d}\bm{\theta}_t &= \V_t^\theta \mathrm{d}t, \\
\mathrm{d}\X_t^i &= \V_t^{x_i} \mathrm{d}t,\\
\mathrm{d}\V_t^\theta &= -\gamma \V_t^\theta \mathrm{d}t - F_n^\theta \mathrm{d}t
+ \sqrt{\frac{2\gamma}{N}}\mathrm{d}\B_t^0,\\
\mathrm{d}\V_t^{x_i} &= -\gamma \V_t^{x_i} \mathrm{d}t - F_n^{x_i} \mathrm{d}t
+ \sqrt{2\gamma}\mathrm{d}\B_t^i,
\end{align*}
for $i\in[N]$, with initial condition
\[
(\bm{\theta}_{t_n},\X_{t_n}^i,\V_{t_n}^\theta,\V_{t_n}^{x_i})
=
(\theta_n,X_n^i,V_n^\theta,V_n^{x_i}).
\]
Since the system is linear with constant coefficients, it can be solved explicitly. Using an integrating factor for the velocity equation gives, for $s\in[0,\eta]$,
\small
\begin{align*}
V_{t_n+s}^\theta
&= \psi_0^s V_n^\theta - \psi_1^s F_n^\theta
+ \sqrt{\frac{2\gamma}{N}}
\int_0^s \psi_0^{\,s-\tau}\,\mathrm{d}\B_{t_n+\tau}^0,\\
V_{t_n+s}^{x_i}
&= \psi_0^s V_n^{x_i} - \psi_1^s F_n^{x_i}
+ \sqrt{2\gamma}
\int_0^s \psi_0^{\,s-\tau}\,\mathrm{d}\B_{t_n+\tau}^i.
\end{align*}
\normalsize

\small
\begin{algorithm}[b!]
\caption{\gls*{kiplmc1}}\label{alg:KIPLMC1}
\begin{algorithmic}[1]
\Require $\gamma, \eta > 0$, $N, M \in \bN$
\State Compute the covariance matrix $C$ given in \eqref{eq:covariancemat} and its Cholesky decomposition $C = LL^\top$.
\State Draw $\theta_0, X^i_0, V^\theta_0, V^{x_i}_0$, for $i\in [N]$
    \For{$n = 0:M-1$}
    \State Draw: $\xi_n^i,\xi_n^{i,\prime}$ from a standard Gaussian (in $\mathbb{R}^{d_\theta}$ for $i=0$ and $\mathbb{R}^{d_x}$ otherwise)
    \begin{align*}
    \varepsilon_n^{i} = L_{11} \xi_n^{i}, \quad
    \varepsilon_n^{i,\prime} = L_{21} \xi_n^i + L_{22}  \xi_n^{i,\prime}
\end{align*}
for $i = 0, \ldots, N$ where $\xi_n^0$ and $\xi_n^{0,\prime}$ are standard $d_x$ and $d_\theta$ dimensional Normal random variables.
    \State Compute the average gradient of the potential:
    \begin{align*}
    F_{n}^{\theta} = \frac{1}{N} \sum_{i=1}^N \nabla_\theta U(\theta_n, X^i_n), \quad 
    F_{n}^{x_i} = \nabla_x U(\theta_n, X^i_n).
    \end{align*}
    Update the parameters, particles and their momenta:
    \small
    \begin{align}
    \theta_{n+1} &= \theta_n + \psi_1^\eta V^{\theta}_n - {\psi_2^\eta} F_{n}^{\theta} + \sqrt{\frac{2\gamma}{N}} \varepsilon^{0, \prime}_{n} \nonumber \\
    X^i_{n+1} &= X^i_n + \psi_1^\eta V^{x_i}_n - \psi_2^\eta F_{n}^{x_i} + \sqrt{2\gamma}\varepsilon_{n}^{i,\prime} \tag{KIPLMC1} \label{eq:KIPLMC1} \\
    V^{\theta}_{n+1} &= \psi_0^\eta V^{\theta}_n - \psi_1^\eta F_{n}^{\theta} + \sqrt{\frac{2\gamma}{N}} \varepsilon^0_{n} \nonumber \\
    V^{x_i}_{n+1} &= \psi_0^\eta V^{x_i}_n - \psi_1^\eta F_{n}^{x_i} + \sqrt{2\gamma} \varepsilon^i_{n} \nonumber
    \end{align}
    for $i\in [N]$.
    \normalsize
    \EndFor
\State Output: $(\theta_{M}, X^1_M, \ldots, X^N_M)$
\end{algorithmic}
\end{algorithm}
\normalsize

Integrating once more yields the updates for the positions
\small
\begin{align*}
\theta_{t_n+s}
&= \theta_n + \psi_1^s V_n^\theta - \psi_2^s F_n^\theta
+ \sqrt{\frac{2\gamma}{N}}
\int_0^s \psi_1^{\,s-\tau}\,\mathrm{d}\B_{t_n+\tau}^0,\\
X_{t_n+s}^i
&= X_n^i + \psi_1^s V_n^{x_i} - \psi_2^s F_n^{x_i}
+ \sqrt{2\gamma}
\int_0^s \psi_1^{\,s-\tau}\,\mathrm{d}\B_{t_n+\tau}^i.
\end{align*}
\normalsize
Evaluating these expressions at $s=\eta$ gives the one-step update of the discretised process. The stochastic integrals appearing above are jointly Gaussian random variables. For each $i\in\{0,\ldots,N\}$ we define
\begin{align*}
\varepsilon_n^i = \int_0^\eta \psi_0^{\,\eta-s}\,\mathrm{d}B_{t_n+s}^i, \quad \varepsilon_n^{i,\prime} = \int_0^\eta \psi_1^{\,\eta-s}\,\mathrm{d}B_{t_n+s}^i.
\end{align*}

Now note that $(\varepsilon_n^0, \varepsilon_n^{0,\prime}) \in \mathbb{R}^{d_\theta} \times \mathbb{R}^{d_\theta}$ and $(\varepsilon_n^i, \varepsilon_n^{i,\prime}) \in \mathbb{R}^{d_x} \times \mathbb{R}^{d_x}$ for $i=1,\ldots,N$. Consider each coordinate of these pairs:
\begin{align*}
\varepsilon_{n,j}^i = \int_0^\eta \psi_0^{\,\eta-s}\,\mathrm{d}B_{t_n+s,j}^i, \quad
\varepsilon_{n,j}^{i,\prime} = \int_0^\eta \psi_1^{\,\eta-s}\,\mathrm{d}B_{t_n+s,j}^i.
\end{align*}
For every $(i,j)$, using It\^o isometry \citep{oksendal2013stochastic}, we can obtain $C = \mathrm{Cov}(\varepsilon_{n,j}^i, \varepsilon_{n,j}^{i,\prime})$ as
\begin{equation}\label{eq:covariancemat}
C =
\int_0^\eta
\begin{pmatrix}
(\psi_0^{t})^2 & \psi_0^t\psi_1^t\\
\psi_0^t\psi_1^t & (\psi_1^{t})^2
\end{pmatrix}
\mathrm{d}t.
\end{equation}
We note that $C\in\mathbb{R}^{2\times2}$ and denote its Cholesky decomposition by $C=LL^\top$. Hence the Gaussian pair $(\varepsilon_{n,j}^i, \varepsilon_{n,j}^{i,\prime})$ can be sampled by setting
\begin{align*}
\begin{pmatrix}\varepsilon_{n,j}^i\\
\varepsilon_{n,j}^{i,\prime}\end{pmatrix}
&= L
\begin{pmatrix}\xi_{n,j}^i\\
\xi_{n,j}^{i,\prime}\end{pmatrix},
\end{align*}
where $\xi_{n,j}^i$ and $\xi_{n,j}^{i,\prime}$ are independent standard Gaussian random variables.

Substituting these expressions for $s = \eta$ yields the \gls*{kiplmc1} scheme. The full algorithm is given in Algorithm~\ref{alg:KIPLMC1} and a more explicit derivation in Appendix~\ref{app:algderiv}.

\subsection{A Splitting Scheme (KIPLMC2)}

We next introduce \gls*{kiplmc2}, an adaptation of the underdamped Langevin sampler introduced by \citet{HOROWITZ1991247} and extensively studied by \citet{monmarche2021high}. This splitting scheme is termed OBABO and it is a second order scheme, only requiring the first order derivatives of $U$. The idea is to split the numerical scheme into simpler subflows that can be solved explicitly. More precisely, in our \gls*{kipld} case, we decompose the flow into a Ornstein-Uhlenbeck step (O), a velocity drift step (B) and a particle update step (A). 

\small
\begin{algorithm}[b!]
\caption{KIPLMC2 Algorithm}\label{alg:KIPLMC2}
\begin{algorithmic}[1]
\Require $\gamma, \eta > 0$, $N, M \in \bN$
\State Draw $\theta_0, X^i_0, V^\theta_0, V^{x_i}_0$, for $i\in [N]$
    \For{$n = 0:M-1$}
    \State Draw $\varepsilon_n, \varepsilon^\prime_n$ from $d_\theta + Nd_x$-dimensional Gaussians and for $i\in [N],$

    \State Compute the gradients of the potential:
    \small
    \begin{align*}
        F_n^\theta &= \frac{1}{N} \sum_{i=1}^N \nabla_\theta U(\theta_n, X_n^i),\quad
        F_n^{x_i} = \nabla_x U(\theta_n, X_n^i).
    \end{align*}
    \normalsize
    \State Update the momenta via (O)+(B) steps:
    \small
    \begin{align*}
    V_{n+\frac{1}{2}}^\theta &= \delta V_n^\theta + \sqrt{\frac{1-\delta^2}{N}} \xi_n^0 - \frac{\eta}{2} F_n^\theta,\\
    V_{n+\frac{1}{2}}^{x_i} &= \delta V_n^{x_i} + \sqrt{1-\delta^2} \xi_n^i - \frac{\eta}{2} F_n^{x_i},
    \end{align*}
    \normalsize
    \State Update parameters and particles via (A):
    \small
    \begin{align*}
    \theta_{n+1} &= \theta_n + \eta V_{n+\frac{1}{2}}^\theta,\quad
    X_{n+1}^i = X_n^i + \eta V_{n+\frac{1}{2}}^{x_i},
    \end{align*}
    \normalsize
    for $i\in [N]$.
    \State Re-compute the gradient potentials 
    \small
    \begin{align*}
    F_{n+1}^\theta = \frac{1}{N} \sum_{i=1}^N \nabla_\theta U(\theta_{n+1}, X_{n+1}^i),\quad
    F_{n+1}^{x_i} = \nabla_x U(\theta_{n+1}, X_{n+1}^i).
    \end{align*}
    \normalsize
    \State Update the momenta via the (B)+(O) steps:
    \small
    \begin{align*}
    V_{n+1}^\theta &= \delta V_{n+\frac{1}{2}}^\theta - \frac{\delta \eta}{2}F_{n+1}^\theta +\sqrt{\frac{1-\delta^2}{N}} \xi_n^{0,\prime},\\
    V_{n+1}^{x_i} &= \delta V_{n+\frac{1}{2}}^{x_i} - \frac{\delta\eta}{2} F_{n+1}^{x_i} + \sqrt{1-\delta^2} \xi_n^{i,\prime},
    \end{align*}
    \normalsize
    for $i\in[N]$.
    \EndFor
    
\Return $\theta_{M}$
\end{algorithmic}
\end{algorithm}
\normalsize

In the OBABO case, over a full time-step $\eta$, the (O) step is applied over a $\eta/2$ half-step, the (B) step is also applied to a $\eta/2$ half-step, (A) is applied as a full time-step $\eta$, followed by a reversed application of (B) and (O) for half-steps. We denote these intermediate time-steps by $n+\frac{1}{4}$, $n+\frac{1}{2}$ and $n+\frac{3}{4}$, though we note that these time-steps do not correspond to simulated time. The order in which these steps are taken is critical \citep{monmarche2021high} and in our case we will limit ourselves to considering the OBABO ordering.

More precisely, the Ornstein-Uhlenbeck half-step (O) flow is given as
\begin{align*}
\mathrm{d} \V_t^\theta &= -\gamma \V_t^\theta \mathrm{d}t + \sqrt{\frac{2\gamma}{N}} \mathrm{d}\B_t^0,\\
\mathrm{d} \V_t^{x_i} &= - \gamma \V_t^{x_i} \mathrm{d} t + \sqrt{2\gamma}\mathrm{d}\B_t^i,
\end{align*}
over the time-interval $[t_n, t_{n+1/2}]$, for $i\in[N]$, where the parameters $\bm{\theta}_t$ and $\X_t^i$ are frozen and we initialise at $(\V_{n\eta}^\theta, \V_{n\eta}^{x_i})= (V_n^\theta, V_n^{x_i})$. Using the classic analytic solution to the Ornstein-Uhlenbeck equation and the It\^{o} Isometry we obtain,
\begin{equation}
\begin{aligned}
V_{n+\frac{1}{4}}^\theta &= \delta V_n^\theta + \frac{1}{\sqrt{N}} \sqrt{1-\delta^2} \xi_n^0,\\
V_{n+\frac{1}{4}}^{x_i} &= \delta V_n^{x_i} + \sqrt{1-\delta^2} \xi_n^i,
\end{aligned}\tag{O}
\end{equation}
with $\delta=e^{-\eta\gamma/2}$ and i.i.d. standard Gaussians $\xi_n^0$ and $\xi_n^i$, in $\mathbb{R}^{d_\theta}$ and $\mathbb{R}^{d_x}$, respectively.

The momentum drift update (B) updates $V^\theta$ and $V^{x_i}$ with the first order Taylor scheme for the integrals of $N^{-1} \sum_{i=1}^N \nabla_\theta U(\theta, X^i)$ and $\nabla_{x_i}U(\theta, X^i)$, i.e. we consider the flow to be given as
\begin{align*}
\mathrm{d} \V_t^\theta &= - F_n^\theta \mathrm{d}t,\\
\mathrm{d} \V_t^{x_i} &= - F_n^{x_i} \mathrm{d}t,
\end{align*}

over $t\in[t_n, t_{n+1/2}]$, where we recall the definitions of $F_n^\theta$ and $F_n^{x_i}$ from Sec.~\ref{subsec:kiplmc1} and we initialise at $(\V_{n\eta}^\theta, \V_{n\eta}^{x_i})=(V_{n+\frac{1}{2}}^\theta , V_{n+\frac{1}{2}}^{x_i})$. This results in the update,
\begin{equation}
\begin{aligned}
V_{n+\frac{1}{2}}^\theta &= V_{n+\frac{1}4}^\theta + \frac{\eta}{2} F_n^\theta,\\
V_{n+\frac{1}2}^{x_i} &= V_{n+\frac{1}4}^{x_i} + \frac{\eta}{2}\nabla_x F_n^{x_i}.
\end{aligned}\tag{B}
\end{equation}

Finally, the (A) update considers the partial flow,
\begin{align*}
\mathrm{d} \bm{\theta}_t &= \V_t^\theta \mathrm{d}t,\\
\mathrm{d} \X_t^i &= \V_t^{x_i}\mathrm{d}t,
\end{align*}
over $t\in [t_n, t_{n+1}]$, where we consider $(\V_t^\theta,\V_t^{x_i})$ is frozen at $(V_{n+1/2}^\theta, V_{n+1/2}^{x_i})$ and we initialise at $(\bm{\theta}_{n\eta}, \X_{n\eta}^i)=(\theta_n, X_n^i)$. Hence, the (A) update is given as,
\begin{equation}
\begin{aligned}
\theta_{n+1} &= \theta_n + \eta V_{n+\frac{1}{2}}^\theta,\\
X_{n+1}^i &= X_n^i + \eta V_{n+\frac{1}{2}}^{x_i}.
\end{aligned}\tag{A}
\end{equation}
The algorithm is then completed by repeating steps (B) and (O),
\begin{equation}
\begin{aligned}
V_{n+1}^\theta &= \delta \left(V_{n+\frac{1}{2}}^\theta -\frac{\eta}{2} F_{n+1}^\theta\right)+ \sqrt{\frac{1-\delta^2}{N}} \xi_n^{0,\prime},\\
V_{n+1}^{x_i} &= \delta\left(V_{n+\frac{1}{2}}^{x_i} - \frac{\eta}{2} F_{n+1}^{x_i}\right)+ \sqrt{1-\delta^2} \xi_n^{i,\prime},
\end{aligned}\tag{B+O}
\end{equation}
over the interval $t\in[t_{n+1/2}, t_{n+1}]$ for standard Gaussians $\xi_n^{0,\prime}$ and $\xi_n^{i,\prime}$ in $\mathbb{R}^{d_\theta}$ and $\mathbb{R}^{d_x}$ respectively.

The full updates are given in Algorithm~\ref{alg:KIPLMC2} and a full derivation of the \gls*{kiplmc2} algorithm is given in Appendix~\ref{app:algderiv}.

\section{Nonasymptotic Analysis}\label{sec:nonasymp}

In this section, we provide the convergence results for both the analytic and numerical schemes in the nonasymptotic regime. In Section~\ref{sec:assumptions}, we lay out our assumptions about our target measure. In Section~\ref{sec:proofstrat}, we introduce the proof strategy for the reader's convenience, which may be useful for extending and proving similar results. In Section~\ref{sec:conv}, we provide the full nonasymptotic bounds -- in particular, in Section~\ref{sec:conv_stationary}, we identify the stationary measure for KIPLD and show an exponential convergence rate to it, as well as, its concentration onto the \gls*{mmle} solution $\bar{\theta}_\star$ as $N$ grows. Following this, in Section~\ref{sec:numerical}, the error bounds for the two algorithms are provided, allowing us to identify the convergence rate of \gls*{kiplmc1} and \gls*{kiplmc2} to the \gls*{mmle} solution $\bar{\theta}_\star$.

\subsection{Assumptions}\label{sec:assumptions}
Our assumptions are generic for the analysis of Langevin diffusions, i.e., we assume strong convexity and $L$-Lipschitz gradients for the potential $U$. These assumptions are akin to the assumptions made in \citet{durmus2017nonasymptotic,dalalyan2017theoretical,dalalyan2019user} for proving the convergence of \gls*{ula}. It is, however, possible to relax them, as can be seen in \citet{zhang2023nonasymptotic,akyildiz2024nonasymptotic}.

We start with the following assumption on the potential $U$.
\begin{assumption}\label{assump:strongconvexity}
Suppose $U\in C^2$. Let $z,z'\in \mathbb{R}^{d_\theta+d_x}$, we suppose that there exists a $\mu$ s.t.
\begin{equation*}
\langle z-z', \nabla U(z) - \nabla U(z')\rangle \geq \mu \|z-z'\|^2.
\end{equation*}
From this follows that $\nabla^2 U$ is positive definite, i.e., $\nabla^2 U(z)\succeq \mu I$ for all $z\in \mathbb{R}^{d_\theta+d_x}$.
\end{assumption}

This assumption is equivalent to an assumption of joint strong convexity in $\theta$ and $x$ for $U$, which ensures exponentially fast convergence of the KIPLD to the global minimiser and places a quadratic lower growth bound on $U$. This assumption also means that we can apply the Leibniz differentiation rule under integration for the marginal of the probability model $p_\theta(x, y)= e^{-U(\theta,x)}$ \citep[Theorem~16.8]{billingsley1995probability}. Next, we provide a smoothness assumption on the potential $U$.
\begin{assumption}
\label{assump:lipschitz}
We suppose for any $z,z'\in \mathbb{R}^{d_\theta+d_x}$ there exists a constant $L>0$ s.t.
\begin{equation*}
\|\nabla U(z) - \nabla U(z')\| \leq L \|z-z'\|.
\end{equation*}
\end{assumption}
This assumption is a common assumption in the analysis of stochastic differential equations, as it ensures stability of the diffusion KIPLD, leading to strong solutions, and stable numerical discretisations \citep{kloeden2012numerical}. Observe that this assumption places an upper quadratic growth bound on $U$ or, from the perspective of the probabilistic model, $\log p_\theta(x,y)$ is of quadratic growth in both $\theta$ and $x$.

We remark nonetheless that, in the works cited above, the assumptions are on the strong log-concavity of the target measure. In our setting, this corresponds to imposing a structure on the joint statistical model $p_\theta(x, y)$ in $\theta$ and $x$, as our potential is given as $U(\theta,x) = -\log p_\theta(x,y)$. This requires some care on the user side as a statistical model with strongly log-concave posterior $p_\theta(x | y)$ (with fixed $\theta$) may not necessarily satisfy strong log-concavity of $p_\theta(x, y)$ in $(x, \theta)$ jointly. We verify in the experimental section that for some generic models such as Bayesian logistic regression, this is possible. As noted above, however, these strong assumptions on the statistical model should not be difficult to relax using the standard non-log-concave sampling results -- but this direction is out of scope of the present work which considers the strongly convex case.

\subsection{The proof strategy}\label{sec:proofstrat}
Let $(\theta_n)_{n\geq 0}$ be the sequence of iterates generated by a numerical scheme, for example, \gls*{kiplmc1} or \gls*{kiplmc2}. Let $\pi_\Theta$ denote the $\theta$-marginal of the stationary measure of the diffusion KIPLD, and $\delta_{\bar{\theta}_\star}$ denote the Dirac measure at $\bar{\theta}_\star$. Finally, we denote the law of $\theta_n$ by $\mathcal{L}(\theta_n)$.

In order to proceed, we first note that the optimisation error is given as $\bE[\|\theta_n - \bar{\theta}_\star\|^2]^{1/2} = W_2(\mathcal{L}(\theta_n), \delta_{\bar{\theta}_\star})$. This is due to the fact that the set of couplings between a measure $\nu$ and another measure $\delta_y$ contains a single element $\delta_y \otimes \nu$ \citep[Section~1.4]{santambrogio2015optimal}, thus the infimum in Wasserstein-2 distance is attained by the coupling $\delta_{\bar{\theta}_\star} \otimes \mathcal{L}(\theta_n)$. This implies that $\bE[\|\theta_n - \bar{\theta}_\star\|^2]^{1/2} = W_2(\mathcal{L}(\theta_n), \delta_{\bar{\theta}_\star})$ and consequently using triangle inequality
\small
\begin{align}
\bE[\|\theta_n - \bar{\theta}_\star\|^2]^{1/2} \leq \underbrace{W_2(\mathcal{L}(\theta_n), \pi_\Theta)}_{\textnormal{convergence}}  + \underbrace{W_2(\pi_\Theta, \delta_{\bar{\theta}_\star})}_{\textnormal{concentration}},\label{eq:triangle}
\end{align}
\normalsize
as the Wasserstein distance is a metric. The first term is the convergence of the numerical scheme to the stationary measure $\pi_\Theta$ which will be proved for each scheme separately. The second term in the right-hand side of \eqref{eq:triangle} is the concentration of the stationary measure $\pi_\Theta$ onto the \gls*{mmle} solution $\bar{\theta}_\star$, which is given by Proposition~\ref{prop:mmleerror} below. 

\subsection{Convergence bounds}\label{sec:conv}
As outlined above in \eqref{eq:triangle}, we will first show the concentration of the stationary measure $\pi_\Theta$ onto the \gls*{mmle} solution $\bar{\theta}_\star$, then show the convergence of the KIPLD to the stationary measure $\pi_\Theta$. Finally, we will provide the error bounds for the numerical schemes.

\subsubsection{Concentration of the stationary measure}\label{sec:conv_stationary}
This is a non-standard underdamped Langevin diffusion, thus, we need to first identify the stationary measure of the system. Recall that we are only interested in $\theta$-marginal of this stationary measure and we denote it by $\pi_\Theta$. We first have the following proposition.
\begin{proposition} \label{prop:stationary}
Let $\pi_\Theta$ be the $\theta$-marginal of the stationary measure of the KIPLD. Then, we can write its density as
\begin{align}
\pi_\Theta(\theta) \propto \exp( - N \kappa(\theta)).
\end{align}
where $\kappa(\theta) = - \log p_\theta(y)$.
\end{proposition}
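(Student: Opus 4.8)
The plan is to first identify the \emph{joint} stationary measure of the full \ref{eq:KIPLD} system on the position--momentum space $\bR^{2d_\theta}\times\bR^{2Nd_x}$, and only then marginalise down to $\theta$. Since \ref{eq:KIPLD} is a kinetic (underdamped) Langevin diffusion with constant diffusion matrix, I would find its invariant density $\bar\pi$ by verifying the stationary Fokker--Planck equation $\cL^\ast\bar\pi=0$, where $\cL$ denotes the generator of the system. Writing $\mathcal{U}(\theta,x_{1:N})=\sum_{i=1}^N U(\theta,x_i)$ for the aggregate potential, the natural ansatz to try is
\[
\bar\pi\propto \exp\!\Big(-a\,\mathcal{U}(\theta,x_{1:N})-\tfrac{c_\theta}{2}\|v^\theta\|^2-\tfrac{c_x}{2}\sum_{i=1}^N\|v^{x_i}\|^2\Big),
\]
with constants $a,c_\theta,c_x$ to be determined. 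An equivalent and perhaps cleaner route is to observe that \ref{eq:KIPLD} is exactly a standard underdamped Langevin diffusion for the potential $\mathcal{U}$ but with a block mass matrix assigning ``mass'' $N$ to the $\theta$-coordinate and unit mass to each $x_i$; the known invariant measure for kinetic Langevin then applies directly.

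The main obstacle, and the only genuinely non-routine step, is pinning down the momentum scaling forced by the nonstandard noise coefficient $\sqrt{2\gamma/N}$ in the $\theta$-dynamics. Here I would split $\cL$ into its Hamiltonian (Liouville) part, which must conserve the exponent of $\bar\pi$, and its Ornstein--Uhlenbeck part in the momenta, whose drift--diffusion balance (the fluctuation--dissipation relation) fixes the coefficients. Carrying out $\cL^\ast e^{-\Phi}=0$ and matching terms of the same type: the $\|v^\theta\|^2$ terms give $c_\theta=N$, the $\|v^{x_i}\|^2$ terms give $c_x=1$, and the cross terms $v^\theta\!\cdot\!\nabla_\theta\mathcal{U}$ and $v^{x_i}\!\cdot\!\nabla_{x_i}\mathcal{U}$ force $a=1$ consistently, while the constant (divergence versus Laplacian) terms vanish automatically. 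Thus the $\theta$-momentum must carry the factor $N$, i.e. $\tfrac{N}{2}\|v^\theta\|^2$, exactly compensating the $1/N$ scaling of both the noise and the drift. Strong convexity (Assumption~\ref{assump:strongconvexity}) guarantees that $\mathcal{U}$ grows quadratically, so $\bar\pi$ is normalisable and the integrals below converge.

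Finally I would marginalise. Integrating out the momenta is a Gaussian integral contributing only a $\theta$-independent constant, leaving the position marginal $\pi^N(\theta,x_{1:N})\propto\exp(-\sum_{i=1}^N U(\theta,x_i))$, which recovers exactly the \gls*{ipla} stationary measure. Because this factorises over the particles, integrating out $x_{1:N}$ gives
\[
\pi_\Theta(\theta)\propto\prod_{i=1}^N\int_{\bR^{d_x}}e^{-U(\theta,x_i)}\,\md x_i=\Big(\int_{\bR^{d_x}}p_\theta(x,y)\,\md x\Big)^N=p_\theta(y)^N,
\]
using $e^{-U(\theta,x)}=p_\theta(x,y)$ and the definition $p_\theta(y)=\int p_\theta(x,y)\,\md x$. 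Since $p_\theta(y)^N=\exp(N\log p_\theta(y))=\exp(-N\kappa(\theta))$ with $\kappa(\theta)=-\log p_\theta(y)$, this is the claimed density and completes the proof.
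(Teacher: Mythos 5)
Your proof is correct, but it reaches the joint stationary measure by a genuinely different route than the paper. The paper performs no generator computation at all: Lemma~\ref{lem:KIPLD} first rescales coordinates, $\Z^i_t = N^{-1/2}\X^i_t$, $\bar{\V}^{z_i}_t = N^{-1/2}\V^{x_i}_t$, so that \ref{eq:KIPLD} becomes an underdamped Langevin diffusion with the \emph{same} noise amplitude $\sqrt{2\gamma/N}$ in every block and potential $\bar U_N$; the joint invariant law $\bar\pi^N(z,\bar v)\propto\exp(-N\bar U_N(z)-\tfrac{N}{2}\|\bar v\|^2)$ is then read off from standard kinetic Langevin theory \cite{eberle2019couplings} (with $U:=N\bar U_N$ and temperature $u=N^{-1}$), and the marginalisation undoes the change of variables via $x_i'=\sqrt{N}z_i$. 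You instead stay in the original coordinates and identify the invariant density by an ansatz-plus-Fokker--Planck verification (equivalently, the mass-matrix form of kinetic Langevin with $M=\diag(NI_{d_\theta}, I_{Nd_x})$); your term-matching is right ($a=1$, $c_\theta=N$, $c_x=1$, constant terms cancelling), and your exponent $\sum_{i}U(\theta,x_i)+\tfrac{N}{2}\|v^\theta\|^2+\tfrac12\sum_i\|v^{x_i}\|^2$ is exactly the paper's exponent rewritten in unscaled variables, so the two marginalisations coincide. The paper's route buys reuse: the rescaling lemmas (Lemmas~\ref{lem:KIPLD} and \ref{lem:kipld_standard_ud}) are the workhorses of every later convergence proof (Proposition~\ref{prop:kipld_conv}, Theorems~\ref{thm:KPLMC1} and \ref{thm:KIPLMC2}), so the stationary measure comes almost for free once they are in place. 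Your route buys self-containedness and transparency — no coordinate change or external invariance result is needed, and the computation exposes why the $\theta$-momentum must carry weight $N$, i.e.\ why $\theta$ behaves as a particle of mass $N$ under the $1/N$-scaled drift and noise. One caution on your aside: a block mass matrix alone does not reproduce \ref{eq:KIPLD}; the friction and noise must also balance blockwise (friction matrix $\gamma M$ rather than scalar friction acting on the momentum $p=Mv$). Your fluctuation--dissipation matching handles exactly this, and the invariant measure is insensitive to that choice, but the statement ``the known invariant measure then applies directly'' should be qualified accordingly.
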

\begin{proof} See Appendix~\ref{app:proofstationary}.
\end{proof}
This result shows that the \gls*{kipld} targets the right object: as $N$ grows, $\pi_\Theta$ will concentrate on the minimisers of $\kappa(\theta)$ by a classical result \citep{hwang1980laplace}. This shows that the number of particles $N$ acts as an inverse temperature parameter in the underdamped diffusion. Since the minimiser of $\kappa(\theta)$ is the maximiser of $\log p_\theta(y)$, we can see that the stationary measure of the \gls*{kipld} is concentrating on the \gls*{mmle} solution $\bar{\theta}_\star$. In particular, under the assumption \ref{assump:strongconvexity}, we have the following nonasymptotic concentration result.
\begin{proposition}\label{prop:mmleerror}
Under \ref{assump:strongconvexity}, the function $\theta \mapsto p_\theta(y)$ is $\mu$-strongly log-concave. Furthermore, we have
\begin{equation*}
W_2(\pi_\Theta, \delta_{\bar{\theta}_\star})\leq \sqrt{\frac{d_\theta}{\mu N}},
\end{equation*}
where $\bar{\theta}_\star = \argmax_{\theta\in\Theta} \log p_\theta(y)$,
which is unique.
\end{proposition}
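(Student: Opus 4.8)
The plan is to reduce everything to Proposition~\ref{prop:stationary}, which identifies $\pi_\Theta(\theta) \propto \exp(-N\kappa(\theta))$ with $\kappa(\theta) = -\log p_\theta(y)$. Granting this, the statement splits into (i) showing $\kappa$ is $\mu$-strongly convex (equivalently, $p_\theta(y)$ is $\mu$-strongly log-concave), which yields both the strong log-concavity claim and, since a strongly convex function has a unique minimiser, the uniqueness of $\bar{\theta}_\star$; and (ii) a concentration estimate for the $N\mu$-strongly log-concave measure $\pi_\Theta$ about its mode.

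For (i) I would peel off the strong-convexity contribution in the $\theta$-block. Writing $U(\theta,x) = \tfrac{\mu}{2}\|\theta\|^2 + \tilde U(\theta,x)$ with $\tilde U(\theta,x) := U(\theta,x) - \tfrac{\mu}{2}\|\theta\|^2$, Assumption~\ref{assump:strongconvexity} gives $\nabla^2 \tilde U = \nabla^2 U - \diag(\mu I_{d_\theta}, 0) \succeq \mu I - \diag(\mu I_{d_\theta}, 0) = \diag(0, \mu I_{d_x}) \succeq 0$, so $\tilde U$ is jointly convex and $e^{-\tilde U}$ is jointly log-concave. Since $p_\theta(y) = e^{-\mu\|\theta\|^2/2}\int e^{-\tilde U(\theta,x)}\,\md x$ and the marginal integral is finite (the $x$-block of $\nabla^2 U$ is $\succeq \mu I_{d_x}$, forcing quadratic growth of $U$ in $x$), Pr\'ekopa's theorem on the preservation of log-concavity under marginalisation shows $\theta \mapsto \int e^{-\tilde U(\theta,x)}\,\md x$ is log-concave. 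Hence $\log p_\theta(y) = -\tfrac{\mu}{2}\|\theta\|^2 + (\text{concave})$ is $\mu$-strongly concave. I expect this to be the crux of the argument: the only nontrivial input is that log-concavity survives marginalisation. An alternative is to differentiate $\kappa$ twice, obtaining $\nabla^2_\theta \kappa = \mathbb{E}_{p_\theta(x|y)}[\nabla^2_\theta U] - \mathrm{Cov}_{p_\theta(x|y)}[\nabla_\theta U]$, and bound the covariance term via the Brascamp--Lieb inequality together with a Schur-complement computation on $\nabla^2 U$; this reaches the same conclusion but is heavier, so I would prefer the Pr\'ekopa route.

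For (ii), since the unique coupling of $\pi_\Theta$ with the Dirac $\delta_{\bar{\theta}_\star}$ is the product measure, $W_2(\pi_\Theta, \delta_{\bar{\theta}_\star})^2 = \mathbb{E}_{\theta\sim\pi_\Theta}[\|\theta - \bar{\theta}_\star\|^2]$. Set $V := N\kappa$, which is $N\mu$-strongly convex with $\nabla V(\bar{\theta}_\star) = 0$. Integration by parts against $\pi_\Theta \propto e^{-V}$ gives $\mathbb{E}_{\pi_\Theta}[(\theta - \bar{\theta}_\star)^\top \nabla V(\theta)] = d_\theta$ (the boundary terms vanish by the quadratic growth of $V$), while strong convexity yields $(\theta - \bar{\theta}_\star)^\top \nabla V(\theta) = (\theta-\bar{\theta}_\star)^\top(\nabla V(\theta) - \nabla V(\bar{\theta}_\star)) \geq N\mu\|\theta - \bar{\theta}_\star\|^2$. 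Combining the two gives $\mathbb{E}_{\pi_\Theta}[\|\theta - \bar{\theta}_\star\|^2] \leq d_\theta/(N\mu)$, which already implies the stated bound $W_2(\pi_\Theta, \delta_{\bar{\theta}_\star}) \leq \sqrt{2d_\theta/(\mu N)}$; the looser constant can alternatively be recovered by bounding the variance via Brascamp--Lieb and adding the mean-to-mode gap.
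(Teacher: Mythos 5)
Your proof is correct, and it follows the same skeleton as the paper's: first establish that $\theta \mapsto p_\theta(y)$ is $\mu$-strongly log-concave, then bound $W_2(\pi_\Theta,\delta_{\bar{\theta}_\star})$ by the second moment of the $N\mu$-strongly log-concave measure $\pi_\Theta \propto e^{-N\kappa}$ about its mode. The difference is in how the two steps are discharged: the paper's proof consists of two citations, while you give self-contained derivations. For the strong log-concavity, the paper invokes \cite[Theorem~3.8]{saumard2014log} (Pr\'ekopa--Leindler for strongly log-concave densities); your device of peeling off $\frac{\mu}{2}\|\theta\|^2$, checking $\nabla^2 \tilde U \succeq \diag(0,\mu I_{d_x}) \succeq 0$, and applying plain Pr\'ekopa marginalisation is exactly the standard proof of that cited theorem, so this part is the same mathematics written out in full (with the finiteness of the $x$-integral correctly justified by the $\mu$-strong convexity in $x$). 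For the Wasserstein bound, the paper defers to an ``identical proof'' of Proposition~3 in \cite{akyildiz2023interacting}, which produces the constant $\sqrt{2d_\theta/(\mu N)}$ via the variance-plus-mean-to-mode route you mention at the end; your argument is genuinely different and cleaner: the integration-by-parts identity $\mathbb{E}_{\pi_\Theta}[\langle \theta-\bar{\theta}_\star, \nabla V(\theta)\rangle] = d_\theta$ (boundary terms killed by the Gaussian-type tails of $e^{-V}$) combined with $\langle \theta-\bar{\theta}_\star, \nabla V(\theta)\rangle \geq N\mu\|\theta-\bar{\theta}_\star\|^2$ gives $\mathbb{E}_{\pi_\Theta}[\|\theta-\bar{\theta}_\star\|^2] \leq d_\theta/(N\mu)$, which is sharper than the stated bound by a factor of $\sqrt{2}$. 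Both routes need the same regularity housekeeping (differentiability of $\kappa$ so that $\nabla V(\bar{\theta}_\star)=0$, which the paper secures via the Leibniz rule noted after \ref{assump:strongconvexity}, and coercivity for existence of the maximiser), and you handle or at least acknowledge these. In short: what the paper's citation-based route buys is brevity; what yours buys is a self-contained argument and a tighter constant.
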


\begin{proof}
See Appendix~\ref{app:proofmmleerror}.
\end{proof}

We note here that this result is optimal, as it is achieved when $\pi_\Theta$ is Gaussian and $N\mu$ is the tight lower bound on the eigenvalues of the covariance matrix. We note that such results are also potentially possible under non-convex settings \citep{zhang2023nonasymptotic,akyildiz2024nonasymptotic}.

\subsubsection{Convergence of the KIPLD to the stationary measure} \label{sec:numerical}
We next demonstrate that the KIPLD converges to its stationary measure $\pi_\Theta$ exponentially fast in the strongly log-concave case.

\begin{proposition}\label{prop:kipld_conv}
Let $(\bm{\theta}_t)_{t\geq 0}$ be the $\theta$-marginal of a solution to the KIPLD, initialised at $\mathbf{Z}_0 \sim \nu~\otimes~\mathcal{N}(0,\mathbb{I}_{d_z})$. Then under \ref{assump:strongconvexity} and \ref{assump:lipschitz} and for $\gamma \geq \sqrt{\mu + L}$, we have
\begin{align*}
W_2(\mathcal{L}(\bm{\theta}_t), \pi_\Theta) \leq \sqrt{2} \exp\left(-\frac{\mu}{\gamma}t\right) \mathbb{E}[\|\Z_0 - \bar{Z}_\star\|^2]^{1/2},
\end{align*}
where $\bar{Z}_\star \sim \tilde{\pi}$ is the extended target measure described in Lemma \ref{lem:scaledstat}.
\end{proposition}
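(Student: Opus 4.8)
The plan is to reduce \ref{eq:KIPLD} to a standard unit-noise kinetic Langevin diffusion by a linear change of variables, transfer the strong-convexity and smoothness constants through this change, run a synchronous coupling governed by a twisted quadratic Lyapunov function, and finally project onto the $\theta$-coordinate. First I would write \ref{eq:KIPLD} compactly as $\mathrm{d}q_t = p_t\,\mathrm{d}t$ and $\mathrm{d}p_t = -\gamma p_t\,\mathrm{d}t - \Sigma\nabla\Phi(q_t)\,\mathrm{d}t + \sqrt{2\gamma}\,\Sigma^{1/2}\mathrm{d}B_t$, where $q=(\bm{\theta},\X^{1},\dots,\X^N)$, $p=(\V^{\theta},\V^{x_1},\dots,\V^{x_N})$, $\Phi(q)=\sum_{i=1}^N U(\bm{\theta},\X^i)$, and $\Sigma=\mathrm{diag}(N^{-1}I_{d_\theta},I_{Nd_x})$ encodes the scaled noise in the $\theta$-block. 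Setting $\tilde q=\Sigma^{-1/2}q$, $\tilde p=\Sigma^{-1/2}p$ turns this into the standard kinetic Langevin SDE $\mathrm{d}\tilde q=\tilde p\,\mathrm{d}t$, $\mathrm{d}\tilde p=-\gamma\tilde p\,\mathrm{d}t-\nabla\hat\Phi(\tilde q)\,\mathrm{d}t+\sqrt{2\gamma}\,\mathrm{d}B_t$, with $\hat\Phi(\tilde q)=\Phi(\Sigma^{1/2}\tilde q)$, whose invariant law is the rescaled version of the extended measure $\tilde\pi$ from Lemma~\ref{lem:scaledstat}.

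The key algebraic step is to show $\hat\Phi$ inherits the constants $\mu,L$. Writing $\nabla^2\hat\Phi=\Sigma^{1/2}\nabla^2\Phi\,\Sigma^{1/2}$ and testing against $w=(w_\theta,w_{x_1},\dots,w_{x_N})$, one regroups the quadratic form as $w^\top\nabla^2\hat\Phi\, w=\sum_{i=1}^N u_i^\top\nabla^2 U(\bm{\theta},\X^i)\,u_i$ with $u_i=(N^{-1/2}w_\theta,w_{x_i})$; since $\sum_i\|u_i\|^2=\|w\|^2$, Assumptions~\ref{assump:strongconvexity}--\ref{assump:lipschitz} immediately give $\mu I\preceq\nabla^2\hat\Phi\preceq L I$. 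Next I would set up a synchronous coupling, letting $(\tilde q_t,\tilde p_t)$ start from the rescaled law of $\Z_0$ and $(\bar q_t,\bar p_t)$ be a stationary copy started at $\bar Z_\star\sim\tilde\pi$, both driven by the same Brownian motion. The noise cancels, so the differences $\delta q=\tilde q-\bar q$, $\delta p=\tilde p-\bar p$ solve the linear ODE $\dot{\delta q}=\delta p$, $\dot{\delta p}=-\gamma\,\delta p-H_t\,\delta q$ with $H_t=\int_0^1\nabla^2\hat\Phi(\bar q_t+s\,\delta q_t)\,\mathrm{d}s$ satisfying $\mu I\preceq H_t\preceq L I$. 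I would then introduce a quadratic form $V_t=\alpha\|\delta q_t\|^2+2\beta\langle\delta q_t,\delta p_t\rangle+\|\delta p_t\|^2$ with $\alpha,\beta$ chosen so that $V_t$ is comparable to $\|(\delta q_t,\delta p_t)\|^2$ with ratio $2$, differentiate along the flow, and use the bounds on $H_t$ to verify $\dot V_t\le-\tfrac{2\mu}{\gamma}V_t$ whenever $\gamma\ge\sqrt{\mu+L}$. Integrating and using the norm equivalence yields $\|(\delta q_t,\delta p_t)\|\le\sqrt2\,e^{-\mu t/\gamma}\|(\delta q_0,\delta p_0)\|$.

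Finally I would transfer the bound back to the original variables. Since the $\theta$-block of $\Sigma^{-1/2}$ is $\sqrt N$, we have $\|\bm{\theta}_t-\bar{\bm{\theta}}_t\|^2=N^{-1}\|\delta\tilde\theta_t\|^2\le N^{-1}\|(\delta q_t,\delta p_t)\|^2$, while $\|(\delta q_0,\delta p_0)\|^2=\|\Sigma^{-1/2}(\Z_0-\bar Z_\star)\|^2$ weights each $\theta$-type block by $N$ and each $x$-type block by $1$; the prefactor $N^{-1}$ thus dominates every block for $N\ge1$, giving $N^{-1}\|(\delta q_0,\delta p_0)\|^2\le\|\Z_0-\bar Z_\star\|^2$. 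Combining yields $\|\bm{\theta}_t-\bar{\bm{\theta}}_t\|\le\sqrt2\,e^{-\mu t/\gamma}\|\Z_0-\bar Z_\star\|$. Taking expectations over the initial coupling and using that projection onto $\bm{\theta}$ is $1$-Lipschitz, so that $W_2(\mathcal L(\bm{\theta}_t),\pi_\Theta)\le\mathbb E[\|\bm{\theta}_t-\bar{\bm{\theta}}_t\|^2]^{1/2}$ with $\pi_\Theta$ the $\theta$-marginal of $\tilde\pi$, gives the claim.

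The main obstacle is the Lyapunov step: choosing $\alpha,\beta$ so that $V$ is simultaneously comparable to the Euclidean norm with ratio $2$ (to produce the factor $\sqrt2$) and contracts at rate exactly $\mu/\gamma$, while controlling the indefinite cross term $\langle\delta p,H_t\,\delta q\rangle$ uniformly over $\mu I\preceq H_t\preceq L I$; this is precisely where the threshold $\gamma\ge\sqrt{\mu+L}$ is forced. Everything else—the change of variables, the transfer of $(\mu,L)$ through $\Sigma^{1/2}(\cdot)\Sigma^{1/2}$, and the norm bookkeeping—is routine.
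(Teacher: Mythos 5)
Your proof is correct, but it takes a genuinely different route from the paper's. The paper's proof (Appendix~\ref{app:proofkipld_conv}) first rewrites \ref{eq:KIPLD} by rescaling the latent positions, $\X^i \mapsto N^{-1/2}\X^i$ (Lemma~\ref{lem:KIPLD}), and then applies a \emph{time} rescaling $t \mapsto \sqrt{N}t$ (Lemma~\ref{lem:kipld_standard_ud}) to arrive at a standard unit-noise kinetic Langevin diffusion whose potential $N\bar{U}_N$ is $N\mu$-strongly convex and $NL$-gradient-Lipschitz (Lemmas~\ref{lem:strongconvexity_barU} and \ref{lem:gradientlipschitz_barU}) with friction $\sqrt{N}\gamma$; it then invokes \cite[Theorem~1]{dalalyan2018sampling} as a black box, and the various factors of $N$ cancel through the time change to produce the rate $\mu/\gamma$. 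You instead whiten the noise directly, $\tilde q = \Sigma^{-1/2}q$ with $\Sigma = \diag(N^{-1}I_{d_\theta}, I_{Nd_x})$, i.e.\ you rescale the $\theta$-block by $\sqrt{N}$ rather than the $x$-blocks by $N^{-1/2}$. This has two advantages: no time change is needed, and the constants transfer exactly ($\mu$, $L$, $\gamma$ unchanged), so the threshold $\gamma \geq \sqrt{\mu+L}$ and the rate $\mu/\gamma$ appear with no $N$-bookkeeping; your Hessian-regrouping identity $w^\top \nabla^2\hat\Phi\, w = \sum_i u_i^\top \nabla^2 U\, u_i$ with $\sum_i \|u_i\|^2 = \|w\|^2$ is a clean substitute for Lemmas~\ref{lem:strongconvexity_barU}--\ref{lem:gradientlipschitz_barU}. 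The second difference is that you re-derive the contraction itself (synchronous coupling plus a twisted quadratic Lyapunov norm with equivalence ratio $2$) rather than citing it; this is essentially reproving \cite[Theorem~1]{dalalyan2018sampling}, so your argument is self-contained where the paper's is modular. What the paper's route buys in exchange is reuse: Lemmas~\ref{lem:KIPLD} and \ref{lem:kipld_standard_ud} are exactly the scalings needed again in the discretisation proofs of Theorems~\ref{thm:KPLMC1} and \ref{thm:KIPLMC2}, where the step-size correspondence $\tilde\eta = \eta/\sqrt{N}$ matches the algorithms as written.

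Two minor caveats. First, your argument uses $\nabla^2 U$ (both in the constant-transfer step and in defining $H_t$ along the coupling), which formally requires $U \in \mathcal{C}^2$, whereas \ref{assump:lipschitz} only grants $\mathcal{C}^1$; the paper's Lemmas~\ref{lem:strongconvexity_barU}--\ref{lem:gradientlipschitz_barU} avoid this by working with gradient differences (though \ref{assump:strongconvexity} as stated also invokes $\nabla^2 U$, so you are consistent with the paper's own conventions). Second, your final bound is stated in the original coordinates $(\bm{\theta}, \X^i, \V^\theta, \V^{x_i})$ via the inequality $N^{-1}\|\Sigma^{-1/2}w\|^2 \leq \|w\|^2$, whereas the proposition's right-hand side refers to the coordinates of Lemma~\ref{lem:scaledstat}; these norms do not coincide (neither dominates the other in general), but given that the statement itself compares $\mathbf{Z}_0$, an initial condition of \ref{eq:KIPLD}, with $\bar Z_\star \sim \tilde\pi$, a sample in the rescaled coordinates, your version is internally consistent and this discrepancy lies within the paper's own notational looseness rather than constituting a gap.
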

\begin{proof} See Appendix~\ref{app:proofkipld_conv}. \end{proof}

\subsubsection{Nonasymptotic analysis of KIPLMC1}\label{sec:numericalKPLMC1}
In this section we present our first main result, the convergence rate of \gls*{kiplmc1}. 
\begin{theorem}\label{thm:KPLMC1} Let $(\theta_n)_{n\in\bN}$ be the iterates of \gls*{kiplmc1} and suppose that the process is initialised as $(Z_0, V^z_0)^T \sim \nu \otimes \mathcal{N}(0, \mathbb{I}_{d_z})$, where $\nu\in\mathcal{P}(\mathbb{R}^{d_z})$ has bounded second moments. Under Assumptions \ref{assump:strongconvexity} and \ref{assump:lipschitz} and with $\gamma\geq \sqrt{\mu+L}$ and $\eta \leq \mu/(4\gamma L)$, we have
\begin{align*}
\bE[\|\theta_n - \bar{\theta}_\star\|^2]^{1/2} &\leq C_1 \left(1-\frac{0.75\mu\eta}{\gamma}\right)^n + \eta C_2 + \sqrt{\frac{C_3}{N}},
\end{align*}
where
\begin{align*}
C_1 = \sqrt{2} \bE[\|Z_0 - \bar{Z}_\star\|^2]^{1/2}, \quad
C_2 = \sqrt{2}\frac{L}{\mu} \sqrt{\frac{d_\theta + N d_x}{N}}, \quad
C_3 = \frac{d_\theta}{\mu},
\end{align*}
where $Z_0=(\theta_0, N^{-1/2} X^1_0,\dots, N^{-1/2} X^N_0)^\intercal$, the initialisation step of the \gls*{kiplmc1} where $\bar{Z}_\star\sim \tilde{\pi}$ and $\tilde{\pi}$ is the extended target measure described in Lemma \ref{lem:scaledstat}.
\end{theorem}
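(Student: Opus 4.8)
The plan is to follow the decomposition in \eqref{eq:triangle}, writing
$\bE[\|\theta_n - \bar{\theta}_\star\|^2]^{1/2} = W_2(\mathcal{L}(\theta_n), \delta_{\bar{\theta}_\star}) \leq W_2(\pi_\Theta, \delta_{\bar{\theta}_\star}) + W_2(\mathcal{L}(\theta_n), \pi_\Theta)$.
The concentration term $W_2(\pi_\Theta, \delta_{\bar{\theta}_\star})$ is bounded by $\sqrt{2 d_\theta/(\mu N)}$ \emph{directly} via Proposition~\ref{prop:mmleerror}, which already accounts for the last summand of the claimed bound. It therefore remains only to control the convergence term $W_2(\mathcal{L}(\theta_n), \pi_\Theta)$ by the first two summands, i.e.\ a geometrically decaying initialisation error and a step-size-dependent discretisation bias.

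First I would lift the problem from the $\theta$-coordinate to the full phase space. Using the rescaling $Z=(\theta, N^{-1/2}X^1,\dots,N^{-1/2}X^N)$ of Lemma~\ref{lem:scaledstat}, the \ref{eq:KIPLD} becomes a \emph{standard} underdamped Langevin diffusion on $\bR^{d_z}\times\bR^{d_z}$ with uniform friction $\gamma$ and isotropic noise of scale $\sqrt{2\gamma/N}$, targeting the extended measure $\tilde{\pi}$ at inverse temperature $N$. The rescaled potential $\tilde U$ inherits $\mu$-strong convexity and $L$-Lipschitz gradients from $U$: the rescaling block-diagonalises the Hessian per particle, and for each latent block the quadratic form reduces to $(N^{-1/2}w_\theta, w_i)^\intercal \nabla^2 U (N^{-1/2}w_\theta, w_i)$, so Assumptions~\ref{assump:strongconvexity} and \ref{assump:lipschitz} transfer verbatim. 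Since the \ref{eq:KIPLMC1} recursion is precisely the exponential-integrator discretisation of this rescaled diffusion, and $\theta$ is a coordinate of $Z$, the $1$-Lipschitz coordinate projection gives $W_2(\mathcal{L}(\theta_n), \pi_\Theta) \leq W_2(\mathcal{L}(Z_n, V_n), \tilde{\pi})$, reducing the task to a convergence bound for the exponential integrator of a standard kinetic Langevin diffusion à la \cite{cheng2018underdamped, dalalyan2018sampling}.

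I would then bound $W_2(\mathcal{L}(Z_n, V_n), \tilde{\pi})$ by the contraction-plus-discretisation argument. Introduce the twisted Euclidean metric on phase space induced by a positive-definite block matrix coupling position and momentum (the standard Cheng--Dalalyan form, scaled by $1/\gamma$), in which the synchronously coupled exact flow over a time-step is a strict contraction at the continuous rate $\mu/\gamma$ of Proposition~\ref{prop:kipld_conv}. Coupling a chain started from $\mathcal{L}(Z_0, V_0)$ with one started from $\tilde{\pi}$ through shared Gaussian increments with covariance $C$ from \eqref{eq:covariancemat}, a one-step estimate yields $W_2(\mathcal{L}(Z_{n+1},V_{n+1}),\tilde{\pi}) \leq (1-3\mu\eta/(4\gamma))\,W_2(\mathcal{L}(Z_n,V_n),\tilde{\pi}) + b_\eta$, valid once $\eta \leq \mu/(4\gamma L)$; the slightly degraded constant $3\mu\eta/(4\gamma)$ absorbs the discretisation perturbation of the continuous contraction, and $b_\eta$ is the per-step local error of the exponential integrator. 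Iterating and summing the geometric series produces the decaying term $(1-3\mu\eta/(4\gamma))^n\,\bE[\|Z_0-\bar{Z}_\star\|^2]^{1/2}$ together with the stationary bias $b_\eta/(3\mu\eta/(4\gamma)) \asymp \tfrac{L\eta}{\mu}\sqrt{d_z/N}$, where the inverse-temperature $N$ supplies the $N^{-1/2}$ scaling of the fluctuations; the $\sqrt{2}$ factors arise from passing between the twisted norm and the Euclidean norm. Collecting the three contributions and recalling $d_z=d_\theta+Nd_x$ reproduces the claimed bound.

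The main obstacle I anticipate is the one-step analysis: establishing contraction of the \emph{discretised} exponential integrator in the twisted metric with the explicit constant $3\mu\eta/(4\gamma)$ and step-size restriction $\eta\leq\mu/(4\gamma L)$, paired with a local error $b_\eta$ of exact order $\eta$ in time and $\sqrt{d_z}$ in dimension. This requires a careful Taylor expansion of the drift differences, controlled through Assumptions~\ref{assump:strongconvexity} and \ref{assump:lipschitz}, together with exact moment matching of the Gaussian increments against the true transition covariance, so that the $\psi_0,\psi_1,\psi_2$ terms cancel to leading order and only the genuinely second-order discretisation error survives.
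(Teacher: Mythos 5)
Your overall skeleton coincides with the paper's: the same triangle-inequality decomposition \eqref{eq:triangle}, with Proposition~\ref{prop:mmleerror} supplying the $\sqrt{2d_\theta/(\mu N)}$ term, a lift to the full phase space, and the identification of \ref{eq:KIPLMC1} as the exponential integrator of a rescaled kinetic Langevin diffusion. Where you diverge is at the last --- and quantitatively the only hard --- step. The paper does not re-derive any contraction estimate: it first applies the \emph{time-and-momentum} rescaling of Lemma~\ref{lem:kipld_standard_ud} (friction $\tilde\gamma = \sqrt{N}\gamma$, step $\tilde\eta = \eta/\sqrt{N}$, potential $N\bar U_N$, which is $N\mu$-strongly convex and $NL$-gradient Lipschitz by Lemmas~\ref{lem:strongconvexity_barU} and \ref{lem:gradientlipschitz_barU}), putting the system in exactly the standard form with noise $\sqrt{2\tilde\gamma}$; it then cites \cite[Theorem~2]{dalalyan2018sampling} verbatim and substitutes back, which is where the conditions $\gamma \geq \sqrt{\mu+L}$, $\eta \leq \mu/(4\gamma L)$ and all three terms of the bound come from.

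Your version has a genuine gap precisely at this point. You work with the form of Lemma~\ref{lem:KIPLD} (friction $\gamma$, noise $\sqrt{2\gamma/N}$), call it ``standard'', and claim it targets $\tilde\pi$. It is not standard, and it does not target $\tilde\pi$: its invariant measure is $\propto \exp\left(-N\bar U_N(z) - \tfrac{N}{2}\|v\|^2\right)$, so the momentum marginal is $\mathcal{N}(0, N^{-1}\mathbb{I})$ rather than the $\mathcal{N}(0,\mathbb{I})$ of $\tilde\pi$ --- this mismatch is exactly what the extra rescaling in Lemma~\ref{lem:kipld_standard_ud} repairs, and it matters for your argument, since the stationary chain you synchronously couple against, and the matching with the $\mathcal{N}(0,\mathbb{I}_{d_z})$ initialisation in the theorem statement, live at the wrong momentum temperature. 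Because your system is not in standard form, you also cannot invoke the exponential-integrator bounds of \cite{cheng2018underdamped,dalalyan2018sampling} off the shelf; you would have to reprove the discrete contraction while tracking the $1/N$ temperature throughout. But that is precisely the step you defer as ``the main obstacle I anticipate'': the one-step contraction with constant $3\mu\eta/(4\gamma)$ under $\eta\leq \mu/(4\gamma L)$, together with the $O(\eta\sqrt{d_z/N})$ bias, \emph{is} the content of \cite[Theorem~2]{dalalyan2018sampling}, not a routine verification, so the proposal as written leaves the theorem essentially unproved. Either carry out that coupling analysis in full at temperature $1/N$, or --- far shorter, and what the paper does --- insert the time rescaling first and then quote the known theorem.
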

\begin{proof}
See Appendix~\ref{app:proofnumericalKPLMC1}.
\end{proof}

To provide the algorithmic complexity of \gls*{kiplmc1}, we can inspect the bound provided in Theorem~\ref{thm:KPLMC1}. To obtain a complexity result, we first set $N = \mathcal{O}(d_\theta \varepsilon^{-2})$ which ensures the last term in the bound in Theorem~\ref{thm:KPLMC1} is $\mathcal{O}(\varepsilon)$. Next, set $\eta = \mathcal{O}(\varepsilon d_x^{-1/2})$ which ensures that the second term is $\mathcal{O}(\varepsilon)$. It is then easy to see that $n = \widetilde{\mathcal{O}}(\varepsilon^{-1} d_x^{1/2})$ steps are enough to obtain $\bE[\|\theta_n - \bar{\theta}_\star\|^2]^{1/2} \leq \varepsilon$.

\begin{remark}\label{rem:kiplmc1_complexity} This dependency needs to be compared with \gls*{ipla} and \gls*{pgd}. For the former, we note that our dependence to dimension for $N$ is identical to \gls*{ipla} and the dependence of the number of steps is significantly improved, i.e., we require $n = \widetilde{\mathcal{O}}({d_x}^{1/2} \varepsilon^{-1})$ whereas \gls*{ipla} requires $n = \widetilde{\mathcal{O}}({d_x} \varepsilon^{-2})$ steps for the same accuracy. Regarding the latter method, \gls*{pgd}, our dependence to number of particles is different, as \gls*{pgd} requires $N = \mathcal{O}(d_x \varepsilon^{-2})$ (but this difference is a more general difference stemming from the different analysis techniques, as \gls*{ipla} also has the same dependence as us to the number of particles $N$, see, e.g., Corollary~8 and the subsequent discussion in \citet{caprio2024error}). However, in terms of the number of steps necessary to attain $\varepsilon$-error, we have the same improvement in dimension dependence in $d_x$ and $\varepsilon$ as we have compared to \gls*{ipla}.
\end{remark}

\subsubsection{Nonasymptotic analysis of KIPLMC2}
We present our second main result, showing the convergence rate of \gls*{kiplmc2}.

\begin{theorem}\label{thm:KIPLMC2}
Let $(\theta_n)_{n\in\mathbb{N}}$ be the iterates of \gls*{kiplmc2} and suppose that the process is initialised as $(Z_0, V^z_0)^T \sim \nu \otimes \mathcal{N}(0, \mathbb{I}_{d_z})$, where $\nu\in \mathcal{P}(\mathbb{R}^{d_z})$ has bounded second moments. Under Assumptions \ref{assump:strongconvexity} and \ref{assump:lipschitz} and with $\gamma\geq 2\sqrt{L}$ and $\eta \leq \frac{\mu}{33\gamma^3}$, we have
\begin{align*}
\bE[\|\theta_n - \bar{\theta}_\star\|^2]^{1/2} \leq  \tilde{C}_1 \left(1-\frac{\eta\mu}{3\gamma}\right)^{n/2} + \tilde{C}_2 \eta + \sqrt{\frac{\tilde{C}_3}{N}},
\end{align*}
where $Z_0=(\theta_0, N^{-1/2}X^1_0,\dots, N^{-1/2}X^N_0)^\intercal$, the initialisation step of the \gls*{kiplmc2}, $\bar{Z}_\star \sim \tilde{\pi}$ and $\tilde{\pi}$ is the extended target measure described in Lemma \ref{lem:scaledstat}. The constants $\tilde{C}_1$, $\tilde{C}_2$, $\tilde{C}_3$ are given as
\begin{align*}
\tilde{C}_1 =\sqrt{3} (\sqrt{L} \lor \sqrt{L}^{-1}) \mathbb{E}[\|Z_0 - \bar{Z}^\star\|^2]^{1/2} ,\quad
\tilde{C}_2 = \sqrt{2 (N d_x + d_\theta)} \frac{6\gamma K}{\mu} \sqrt{3} (\sqrt{L} \lor \sqrt{L}^{-1}), \quad
\tilde{C}_3 = \frac{d_\theta}{\mu},
\end{align*}
where
\begin{align*}
K &= L\left(1+e^{L\eta^2} \left(\frac{\eta}{6} + \frac{\eta^2 L}{24}\right)\right)\left(1 + \frac{\eta L}{2\sqrt{\mu}}\right).
\end{align*}
Note $K$ converges to $L$ as $\eta\to0$.
\end{theorem}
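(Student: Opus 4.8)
The plan is to follow the decomposition in \eqref{eq:triangle}, controlling the concentration term via Proposition~\ref{prop:mmleerror} and the convergence term $W_2(\mathcal{L}(\theta_n),\pi_\Theta)$ by adapting the OBABO analysis of \cite{monmarche2021high} to the interacting system \ref{eq:KIPLD}. First I would recast \ref{eq:KIPLD} as a standard underdamped Langevin diffusion through the rescaling $x_i \mapsto N^{-1/2} x_i$ that produces the scaled coordinate $Z = (\theta, N^{-1/2}X^1,\ldots,N^{-1/2}X^N)^\intercal$ appearing in the statement, with Lemma~\ref{lem:scaledstat} supplying the corresponding extended target $\tilde\pi$. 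The point to verify is that the scaled potential inherits $\mu$-strong convexity and $L$-Lipschitz gradients from Assumptions~\ref{assump:strongconvexity} and \ref{assump:lipschitz}, so the hypotheses of the underdamped analysis hold with the \emph{same} constants; the $1/N$ factor in the $\theta$-drift together with the $N^{-1/2}$ rescaling are precisely what preserve $\mu$ and $L$, exactly as in \cite{akyildiz2023interacting}.

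With the problem in standard form, I would bound the joint distance $W_2(\mathcal{L}(Z_n,V_n),\tilde\pi)$, which dominates $W_2(\mathcal{L}(\theta_n),\pi_\Theta)$ since $\theta$ is an unscaled marginal. The argument is a synchronous coupling of two copies of \ref{eq:KIPLMC2}, one started at $Z_0$ and one at stationarity, driven by the same Gaussian increments and measured in the twisted position--momentum metric of \cite{monmarche2021high}. Two estimates are needed. First, a one-step contraction: under $\gamma \geq 2\sqrt{L}$ and $\eta \leq \mu/(33\gamma^3)$, each OBABO step contracts the squared twisted distance by a factor $1-\eta\mu/(3\gamma)$, which yields the geometric factor $(1-\eta\mu/(3\gamma))^{n/2}$ on the initialisation term. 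Second, a one-step local error: comparing one OBABO step applied to an exact trajectory against the exact flow of the SDE over time $\eta$, a Taylor/Gr\"onwall estimate using the Lipschitz gradient bound produces a local error of order $\eta^2\sqrt{d_z}\,K$, with $K$ the stated constant (the $e^{L\eta^2}$ factor arising from the Gr\"onwall constant and the $\eta/6 + \eta^2 L/24$ terms from the integral remainders of the drift over a step).

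Combining the two estimates in a recursion $W_{n+1} \leq (1-\eta\mu/(3\gamma))^{1/2} W_n + \mathcal{E}_{\mathrm{loc}}$ and summing the geometric series $\sum_{k\geq 0}(1-\eta\mu/(3\gamma))^{k/2}$, whose value is of order $6\gamma/(\eta\mu)$ since $1-(1-\eta\mu/(3\gamma))^{1/2}\approx \eta\mu/(6\gamma)$, converts the per-step error $\mathcal{E}_{\mathrm{loc}}$ into the stated bias $\eta\sqrt{2(Nd_x+d_\theta)}\,6\gamma K/\mu$. The factor $C=\sqrt{3}(\sqrt{L}\lor\sqrt{L}^{-1})$ then enters when passing from the twisted metric back to Euclidean $W_2$, the two being equivalent up to this constant, and it multiplies only the convergence part. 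Adding the concentration bound $\sqrt{2d_\theta/(\mu N)}$ from Proposition~\ref{prop:mmleerror} through the triangle inequality \eqref{eq:triangle} completes the proof.

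The hardest part will be the one-step local error estimate: controlling the discrepancy between the OBABO map and the exact flow requires tracking how the drifts $\frac{1}{N}\sum_i \nabla_\theta U$ and $\nabla_x U$ evolve along the coupled trajectories, and the mean-over-particles interaction must be handled so that the effective Lipschitz constant stays $L$ rather than degrading with $N$. Establishing the discrete contraction with the explicit, friction-dependent threshold $\eta \leq \mu/(33\gamma^3)$ is likewise delicate, since this is where the twisted metric must be tuned so that the momentum damping compensates the gradient coupling.
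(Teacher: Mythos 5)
Your overall strategy matches the paper's: the decomposition \eqref{eq:triangle} with Proposition~\ref{prop:mmleerror} for the concentration term, a rescaling of \ref{eq:KIPLD} into a single underdamped diffusion, and an OBABO contraction-plus-local-error argument in the twisted metric of \cite{monmarche2021high} (the paper simply cites Theorem~1 and Proposition~11 of that reference rather than re-deriving them, so your second and third paragraphs are a plan to reprove what the paper imports). The gap is in the step you yourself flag as ``the point to verify''. In the coordinates $Z=(\theta, N^{-1/2}X^1,\dots,N^{-1/2}X^N)^\intercal$ the diffusion is \emph{not} a standard underdamped Langevin diffusion: by Lemma~\ref{lem:KIPLD}, equation \eqref{eq:simplesde1}, the momentum equation carries the noise $\sqrt{2\gamma/N}\,\md \B_t$, i.e.\ the system runs at temperature $1/N$, even though the potential $\bar{U}_N$ does keep the constants $\mu$ and $L$ (Lemmas~\ref{lem:strongconvexity_barU} and \ref{lem:gradientlipschitz_barU}). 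To reach the unit-temperature form to which the OBABO analysis of \cite{monmarche2021high} applies, the paper rescales time and momentum (Lemma~\ref{lem:kipld_standard_ud}), and in that representation the effective potential is $N\bar{U}_N$, which is $N\mu$-strongly convex and $NL$-gradient Lipschitz, with friction $\tilde{\gamma}=\sqrt{N}\gamma$ and step $\tilde{\eta}=\eta/\sqrt{N}$. The hypotheses $\gamma\geq 2\sqrt{L}$, $\eta\leq\mu/(33\gamma^3)$ and the bias term $\eta\sqrt{2(Nd_x+d_\theta)}\,6\gamma K/\mu$ in the theorem are exactly what one gets by back-substituting these scaled quantities; they are not obtained from an analysis in which ``the hypotheses hold with the same constants $\mu$ and $L$'', so your appeal to the IPLA situation of \cite{akyildiz2023interacting} does not transfer.

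This is not mere bookkeeping; as written your sketch is internally inconsistent. The contraction half survives, because under synchronous coupling the noise cancels and the one-step contraction depends only on the drift — which is why your rate $(1-\eta\mu/(3\gamma))^{n/2}$ and step-size restriction come out right. The local-error half does not: the per-step discretisation error is driven by the noise and the stationary fluctuations, and at temperature $1/N$ these are of order $\sqrt{d_z/N}$, not $\sqrt{d_z}$. A Taylor/Gr\"onwall estimate carried out consistently with constants $\mu$, $L$, friction $\gamma$, step $\eta$ and noise $\sqrt{2\gamma/N}$ would therefore produce a bias of order $\eta\sqrt{(Nd_x+d_\theta)/N}\,\gamma K/\mu$, a factor $\sqrt{N}$ smaller than what you claim to obtain; the displayed term $\eta\sqrt{2(Nd_x+d_\theta)}\,6\gamma K/\mu$ cannot emerge from the derivation you describe, only from the $N\mu$/$NL$ route. (This $N$-dependence is precisely the content of Remark~\ref{rem:kiplmc2_complexity}: the OBABO bound of \cite{monmarche2021high} does not depend on the constants solely through the condition number, which is why Theorem~\ref{thm:KIPLMC2} degrades with $N$ while Theorem~\ref{thm:KPLMC1} does not.) To fix your write-up along the paper's lines: after the time/momentum rescaling of Lemma~\ref{lem:kipld_standard_ud}, run (or cite) the OBABO analysis with $N\mu$, $NL$, $\tilde{\gamma}$, $\tilde{\eta}$, and only then substitute $\tilde{\eta}=\eta/\sqrt{N}$, $\tilde{\gamma}=\sqrt{N}\gamma$ to land on the stated constants.
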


\begin{proof}
See Appendix \ref{app:proofnumericalKPLMC2}.
\end{proof}
To obtain complexity, similarly to the analysis provided for \gls*{kiplmc1}, we first set $N = \mathcal{O}(\varepsilon^{-2} d_\theta)$ to obtain $\mathcal{O}(\varepsilon)$ dependence in the last term of the bound in Theorem~\ref{thm:KIPLMC2}, and $\eta = \mathcal{O}(\varepsilon^2 ({d_x d_\theta})^{-1/2})$ to obtain the same dependence in the second term of the bound in Theorem~\ref{thm:KIPLMC2}. Finally, setting $n = \widetilde{\mathcal{O}}(({d_x d_\theta})^{1/2} \varepsilon^{-2})$ brings us an accuracy of $\varepsilon$.

A few comments are in order for this complexity analysis for \gls*{kiplmc2}. 
\begin{remark}\label{rem:kiplmc2_complexity} In general, to avoid poor dependence on $N$, the error bounds of numerical discretisations must depend on $L$ and $\mu$ with the same order. This ensures that the factors cancel out appropriately, preventing any undesirable effects. This is achieved by the analysis of the exponential integrator in \citet{dalalyan2018sampling} with a linear dependence on the condition number $L/\mu$, thus our \gls*{kiplmc1} bound does not have a bad dependence to $N$. This enables us to obtain improved algorithmic complexity as discussed in Remark~\ref{rem:kiplmc1_complexity}. However, for \gls*{kiplmc2}, the OBABO scheme does not have this dependence on the condition number (see, e.g., \citet[Table~2]{monmarche2021high}) under just \ref{assump:strongconvexity} and \ref{assump:lipschitz}. This results in a bound in Theorem~\ref{thm:KIPLMC2} that grows as $N$ grows. This is reflected therefore in the algorithmic complexity as \gls*{kiplmc2} theoretically requires $n = \widetilde{\mathcal{O}}(\sqrt{d_x d_\theta} \varepsilon^{-2})$ steps to achieve $\varepsilon$ error, compared to \gls*{ipla} which requires $\widetilde{\mathcal{O}}(d_x \varepsilon^{-2})$ steps. In this setting, the improvement (in dimension) only happens in the case where $d_\theta \leq d_x$, i.e., the dimension of parameter of the statistical model is less than the dimension of the latent variables.
\end{remark}
\begin{table}[t]
    \centering
    \def \arraystretch{1}
    \resizebox{.5\linewidth}{!}{\begin{tabular}{c|c c c}
    \hline
         &  step count ($n$) & $N$ & $\eta$\\
        \hline
       KIPLMC1  &  $\widetilde{\mathcal{O}}(\sqrt{d_x}/\varepsilon)$ & $\mathcal{O}(d_\theta/\varepsilon^2)$ & $\mathcal{O}(\varepsilon/\sqrt{d_x})$ \\
       KIPLMC2  &  $\widetilde{\mathcal{O}}(\sqrt{d_x d_\theta}/ \varepsilon^2)$ & $\mathcal{O}(d_\theta/\varepsilon^2)$ & $\mathcal{O}(\varepsilon^2/\sqrt{d_x d_\theta})$\\
       IPLA & $\widetilde{\mathcal{O}}(d_x/ \varepsilon^{2})$ & $\mathcal{O}(d_\theta/\varepsilon^2)$ & $\mathcal{O}(\varepsilon^2/d_x)$\\
       PGD & $\widetilde{\mathcal{O}}(d_x/\varepsilon^2)$ & $\mathcal{O}(d_x/\varepsilon^2)$ & $\mathcal{O}(\varepsilon^2 /d_x)$\\
       \hline
    \end{tabular}}
    \caption{Comparison of orders required to achieve an error of order $\varepsilon$ in $W_2$, for step count, particle number $N$ and step size $\eta$.}
    \label{tab:err_rate}
\end{table}

See Table \ref{tab:err_rate} for a summary of our discussion.

\section{Experiments}\label{sec:experiments}
In the following section a comparison will be made between the empirical results of \gls*{kiplmc1}, \gls*{kiplmc2} algorithms, as well as, those of \gls*{pgd} from \citet{kuntz2023particle}, \gls*{ipla} from \citet{akyildiz2025interacting} and \gls*{mpgd} from \citet{lim2023momentum}. We note that, in \citet{lim2023momentum}, the \gls*{mpgd} is not solely the discretisation of the KIPLD-like \gls*{sde} using the exponential integrator, but includes a \textit{gradient correction} term, making theoretical analysis and a direct comparison with our methods non-trivial (for a comparison of computational cost see Sec.~\ref{app:compcost}). To this end, we will also compare these methods to MPGDnc, the \gls*{mpgd} algorithm implemented without gradient correction, so that the number of gradient computations is the same as \gls*{kiplmc1}.

In the following sections, we will consider three examples: (i) Bayesian logistic regression on a synthetic dataset, (ii) Bayesian logistic regression on the Wisconsin Cancer dataset, and (iii) a \gls*{bnn} example as a more challenging, non-convex example on the MNIST dataset.

\subsection{Bayesian Logistic Regression on Synthetic Data}\label{sec:toyex}
We follow the experimental setting in \citet{kuntz2023particle} and \citet{akyildiz2025interacting} and start with comparisons between algorithms on a synthetic dataset for which we know the true solution. More precisely, consider the Bayesian logistic regression model
\begin{align*}
p_\theta(x) = \mathcal{N}(x; \theta, \sigma^2 I_{d_x}),\quad
p(y|x) = \prod_{j=1}^{d_y} s(v_j^\intercal x)^{y_j} (1-s(v_j^\intercal x))^{1-y_j}.
\end{align*}
Here $s(u)=e^u/(1+e^u)$ is the logistic function and $v_j$, $j\in[d_y]$ are the set of $d_x$-dimensional covariates with corresponding responses $y_j\in\{0,1\}$. $\sigma$ is given and fixed throughout. We generate a synthetic set of covariates, $\{v_j\}_{j=1}^{d_y}\subset \mathbb{R}^{d_x}$, from which are simulated a synthetic set of observations $y_j|\bar{\theta}_\star, x, v_j$, for fixed $\bar{\theta}_\star$, via a Bernoulli random variable with probability $s(v_j^\intercal x)$. The algorithm is tested on the recovery of this value of $\bar{\theta}_\star$. 

The marginal likelihood is given as,
\begin{align*}
p_\theta (y)= \int \left(\prod_{j=1}^{d_y} s(v_j^\intercal x)^{y_j} (1-s(v_j^\intercal x))^{1-y_j} \right) p_\theta(x)\mathrm{d}x.
\end{align*}
From this it easy to see that the gradients of $U$ are given as,
\begin{equation}\label{eq:logreggrad}
\begin{aligned}
 \nabla_\theta U(\theta, x) = -\frac{x-\theta}{\sigma^2},\quad
 \nabla_x U(\theta,x) = \frac{x-\theta}{\sigma^2} - \sum_{j=1}^{d_y} (y_j - s(v_j^\intercal x))v_j.   
\end{aligned}
\end{equation}

\begin{figure}[t!]
    \centering
    \centerline{\includegraphics[width=\textwidth]{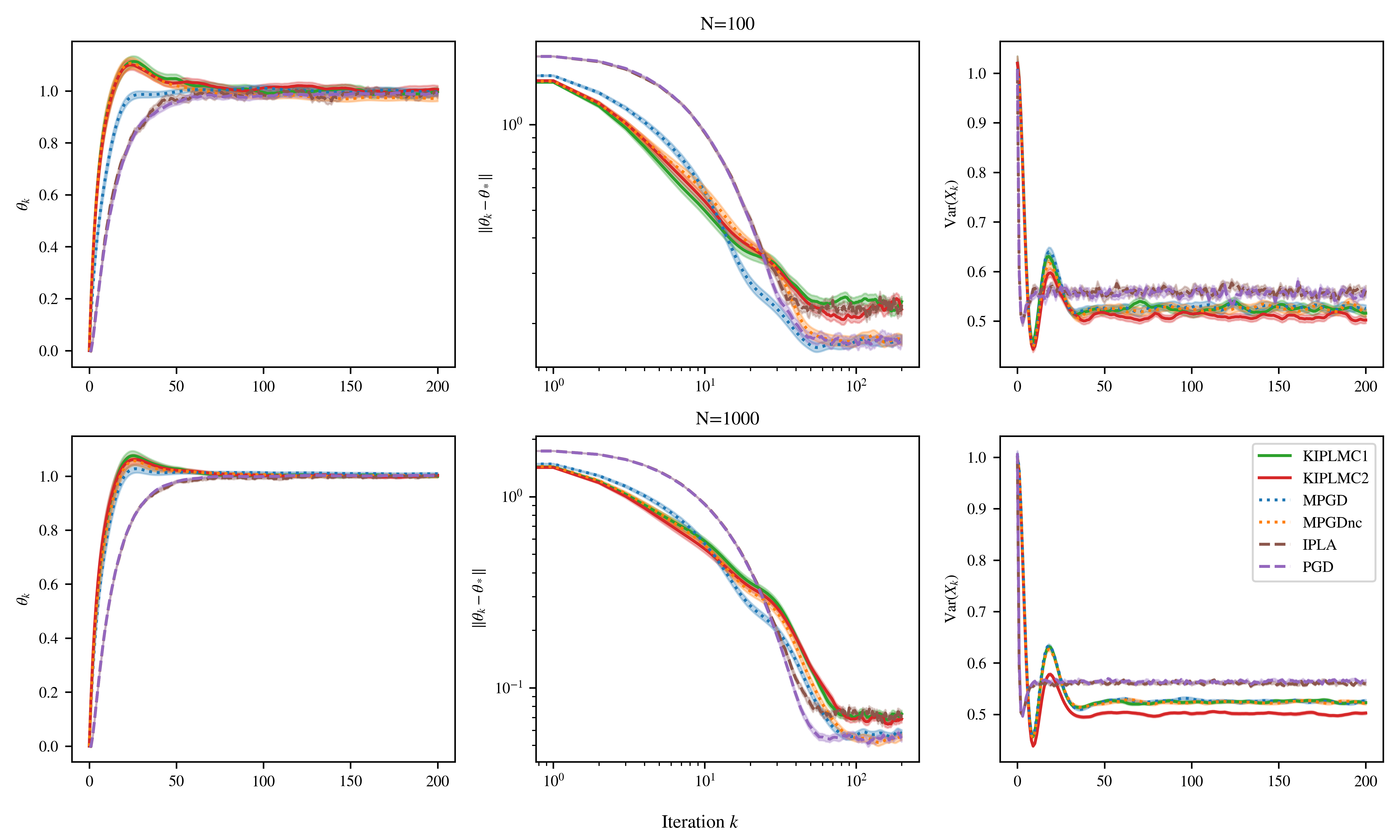}}
    \caption{\textbf{Parameter estimate comparison.} We compare the performance of the \gls*{pgd}, \gls*{ipla}, \gls*{mpgd}, \gls*{kiplmc1}, and \gls*{kiplmc2} algorithms on the synthetic dataset with true $\bar{\theta}_\star = \mathbbm{1}_{d_x}$. We observe the desired convergence of behaviours for larger values of $N$ in the parameter space, mean-square error and the posterior variance of $X_k^i$. In this example $d_x=d_\theta=3$, $d_y=100$ and $\gamma=2.2$. Here we choose $\theta_0, X_0^i, V^\theta_0, V^{x_i}_0\sim\mathcal{N}(0, 0.1)$ and run 100 Monte Carlo simulations.}
    \label{fig:pgdcomparison}
\end{figure}

\begin{remark}[On \ref{assump:strongconvexity} and \ref{assump:lipschitz}]\label{rem:convexity}
We will discuss this problem and our assumptions. From \eqref{eq:logreggrad} it is quite straightforward to observe that,
\begin{align*}
\|\nabla U(z) -\nabla U(z')\| &\leq \frac{2}{\sigma^2}(\|x-x'\|+\|\theta-\theta'\|) +\sum_{j=1}^{d_y} |s(v_j^\intercal x) - s(v_j^\intercal x')| \|v_j\|\\
&\leq (\frac{2}{\sigma^2} + \frac{1}{4} \sum_{j=1}^{d_y} \|v_j\|^2) \|z-z'\|.
\end{align*}
This follows from the fact that the logistic function is Lipschitz continuous with constant ${1}/{4}$ \citet{akyildiz2025interacting}. Hence, \ref{assump:lipschitz} is satisfied.

For \ref{assump:strongconvexity}, consider
\begin{align*}
\nabla^2 U(z) = \frac{1}{\sigma^2} \begin{pmatrix}
    I_{d_\theta}& - I_{d_x} \\ - I_{d_x} & I_{d_\theta}
\end{pmatrix} + \sum_{j=1}^{d_y} s(v_j^\intercal x) (1-s(v_j^\intercal x))v_j \otimes v_j.
\end{align*}
The sum is positive definite and the matrix was shown by \citet{kuntz2023particle} to have $2d_\theta$ positive eigenvalues and so it follows that $\nabla^2 U$ is positive definite. Hence $U$ is strictly convex and in theory no lower bound for strong convexity constant exists. We show however this is not a problem for our practical implementations.
\end{remark}

\begin{figure}[h]
    \centering
    \includegraphics[width=\linewidth]{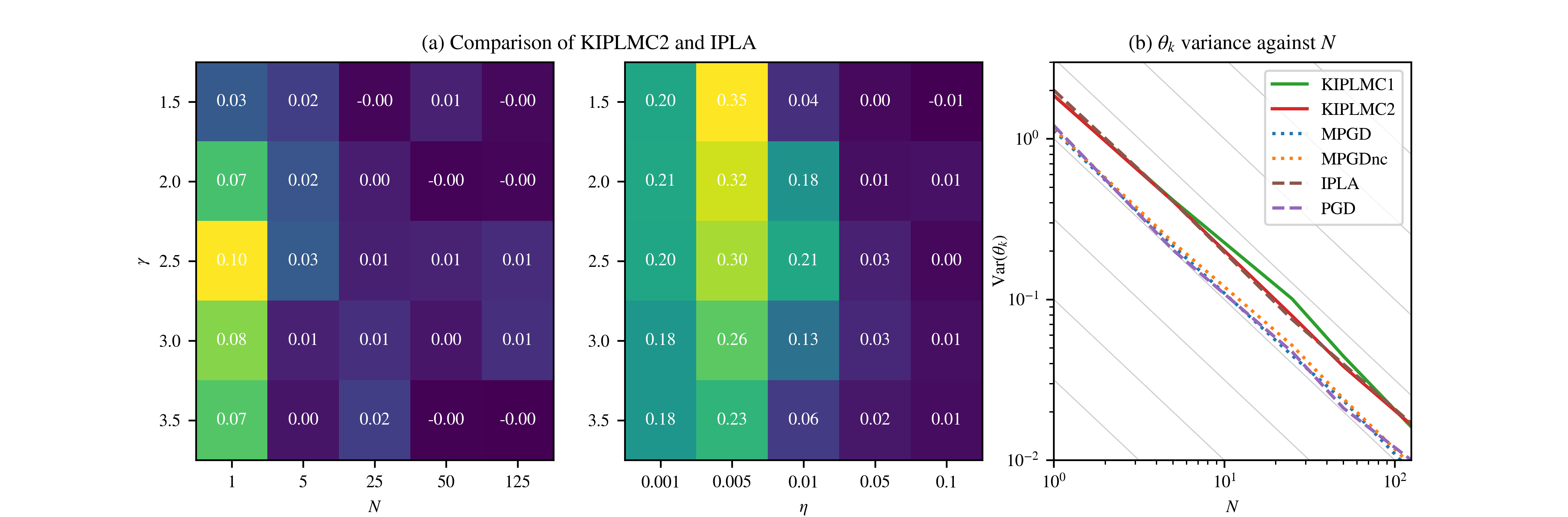}
    \caption{\textbf{Comparison over hyper-parameters.} (a) shows the Area Between the Curve (ABC) values between \gls*{ipla} and \gls*{kiplmc2} for a variety of hyper-parameter combinations. The larger the positive value, the greater the acceleration of \gls*{kiplmc2} with respect to \gls*{ipla} (discussed in C.1). (b) makes a comparison over 100 Monte Carlo simulations of the algorithms' sample variances, using the last 100 steps of each simulation. The grid-lines corresponding to $1/N$ are provided to highlight the rate of convergence. Where not specified otherwise, simulations are run with $M=2000$, $\eta=0.005$, $\gamma=2.2$ and $N=100$.}
    \label{fig:convergence_var}
\end{figure}

\begin{figure}[t]
    \centering
    \centerline{\includegraphics[width=\linewidth]{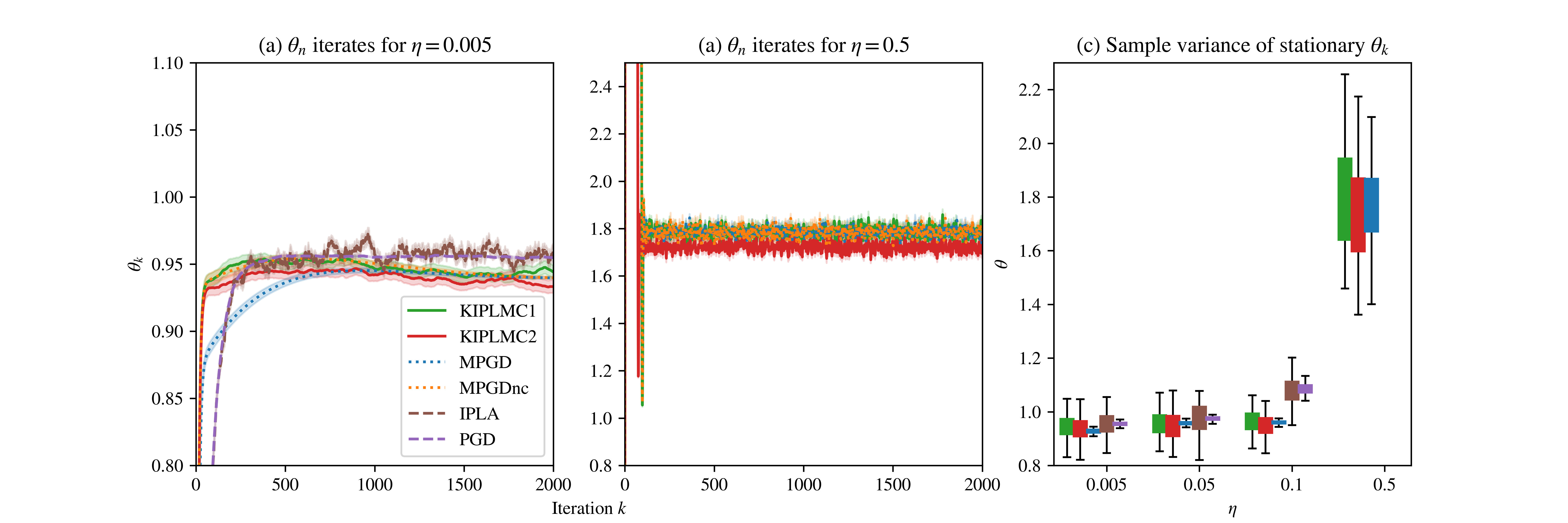}}
    \caption{\textbf{Wisconsin Dataset.} The performance of \gls*{pgd}, \gls*{ipla}, \gls*{mpgd}, \gls*{kiplmc1} and \gls*{kiplmc2} algorithms are compared on a logistic regression experiment for the Wisconsin Cancer Dataset. In (a), we show the behaviour of the $\theta_n$ iterates for a small step-size and (b) shows the behaviour $\theta_n$ iterates for a step-size where \gls*{ipla} and \gls*{pgd} explode. In (c) we compare the distributions of the algorithms over different step-sizes. We set $N=100$ and $\gamma=3.2$.}
    \label{fig:wisconsin}
\end{figure}

In Fig.~\ref{fig:pgdcomparison} we can see the difference in the behaviours between the algorithms. Most notably, the \gls*{kiplmc1} and \gls*{kiplmc2} algorithms exhibit comparable levels of acceleration as the \gls*{mpgd} algorithm in the $\theta$-dimension, whilst the \gls*{ipla} and \gls*{pgd} algorithms trail further behind. Interestingly, we observe (more notably for large $N$) that the noise in the $\theta$-dynamics in \gls*{kiplmc1} and \gls*{kiplmc2}, as compared to \gls*{mpgd}, seems to dampen some of the momentum effects when $\gamma$, the friction coefficient is sub-optimal. As $N$ grows the $\theta$ iterates concentrate onto the \gls*{mmle} $\bar{\theta}_\star$ for all algorithms. This behaviour can be seen in more detail in Fig.~\ref{fig:convergence_var} (b), in which we can verify the variance changing with rate $\mathcal{O}(1/N)$. Further, we make a comparison of the performance of the \gls*{ipla} and \gls*{kiplmc2} in Fig.~\ref{fig:convergence_var} (a) with the Area Below the Curve (ABC) metric (see \ref{sec:ABC} for more detail). This metric is a signed, weighted difference between the performances of the two algorithms: the higher the value, the better \gls*{kiplmc2} performs compared to \gls*{ipla}; 0 is when they perform the same. We observe that the acceleration of \gls*{kiplmc2} can be best observed in specific $\gamma$ regimes, which ensure that the momentum is close to critical dampening.

Note that the momentum effects of the \gls*{kiplmc1} and \gls*{kiplmc2} algorithms has been dampened through a specific choice of $\gamma$. In training it was observed that the convergence rate is very sensitive to the choice of $\gamma$, where choices too small, exhibit large momentum effects, and too large, lead to slow convergence (see also \citet{lim2023momentum} for more discussion on the role of the momentum parameter for learning with KLMC). The choice of $\gamma$ here is far from optimal, but allows us to observe the strength of the proposed algorithms.

\subsection{Wisconsin Cancer Data}
\label{sec:cancex}

We follow again an experimental procedure that is similar to the one outlined in \citet{kuntz2023particle} and \citet{akyildiz2025interacting} and make comparisons between the algorithms on a more realistic dataset: the Wisconsin Cancer Data. Again, we use the logistic regression LVM model, outlined above. This task is a binary classification, to determine from 9 features gathered from tumors and 693 data points labelled as either benign or malignant. The latent variables correspond to the features extracted from the data. The task in this case is to seek to model the behaviour as accurately as possible through the logistic regression LVM.

For this setup we define our probability model as,
\begin{equation*}
p_\theta(x) = \mathcal{N}(x;\theta \mathbbm{1}_{d_x}, 5 I_{d_x})
\end{equation*}
and the likelihood as,
\begin{equation*}
p(y|x)=\prod_{j=1}^{d_y} s(v_j^\intercal x)^{y_j}(1-s(v_j^\intercal x))^{1-y_j}.
\end{equation*}
Note that the parameter $\theta$ is a scalar in this case, hence, this turns out to be a simplified version of the setup described above and so the discussion in Remark~\ref{rem:convexity} is still valid. As opposed to the previous case with synthetic data, here we consider a real dataset, thus we do not have access to the true dataset. Hence, there is no comparison to a $\bar{\theta}_\star$, but we can see that different algorithms attain similar values as the estimate of this minimum.

\begin{figure}[t]
    \centering
    \centerline{\includegraphics[width=\linewidth]{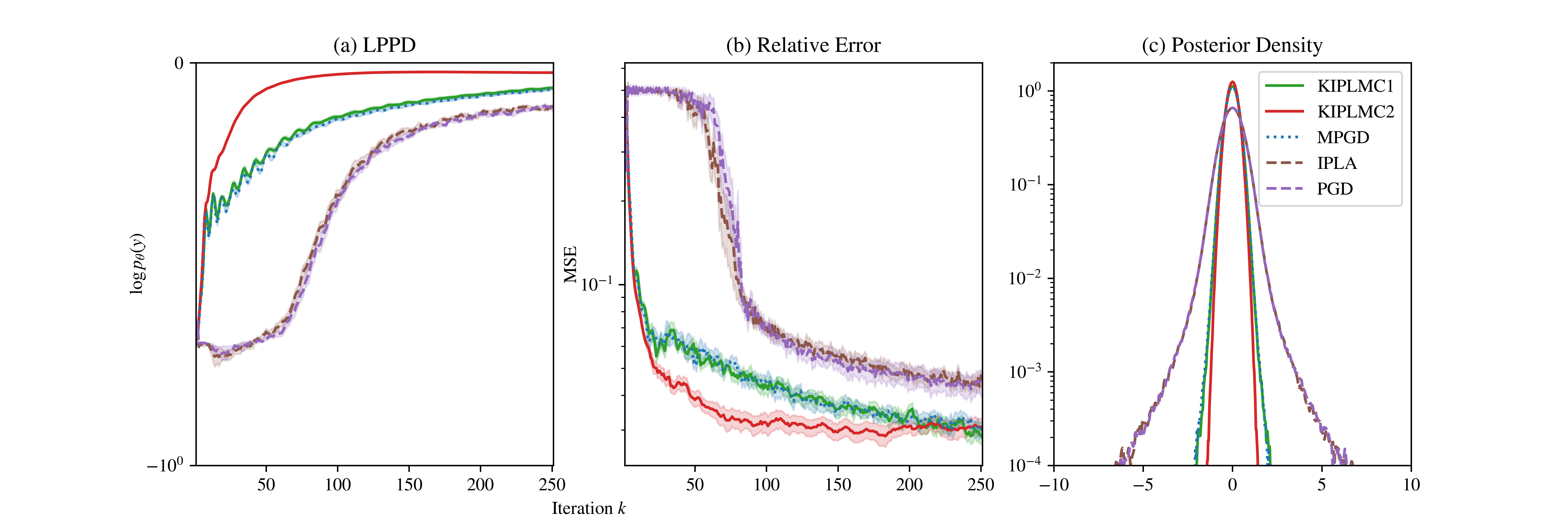}}
    \caption{\textbf{MNIST Dataset.} The performance of \gls*{pgd}, \gls*{ipla}, \gls*{mpgd}, \gls*{kiplmc1} and \gls*{kiplmc2} algorithms are compared on a classification experiment for the MNIST Dataset. (a) shows the Log Pointwise Predictive Density, the average log probability assigned to the correct response. (b) shows the percentage error and (c) the posterior density of the weights $w$. For this experiment $N=100$, $\gamma=2$ and $\eta=0.01$.}
    \label{fig:bnn}
\end{figure}

Most notably in this experiment, we can observe in Fig.~\ref{fig:wisconsin} the importance of the added stability that the momentum-based algorithms exhibit, with the use of more stable numerical integrators than Euler-Maruyama. In particular, these algorithms displays great stability w.r.t. the choice of the step-size, as well as, the acceleration one would expect. For small step-sizes, the \gls*{pgd} algorithm performs with low variance in all cases where it converges, until it explodes where step-sizes become too large . It is interesting to note however that in the large step-size regime, the momentum-based algorithms exhibit much greater momentum than in the smaller step-sizes. However, again, we note that the \gls*{kiplmc1} and \gls*{kiplmc2} algorithms exhibit more variance in the $\theta$ estimation than the \gls*{pgd} and \gls*{mpgd} algorithms, though it is damped when compared to the \gls*{ipla} algorithm, due to the higher regularity of the solutions to KIPLD. This is typically an advantage when working with convex problems, but this ``sticky'' behaviour might prove detrimental in the non-convex case \citep{gao2022global, akyildiz2025interacting}. The injection of noise into the parameter estimation may help the method to escape local minima \citep{akyildiz2024nonasymptotic}.

\subsection{Bayesian Neural Network}\label{sec:bnn}
Similarly to \citet{kuntz2023particle, lim2023momentum} we also consider a Bayesian Neural Network (BNN) example to perform character classification on the MNIST dataset. In this case our approach is similar to ``cold'' posterior estimation, as our sampling density is proportional to a density raised to the inverse temperature, as in \citet{coldposterior}.

This dataset and model provide a more challenging problem for the \gls*{kiplmc1} and \gls*{kiplmc2} algorithms as the posteriors are known to be multi-modal and hence do not satisfy assumption \ref{assump:strongconvexity}. MNIST contains 70'000 $28\times 28$ grey-scale images $\{f_i\}_{i=1}^{70000} \subset \mathbb{R}^{784}$. However, to avoid issues of high dimensionality, we consider a normalised subset of 1,000 characters containing only images of fours and nines, whose similarity should pose a challenge. Due to this smaller sample, we are also able to avoid the need for sub-sampling.

Following \citet{kuntz2023particle} we employ a two layer BNN with $\tanh$ activation function, softmax output layer and 40 dimensional latent space. i.e. we consider the probability of the data labels $l$ to be conditionally independent, given the features $f$ and the weights $x=(w,v)$, with model,
\begin{equation*}
p(l|f, x) \propto \exp\left(\sum_{j=1}^{40} v_{lj} \tanh\left(\sum_{i=1}^{784} w_{ji} f_i\right)\right),    
\end{equation*}
where $w\in \mathbb{R}^{40\times 784}$, $v\in\mathbb{R}^{2\times 40}$. Consider the priors on the weights to be without bias and Gaussian: $w\sim \mathcal{N}(0, e^{2\alpha}\mathbb{I})$ and $v\sim \mathcal{N}(0, e^{2\beta}\mathbb{I})$. Rather than assigning these priors, we learn the parameters $\theta=(\alpha, \beta)\in\mathbb{R}^2$ jointly with the latent variables $x$. Hence, the model's density is given as
\small
\begin{align*}
p_\theta(x, \{f_i\}_{i=1}^{1000}) = &\mathcal{N}(w; 0, e^{2\alpha}) \mathcal{N}(v; 0, e^{2\beta})\prod_{i=1}^{1000} p(l|f_i, x).
\end{align*}
\normalsize
We will employ autograd methods from the PyTorch library to compute the gradients.

In Fig.~\ref{fig:bnn} we make a comparison between the different algorithms over 100 runs, comparing the log predictive point-wise density (LPPD) and relative error. The LPPD is the average log-likelihood assigned to the correct response, whilst the relative error is given as the accuracy of prediction on a test set (see \ref{sec:err} for more detail). Observe the improved behaviour of the momentum-based algorithms in LPPD, as well as relative error. We also note the significant acceleration observed for \gls*{kiplmc2}, highlighting the effect of the distinct numerical integrators. Similarly, in Fig.~\ref{fig:bnn} (c) we can see the more concentrated posterior distributions of the momentum-based algorithms. Thus it is clear that the \gls*{kiplmc1} and \gls*{kiplmc2} algorithms perform competitively against other particle-based methods, even in very complex problems, in which our assumptions may not be satisfied.

\section{Conclusions}
This paper extends a line of work on interacting particle-, and more generally, diffusion-based algorithms for maximum marginal likelihood estimation. This paper has focused on alternatives to the \glspl*{ips} proposed by \citet{kuntz2023particle, akyildiz2025interacting, lim2023momentum} for the \gls*{mmle} problem by considering an accelerated variant. We have shown that we can leverage the existing literature on underdamped Langevin diffusion for sampling \citep{dalalyan2018sampling, cheng2018underdamped, monmarche2021high, Nesterovacc} to produce two algorithms with greater stability, added smoothness and exponential convergence, which concentrate onto the \gls*{mmle} with quantitative nonasymptotic bounds on the estimation error $\bE[\|\theta_n - \bar{\theta}_\star\|^2]^{1/2}$. In particular, we show that \gls*{kiplmc1} attains an accelerated error rate for error $\varepsilon$ after $\widetilde{\mathcal{O}}(\sqrt{d_x} \varepsilon^{-1})$ steps, compared to the \gls*{ipla} and \gls*{pgd} which require $\widetilde{\mathcal{O}}(d_x \varepsilon^{-2})$ steps. The proposed algorithms both perform well empirically compared to the \gls*{mpgd} (with equal choice of step-sizes and friction coefficients) and other particle methods. Specifically, the induced noise should improve guarantees in the non-convex setting, as discussed in \citet{akyildiz2025interacting, raginsky2017non,akyildiz2024nonasymptotic}. 

The results presented here hold under assumptions of gradient-Lipschitzness and strong log-concavity, however, it is possible to extend this line of work to the non-convex, super-linear or even proximal cases (for examples of this see, \citet{zhang2023nonasymptotic,akyildiz2024nonasymptotic,johnston2023kinetic, encinar2025proximal}). Another interesting direction is to consider the setting in \citet{fan2023gradient}, using an accelerated algorithm for high-dimensional linear models. Our methods can also be adapted to solving linear inverse problems as done by \citet{akyildiz2022statistical,glyn2024statistical}. Similar ideas can be considered within the setting of Stein variational gradient descent as done in \citet{sharrock2024tuning}. Crucially, the procedure used to obtain the bounds for the algorithms presented here can easily be extended to other cases, creating a powerful theoretical framework for further underdamped Langevin sampling algorithms for the \gls*{mmle} problem.

\section*{Acknowledgements}
We would like to thank Juan Kuntz and Paula Cordero Encinar for their insightful comments and suggestions.

PVO is supported by the EPSRC through the Modern Statistics and Statistical Machine Learning (StatML)
CDT programme, grant no. EP/S023151/1.

\bibliography{ref}

@article{saumard2014log,
  title={Log-concavity and strong log-concavity: a review},
  author={Saumard, Adrien and Wellner, Jon A},
  journal={Statistics surveys},
  volume={8},
  pages={45},
  year={2014},
  publisher={NIH Public Access}
}

@article{santambrogio2015optimal,
  title={Optimal transport for applied mathematicians},
  author={Santambrogio, Filippo},
  journal={Birk{\"a}user, NY},
  volume={55},
  number={58-63},
  pages={94},
  year={2015},
  publisher={Springer}
}

@article{akyildiz2024nonasymptotic,
  title={Nonasymptotic analysis of Stochastic Gradient Hamiltonian Monte Carlo under local conditions for nonconvex optimization},
  author={Akyildiz, \"O. Deniz and Sabanis, Sotirios},
  journal={Journal of Machine Learning Research},
  volume={25},
  number={113},
  pages={1--34},
  year={2024}
}

@book{bernardo2009bayesian,
  title={Bayesian theory},
  author={Bernardo, Jos{\'e} M and Smith, Adrian FM},
  volume={405},
  year={2009},
  publisher={John Wiley \& Sons},
  address = {New York}
}

@article{meng1993maximum,
  title={Maximum likelihood estimation via the ECM algorithm: A general framework},
  author={Meng, Xiao-Li and Rubin, Donald B},
  journal={Biometrika},
  volume={80},
  number={2},
  pages={267--278},
  year={1993},
  publisher={Oxford University Press}
}

@article{liu1994ecme,
  title={The ECME algorithm: a simple extension of EM and ECM with faster monotone convergence},
  author={Liu, Chuanhai and Rubin, Donald B},
  journal={Biometrika},
  volume={81},
  number={4},
  pages={633--648},
  year={1994},
  publisher={Oxford University Press}
}

@article{lange1995gradient,
  title={A gradient algorithm locally equivalent to the EM algorithm},
  author={Lange, Kenneth},
  journal={Journal of the Royal Statistical Society: Series B (Methodological)},
  volume={57},
  number={2},
  pages={425--437},
  year={1995},
  publisher={Wiley Online Library}
}

@article{monmarche2021high,
  title={High-dimensional {MCMC} with a standard splitting scheme for the underdamped {L}angevin diffusion.},
  author={Monmarch{\'e}, Pierre},
  journal={Electronic Journal of Statistics},
  volume={15},
  number={2},
  pages={4117--4166},
  year={2021},
  publisher={The Institute of Mathematical Statistics and the Bernoulli Society}
}

@book{billingsley1995probability,
  title={Probability and Measure},
  author={Billingsley, P.},
  isbn={9780471007104},
  lccn={gb95051456},
  series={Wiley Series in Probability and Statistics},
  address={Hoboken},
  year={1995},
  publisher={Wiley}
}

@article{akyildiz2022statistical,
  title={Statistical finite elements via {L}angevin dynamics},
  author={Akyildiz, {\"O}mer Deniz and Duffin, Connor and Sabanis, Sotirios and Girolami, Mark},
  journal={SIAM/ASA Journal on Uncertainty Quantification},
  volume={10},
  number={4},
  pages={1560--1585},
  year={2022},
  publisher={SIAM}
}

@article{akyildiz2025interacting,
author = {Akyildiz, {\"O}. Deniz and Crucinio, Francesca Romana and
Girolami, Mark and Johnston, Tim and Sabanis, Sotirios},
journal = {ESAIM: Probability and Statistics},
title = {Interacting Particle {L}angevin Algorithm for Maximum
Marginal Likelihood Estimation},
year = {2025}
}

@article{johnston2023kinetic,
  title={Kinetic Langevin MCMC Sampling Without Gradient Lipschitz Continuity-the Strongly Convex Case},
  author={Johnston, Tim and Lytras, Iosif and Sabanis, Sotirios},
  journal={Journal of Complexity},
  year={2024},
  publisher={Elsevier}
}

@article{fan2023gradient,
  title={Gradient flows for empirical Bayes in high-dimensional linear models},
  author={Fan, Zhou and Guan, Leying and Shen, Yandi and Wu, Yihong},
  journal={arXiv preprint arXiv:2312.12708},
  year={2023}
}

@article{dempster1977maximum,
  title={Maximum likelihood from incomplete data via the EM algorithm},
  author={Dempster, Arthur P and Laird, Nan M and Rubin, Donald B},
  journal={Journal of the {R}oyal Statistical Society: Series B (Methodological)},
  volume={39},
  number={1},
  pages={1--22},
  year={1977},
  publisher={Wiley Online Library}
}

@inproceedings{lim2023momentum,
  title={Momentum particle maximum likelihood},
  author={Lim, Jen Ning and Kuntz, Juan and Power, Samuel and Johansen, Adam M},
  booktitle={Proceedings of 41st International Conference on Machine Learning (ICML)},
  volume={235},
  year={2024}
}

@article{whiteley2022statistical,
 title={Statistical exploration of the Manifold Hypothesis},
  author={Whiteley, Nick and Gray, Annie and Rubin-Delanchy, Patrick},
  journal={Journal of the Royal Statistical Society: Series B},
  year={2025}
}

@inproceedings{kuntz2023particle,
  title={Particle algorithms for maximum likelihood training of latent variable models},
  author={Kuntz, Juan and Lim, Jen Ning and Johansen, Adam M},
  booktitle={International Conference on Artificial Intelligence and Statistics},
  pages={5134--5180},
  year={2023},
  organization={PMLR}
}

@article{altschuler2023fasterhighaccuracylogconcavesampling, author = {Altschuler, Jason M. and Chewi, Sinho}, title = {Faster High-accuracy Log-concave Sampling via Algorithmic Warm Starts}, year = {2024}, issue_date = {June 2024}, publisher = {Association for Computing Machinery}, address = {New York, NY, USA}, volume = {71}, number = {3}, issn = {0004-5411}, url = {https://doi.org/10.1145/3653446}, doi = {10.1145/3653446}, journal = {J. ACM}, month = jun, articleno = {24}, numpages = {55}, keywords = {Metropolis-adjusted Langevin algorithm, log-concave sampling, proximal sampler, shifted divergence, underdamped Langevin, warm start} }

@article{HOROWITZ1991247,
title = {A generalized guided Monte Carlo algorithm},
journal = {Physics Letters B},
volume = {268},
number = {2},
pages = {247-252},
year = {1991},
issn = {0370-2693},
doi = {https://doi.org/10.1016/0370-2693(91)90812-5},
url = {https://www.sciencedirect.com/science/article/pii/0370269391908125},
author = {Alan M. Horowitz}
}

@InProceedings{coldposterior,
  title = 	 {How Good is the {B}ayes Posterior in Deep Neural Networks Really?},
  author =       {Wenzel, Florian and Roth, Kevin and Veeling, Bastiaan and Swiatkowski, Jakub and Tran, Linh and Mandt, Stephan and Snoek, Jasper and Salimans, Tim and Jenatton, Rodolphe and Nowozin, Sebastian},
  booktitle = 	 {Proceedings of the 37th International Conference on Machine Learning},
  pages = 	 {10248--10259},
  year = 	 {2020},
  editor = 	 {Hal Daumé and Singh, Aarti},
  volume = 	 {119},
  series = 	 {Proceedings of Machine Learning Research},
  month = 	 {July},
  publisher =    {PMLR},
  address = {Online}
}

@article{de2021efficient,
  title={Efficient stochastic optimisation by unadjusted Langevin Monte Carlo},
  author={De Bortoli, Valentin and Durmus, Alain and Pereyra, Marcelo and Vidal, Ana F},
  journal={Statistics and Computing},
  volume={31},
  number={3},
  pages={1--18},
  year={2021},
  publisher={Springer}
}

@article{johnston2024taming,
	author = {Johnston, Tim and Makras, Nikolaos and Sabanis, Sotirios},
	date = {2025/06/13},
	doi = {10.1007/s00245-025-10269-z},
	id = {Johnston2025},
	isbn = {1432-0606},
	journal = {Applied Mathematics \& Optimization},
	number = {3},
	pages = {77},
	title = {Taming the Interacting Particle Langevin Algorithm: The Superlinear case},
	url = {https://doi.org/10.1007/s00245-025-10269-z},
	volume = {91},
	year = {2025}}

@article{gao2022global,
  title={Global convergence of stochastic gradient hamiltonian monte carlo for nonconvex stochastic optimization: Nonasymptotic performance bounds and momentum-based acceleration},
  author={Gao, Xuefeng and G{\"u}rb{\"u}zbalaban, Mert and Zhu, Lingjiong},
  journal={Operations Research},
  volume={70},
  number={5},
  pages={2931--2947},
  year={2022},
  publisher={INFORMS}
}

@inproceedings{sharrock2024tuning,
  title={Tuning-free maximum likelihood training of latent variable models via coin betting},
  author={Sharrock, Louis and Dodd, Daniel and Nemeth, Christopher},
  booktitle={International Conference on Artificial Intelligence and Statistics},
  pages={1810--1818},
  year={2024},
  organization={PMLR}
}

@article{glyn2024statistical,
  title={Statistical finite elements via interacting particle {L}angevin dynamics},
  author={Glyn-Davies, Alex and Duffin, Connor and Kazlauskaite, Ieva and Girolami, Mark and Akyildiz, {\"O} Deniz},
  journal={SIAM/ASA Journal on Uncertainty Quantification},
  volume={13},
  number={3},
  pages={1200--1227},
  year={2025},
  publisher={SIAM}
}

@article{encinar2025proximal,
  title={Proximal Interacting Particle {L}angevin Algorithms},
  author={Cordero-Encinar, Paula and Crucinio, Francesca R and Akyildiz, O Deniz},
  journal={Uncertainty in Artificial Intelligence (UAI)},
  year={2025}
}

@article{akyildiz2024multiscale,
  title={A Multiscale Perspective on Maximum Marginal Likelihood Estimation},
  author={Akyildiz, \"O. Deniz and Ottobre, Michela and Souttar, Iain},
  journal={arXiv preprint arXiv:2406.04187},
  year={2024}
}

@article{dalalyan2019user,
  title={User-friendly guarantees for the {L}angevin {M}onte {C}arlo with inaccurate gradient},
  author={Dalalyan, Arnak S and Karagulyan, Avetik},
  journal={Stochastic Processes and their Applications},
  volume={129},
  number={12},
  pages={5278--5311},
  year={2019},
  publisher={Elsevier}
}

@article{hwang1980laplace,
  title={Laplace's method revisited: weak convergence of probability measures},
  author={Hwang, Chii-Ruey},
  journal={The Annals of Probability},
  pages={1177--1182},
  year={1980},
  publisher={JSTOR}
}

@article{dalalyan2017theoretical,
  title={Theoretical guarantees for approximate sampling from smooth and log-concave densities},
  author={Dalalyan, Arnak S},
  journal={Journal of the Royal Statistical Society: Series B (Statistical Methodology)},
  volume={79},
  number={3},
  pages={651--676},
  year={2017},
  publisher={Wiley Online Library}
}

@article{zhang2023nonasymptotic,
  title={Nonasymptotic estimates for Stochastic Gradient Langevin Dynamics under local conditions in nonconvex optimization},
  author={Zhang, Ying and Akyildiz, {\"O}. Deniz and Damoulas, Theodoros and Sabanis, Sotirios},
  journal={Applied Mathematics \& Optimization},
  volume={87},
  number={2},
  pages={25},
  year={2023},
  publisher={Springer}
}

@article{dalalyan2018sampling,
  title={On sampling from a log-concave density using kinetic Langevin diffusions},
  author={Dalalyan, Arnak S and Riou-Durand, Lionel},
  journal={Bernoulli},
  volume={26},
  number={3},
  pages={1956--1988},
  year={2020}
}

@inproceedings{cheng2018underdamped,
  title={Underdamped {L}angevin {MCMC}: {A} non-asymptotic analysis},
  author={Cheng, Xiang and Chatterji, Niladri S and Bartlett, Peter L and Jordan, Michael I},
  booktitle={Conference On Learning Theory},
  pages={300--323},
  year={2018}
}

@inproceedings{raginsky2017non,
  title={Non-convex learning via {S}tochastic {G}radient {L}angevin {D}ynamics: a nonasymptotic analysis},
  author={Raginsky, Maxim and Rakhlin, Alexander and Telgarsky, Matus},
  booktitle={Conference on Learning Theory},
  pages={1674--1703},
  year={2017}
}

@article{durmus2017nonasymptotic,
  title={Nonasymptotic convergence analysis for the unadjusted {L}angevin algorithm},
  author={Durmus, Alain and Moulines, Eric},
  journal={The Annals of Applied Probability},
  volume={27},
  number={3},
  pages={1551--1587},
  year={2017},
  publisher={Institute of Mathematical Statistics}
}

@book{oksendal2013stochastic,
  title={Stochastic differential equations: an introduction with applications},
  author={Oksendal, Bernt},
  year={2013},
  publisher={Springer Science \& Business Media}
}

@book{kloeden2012numerical,
  title     = "Numerical solution of {SDE} through computer experiments",
  author    = "Kloeden, Peter Eris and Platen, Eckhard and Schurz, Henri",
  publisher = "Springer",
  series    = "Universitext",
  edition   =  1,
  month     =  dec,
  year      =  1993,
  address   = "Berlin, Germany",
  language  = "en"
}

@article{atchade2017,
  author  = {Yves F. Atchad{{\'e}} and Gersende Fort and Eric Moulines},
  title   = {On Perturbed Proximal Gradient Algorithms},
  journal = {Journal of Machine Learning Research},
  year    = {2017},
  volume  = {18},
  number  = {10},
  pages   = {1--33},
  url     = {http://jmlr.org/papers/v18/15-038.html}
}

@book{Pavliotis_2014, address={New York}, title={Stochastic processes and applications: Diffusion Processes, the Fokker-Planck and langevin equations}, publisher={Springer}, author={Pavliotis, Grigorios A.}, year={2014}}

@article{durmus2019high,
  title={High-dimensional {B}ayesian inference via the unadjusted {L}angevin algorithm},
  author={Durmus, Alain and Moulines, Eric},
  journal={Bernoulli},
  volume={25},
  number={4A},
  pages={2854--2882},
  year={2019},
  publisher={Bernoulli Society for Mathematical Statistics and Probability}
}

@article{Nesterovacc,
author = {Yi-An Ma and Niladri S. Chatterji and Xiang Cheng and Nicolas Flammarion and Peter L. Bartlett and Michael I. Jordan},
title = {{Is there an analog of Nesterov acceleration for gradient-based MCMC?}},
volume = {27},
journal = {Bernoulli},
number = {3},
publisher = {Bernoulli Society for Mathematical Statistics and Probability},
pages = {1942 -- 1992},
keywords = {accelerated gradient descent, Langevin Monte Carlo, Markov chain Monte Carlo, sampling algorithms},
year = {2021},
doi = {10.3150/20-BEJ1297},
URL = {https://doi.org/10.3150/20-BEJ1297}
}

@article{blei2003latent,
  title={Latent {Dirichlet} allocation},
  author={Blei, David M and Ng, Andrew Y and Jordan, Michael I},
  journal={Journal of Machine Learning Research},
  volume={3},
  number={Jan},
  pages={993--1022},
  year={2003}
}

@article{smaragdis2006probabilistic,
  title={A probabilistic latent variable model for acoustic modeling},
  author={Smaragdis, Paris and Raj, Bhiksha and Shashanka, Madhusudana},
  journal={Advances in Models for Acoustic Processing Workshop, NIPS},
  volume={148},
  pages={8--1},
  year={2006}
}

@article{hoff2002latent,
  title={Latent space approaches to social network analysis},
  author={Hoff, Peter D and Raftery, Adrian E and Handcock, Mark S},
  journal={Journal of the American Statistical association},
  volume={97},
  number={460},
  pages={1090--1098},
  year={2002},
  publisher={Taylor \& Francis}
}

@article{wei1990monte,
  title={A {Monte Carlo implementation of the EM} algorithm and the poor man's data augmentation algorithms},
  author={Wei, Greg CG and Tanner, Martin A},
  journal={Journal of the American Statistical Association},
  volume={85},
  number={411},
  pages={699--704},
  year={1990},
  publisher={Taylor \& Francis}
}

@article{celeux1985sem,
  title={The {SEM} algorithm: a probabilistic teacher algorithm derived from the EM algorithm for the mixture problem},
  author={Celeux, Gilles},
  journal={Computational Statistics Quarterly},
  volume={2},
  pages={73--82},
  year={1985}
}

@article{celeux1992stochastic,
  title={A stochastic approximation type {EM} algorithm for the mixture problem},
  author={Celeux, Gilles and Diebolt, Jean},
  journal={Stochastics: An International Journal of Probability and Stochastic Processes},
  volume={41},
  number={1-2},
  pages={119--134},
  year={1992},
  publisher={Taylor \& Francis}
}

@article{chan1995monte,
  title={Monte {Carlo EM} estimation for time series models involving counts},
  author={Chan, KS and Ledolter, Johannes},
  journal={Journal of the American Statistical Association},
  volume={90},
  number={429},
  pages={242--252},
  year={1995},
  publisher={Taylor \& Francis}
}

@article{sherman1999conditions,
  title={Conditions for convergence of {Monte Carlo EM} sequences with an application to product diffusion modeling},
  author={Sherman, Robert P and Ho, Yu-Yun K and Dalal, Siddhartha R},
  journal={The Econometrics Journal},
  volume={2},
  number={2},
  pages={248--267},
  year={1999},
  publisher={Oxford University Press Oxford, UK}
}

@article{booth1999maximizing,
  title={Maximizing generalized linear mixed model likelihoods with an automated {Monte Carlo EM} algorithm},
  author={Booth, James G and Hobert, James P},
  journal={Journal of the Royal Statistical Society: Series B (Statistical Methodology)},
  volume={61},
  number={1},
  pages={265--285},
  year={1999},
  publisher={Wiley Online Library}
}

@article{cappe1999simulation,
  title={Simulation-based methods for blind maximum-likelihood filter identification},
  author={Capp{\'e}, Olivier and Doucet, Arnaud and Lavielle, Marc and Moulines, Eric},
  journal={Signal Processing},
  volume={73},
  number={1-2},
  pages={3--25},
  year={1999},
  publisher={Elsevier}
}

@incollection{diebolt1995stochastic,
  title={A stochastic {EM} algorithm for approximating the maximum likelihood estimate},
  author={Diebolt, J and Ip, E HS},
  year={1996},
  booktitle = {Markov Chain Monte Carlo in Practice},
  editor = {W. R. Gilks and S. T. Richardson and D. J. Spiegelhalter},
  publisher={CRC Publishers},
  address={Boca Raton}
}

@article{caprio2024error,
  title={Error bounds for particle gradient descent, and extensions of the log-Sobolev and Talagrand inequalities},
  author={Caprio, Rocco and Kuntz, Juan and Power, Samuel and Johansen, Adam M},
  journal={Journal of Machine Learning Research},
  volume={26},
  number={103},
  pages={1--38},
  year={2025}
}

\newpage
\appendix
\addcontentsline{toc}{section}{Appendices}

\renewcommand{\thetheorem}{\thesection.\arabic{theorem}}
\renewcommand{\thelemma}{\thesection.\arabic{lemma}}
\renewcommand{\theproposition}{\thesection.\arabic{proposition}}
\renewcommand{\theremark}{\thesection.\arabic{remark}}
\renewcommand{\thecorollary}{\thesection.\arabic{corollary}}

\vbox{%
  \vskip 0.25in
  \vskip -\parskip%
    \hsize\textwidth
    \vskip 0.1in
    \centering
    {\LARGE\bf Appendix\par}
 \vskip 0.29in
  \vskip -\parskip
  \vskip 0.09in%
  }
\section{Preliminary results}

\setcounter{theorem}{0}
\renewcommand{\thetheorem}{A.\arabic{theorem}}
\setcounter{lemma}{0}
\renewcommand{\thelemma}{A.\arabic{lemma}}
\setcounter{proposition}{0}
\renewcommand{\theproposition}{A.\arabic{proposition}}
\setcounter{remark}{0}
\renewcommand{\theremark}{A.\arabic{remark}}

In this section, we prove a key result that rewrites the \gls*{kipld} as a single underdamped Langevin diffusion evolving in $\bR^{d_\theta + N d_x}$ with a rescaled potential function and friction coefficient. This will be useful in the analysis of the \gls*{kipld} and its discretisation, as it allows us to leverage existing results on underdamped Langevin diffusions.
\begin{proposition}\label{prop:kipld_standard_ud}
Let $$(\bm{\theta}_t,\X_t^1,\ldots,\X_t^N,\V_t^\theta,\V_t^{x_1},\ldots,\V_t^{x_N})_{t\geq 0}$$ solve \eqref{eq:KIPLD}. Define
\begin{align*}
\widetilde{\bm{\theta}}_t &:= \bm{\theta}_{\sqrt{N}t}, \qquad
\widetilde{\X}_t^i := N^{-1/2}\X^i_{\sqrt{N}t}, \\
\widetilde{\V}_t^\theta &:= \sqrt{N}\V^\theta_{\sqrt{N}t}, \qquad
\widetilde{\V}_t^{x_i} := \V^{x_i}_{\sqrt{N}t},
\end{align*}
for $i\in[N]$, and set
\begin{align*}
\widetilde{\Z}_t := (\widetilde{\bm{\theta}}_t,\widetilde{\X}_t^1,\ldots,\widetilde{\X}_t^N)^\intercal \in \bR^{d_\theta+Nd_x}, 
\widetilde{\V}_t^z := (\widetilde{\V}_t^\theta,\widetilde{\V}_t^{x_1},\ldots,\widetilde{\V}_t^{x_N})^\intercal \in \bR^{d_\theta+Nd_x},
\end{align*}
and
\begin{align}
\bar U_N(\theta,x_1,\ldots,x_N) &:= \frac{1}{N}\sum_{i=1}^N U(\theta,\sqrt{N}\,x_i), \label{eq:barU}
\end{align}
and $\widetilde{\gamma} := \sqrt{N}\gamma$. Then there exists a $\bR^{d_\theta+Nd_x}$-valued Brownian motion $(\widetilde{\B}_t)_{t\ge 0}$ such that
\begin{align}
\md \widetilde{\Z}_t &= \widetilde{\V}_t^z\,\md t, \label{eq:simplesde_scaled} \\
\md \widetilde{\V}_t^z &= -\widetilde{\gamma}\,\widetilde{\V}_t^z\,\md t
- N\nabla_z \bar U_N(\widetilde{\Z}_t)\,\md t
+ \sqrt{2\widetilde{\gamma}}\,\md \widetilde{\B}_t. \nonumber
\end{align}
\end{proposition}
\begin{proof}
By definition of the rescaled variables, we have
\begin{align*}
\md \widetilde{\bm{\theta}}_t
&= \sqrt{N}\,\md \bm{\theta}_{\sqrt{N}t}
= \sqrt{N}\,\V^\theta_{\sqrt{N}t}\md t
= \widetilde{\V}_t^\theta \md t.
\end{align*}
Fix $i\in[N]$ and set $s(t):=\sqrt{N}t$. From \eqref{eq:KIPLD}, $\md \X_t^i=\V_t^{x_i}\md t$, so
\begin{align*}
\md \X^i_{s(t)}
= \V^{x_i}_{s(t)}\md s(t)
= \sqrt{N}\,\V^{x_i}_{\sqrt{N}t}\md t.
\end{align*}
Hence
\begin{align*}
\md \widetilde{\X}_t^i
= N^{-1/2}\,\md \X^i_{\sqrt{N}t}
= N^{-1/2}\,\sqrt{N}\,\V^{x_i}_{\sqrt{N}t}\md t
= \widetilde{\V}_t^{x_i}\md t,
\end{align*}
and therefore $\md \widetilde{\Z}_t=\widetilde{\V}_t^z\md t$. Next, define $\widetilde{\B}_t^j:=N^{-1/4}\B_{\sqrt{N}t}^j$ for $j=0,\ldots,N$; then each $\widetilde{\B}_t^j$ is a Brownian motion and
\(
\md \B_{\sqrt{N}t}^j = N^{1/4}\md \widetilde{\B}_t^j.
\)
Using \eqref{eq:KIPLD}, set $s(t):=\sqrt{N}t$. Then
\small
\begin{align*}
\md \widetilde{\V}_t^\theta
&= \sqrt{N}\,\md \V^\theta_{s(t)}\\
&= \sqrt{N}\left[
\left(-\gamma \V^\theta_{s(t)}
- \frac{1}{N}\!\sum_{i=1}^N \nabla_\theta U(\bm{\theta}_{s(t)},\X^i_{s(t)})\right)\md s(t)
+ \sqrt{\frac{2\gamma}{N}}\,\md \B^0_{s(t)}
\right]\\
&= -N\gamma \V^\theta_{\sqrt{N}t}\md t
- \sum_{i=1}^N \nabla_\theta U(\bm{\theta}_{\sqrt{N}t},\X^i_{\sqrt{N}t})\md t  + \sqrt{2\gamma\sqrt{N}}\,\md \widetilde{\B}_t^0\\
&= -\widetilde{\gamma}\,\widetilde{\V}_t^\theta\md t
- N\nabla_\theta \bar{U}_N(\widetilde{\Z}_t)\md t
+ \sqrt{2\widetilde{\gamma}}\,\md \widetilde{\B}_t^0,
\end{align*}
\normalsize
where we used $\md s(t)=\sqrt{N}\md t$ and
\[
N\nabla_\theta \bar{U}_N(\widetilde{\Z}_t)
= \sum_{i=1}^N \nabla_\theta U(\bm{\theta}_{\sqrt{N}t},\X^i_{\sqrt{N}t}).
\]
Fix $i\in[N]$. Using again $s(t)=\sqrt{N}t$ and \eqref{eq:KIPLD},
\small
\begin{align*}
\md \widetilde{V}_t^{x_i}
&= \md \V^{x_i}_{s(t)}\\
&= \left(-\gamma \V^{x_i}_{s(t)}
- \nabla_x U(\bm{\theta}_{s(t)},\X^i_{s(t)})\right)\md s(t)  + \sqrt{2\gamma}\,\md \B^i_{s(t)} \\
&= -\sqrt{N}\gamma \V^{x_i}_{\sqrt{N}t}\md t
- \sqrt{N}\,\nabla_x U(\bm{\theta}_{\sqrt{N}t},\X^i_{\sqrt{N}t})\md t
 + \sqrt{2\gamma\sqrt{N}}\,\md \widetilde{\B}_t^i\\
&= -\widetilde{\gamma}\,\widetilde{\V}_t^{x_i}\md t
- N\nabla_{z_i}\bar{U}_N(\widetilde{\Z}_t)\md t
+ \sqrt{2\widetilde{\gamma}}\,\md \widetilde{\B}_t^i,
\end{align*}
\normalsize
where
\[
N\nabla_{z_i}\bar{U}_N(\widetilde{\Z}_t)
= \sqrt{N}\,\nabla_x U(\bm{\theta}_{\sqrt{N}t},\X^i_{\sqrt{N}t}).
\]
Stacking coordinates and defining
\(
\widetilde{\B}_t
= (\widetilde{\B}_t^0,\widetilde{\B}_t^1,\ldots,\widetilde{\B}_t^N)^\intercal
\)
gives \eqref{eq:simplesde_scaled}.
\end{proof}
Next we show that the stationary measure of the dynamics in \eqref{eq:simplesde_scaled} is given by a Gibbs measure with potential function $N\bar{U}_N$.
\begin{lemma}[Stationary measure for \eqref{eq:simplesde_scaled}]\label{lem:scaledstat}
The measure $\tilde{\pi}$ invariant for the dynamics in \eqref{eq:simplesde_scaled} is given as
\begin{equation}
\tilde{\pi}(\md \tilde{z}, \md \tilde{v}) \propto \exp\left(-N\bar{U}_N(\tilde{z}) -\frac{1}{2}\|\tilde{v}\|^2\right)\md \tilde{z} \md \tilde{v},
\end{equation}
for all $\tilde{z},\tilde{v}\in\mathbb{R}^{d_z}$.
\end{lemma}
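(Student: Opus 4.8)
The plan is to recognise the system \eqref{eq:simplesde_scaled} as a textbook instance of the kinetic Langevin diffusion \eqref{eq:underdamped} and then invoke the known form of its invariant measure \eqref{eq:underdamped_invariant}. Comparing the two systems, the position–momentum pair $(\widetilde{\Z}_t,\widetilde{\V}_t^z)$ plays the role of $(\Z_t,\V_t)$, the friction is $\widetilde{\gamma}$, and the potential is the rescaled function $z\mapsto N\bar{U}_N(z)$: the momentum drift in \eqref{eq:simplesde_scaled} is exactly $-\widetilde{\gamma}\,\widetilde{\V}_t^z - \nabla_z\bigl[N\bar{U}_N(\widetilde{\Z}_t)\bigr]$, and the diffusion coefficient $\sqrt{2\widetilde{\gamma}}$ respects the fluctuation--dissipation scaling of \eqref{eq:underdamped}. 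Hence the claimed density $\tilde{\pi}\propto\exp\bigl(-N\bar{U}_N(\tilde z)-\tfrac12\|\tilde v\|^2\bigr)$ is precisely \eqref{eq:underdamped_invariant} with $U$ replaced by $N\bar{U}_N$, which proves the lemma.

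To make this self-contained I would verify directly that $\tilde{\pi}$ is annihilated by the forward Kolmogorov (Fokker--Planck) operator. Writing the generator
\[
\cL f = \widetilde v\cdot\nabla_{\tilde z} f - \bigl(\widetilde{\gamma}\,\widetilde v + N\nabla_{\tilde z}\bar{U}_N(\tilde z)\bigr)\cdot\nabla_{\tilde v} f + \widetilde{\gamma}\,\Delta_{\tilde v} f,
\]
with adjoint $\cL^\ast$, I would check $\cL^\ast\tilde{\pi}=0$. Using the identities $\nabla_{\tilde z}\tilde{\pi}=-N\nabla\bar{U}_N\,\tilde{\pi}$, $\nabla_{\tilde v}\tilde{\pi}=-\widetilde v\,\tilde{\pi}$ and $\Delta_{\tilde v}\tilde{\pi}=(\|\widetilde v\|^2-d_z)\tilde{\pi}$, the transport (Hamiltonian) contribution is $\widetilde v\cdot N\nabla\bar{U}_N\,\tilde{\pi} - N\nabla\bar{U}_N\cdot\widetilde v\,\tilde{\pi}=0$, while the Ornstein--Uhlenbeck contribution is $\widetilde{\gamma}(d_z-\|\widetilde v\|^2)\tilde{\pi} + \widetilde{\gamma}(\|\widetilde v\|^2-d_z)\tilde{\pi}=0$; summing the two gives $\cL^\ast\tilde{\pi}=0$. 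The structural point worth emphasising is the exact cancellation between the antisymmetric transport terms in $\tilde z$ and the fluctuation--dissipation balance in the momentum variable $\tilde v$.

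The only ingredient beyond this routine algebra is checking that $\tilde{\pi}$ is a genuine probability measure, i.e.\ that $\exp(-N\bar{U}_N(\tilde z))$ is integrable on $\bR^{d_z}$. This follows from Assumption~\ref{assump:strongconvexity}: strong convexity of $U$ is inherited jointly by $\bar{U}_N(\tilde z)=\frac1N\sum_{i=1}^N U(\theta,z_i)$, yielding a quadratic lower bound on $N\bar{U}_N$ and hence integrability of the candidate density. I expect this normalisation remark to be the only nontrivial point; the verification of stationarity itself is a direct generator computation, and in fact the cleanest exposition simply appeals to \eqref{eq:underdamped}--\eqref{eq:underdamped_invariant} and records that \eqref{eq:simplesde_scaled} fits that template with potential $N\bar{U}_N$ and friction $\widetilde{\gamma}$.
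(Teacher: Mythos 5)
Your proposal is correct and takes essentially the same route as the paper: the paper's proof simply observes, via Lemma~\ref{lem:kipld_standard_ud}, that \eqref{eq:simplesde_scaled} is a standard underdamped Langevin diffusion with potential $N\bar{U}_N$ and friction $\widetilde{\gamma}$, and then cites the textbook invariant-measure result (Pavliotis, Chapter~6), which is exactly your identification with \eqref{eq:underdamped}--\eqref{eq:underdamped_invariant}. Your explicit Fokker--Planck cancellation and the integrability remark (via the strong convexity of $\bar{U}_N$, cf.\ Lemma~\ref{lem:strongconvexity_barU}) merely spell out what the cited reference contains, so there is no substantive difference in approach.
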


This result follows directly by observing the rescaling given in Proposition~\ref{prop:kipld_standard_ud} and the standard form of the stationary measure for underdamped Langevin diffusions. We now show that $\bar{U}_N$ is $\mu$-strongly convex under \ref{assump:strongconvexity}.

\begin{lemma}[Strong convexity of $\bar{U}_N$]\label{lem:strongconvexity_barU} Under \ref{assump:strongconvexity}, the function $\bar{U}_N: \bR^{d_\theta + Nd_x} \to \bR$ as defined in \eqref{eq:barU} is $\mu$-strongly convex
\begin{align*}
    \langle z-z^\prime, \nabla \bar{U}_N(z) - \nabla \bar{U}_N(z^\prime)\rangle \geq \mu \|z-z^\prime\|^2,
\end{align*}
for all $z, z^\prime \in \bR^{d_\theta + Nd_x}$ and $N \in \bN$.
\end{lemma}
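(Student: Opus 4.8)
The plan is to verify the strong-convexity inequality directly from the definition, exploiting the separable block structure of $\bar U_N$ in the coordinates $(z_\theta, z_1, \ldots, z_N)$. Writing $z = (z_\theta, z_1, \dots, z_N)$ and $z' = (z_\theta', z_1', \dots, z_N')$, I would first split the inner product along the parameter block and the $N$ position blocks,
\begin{align*}
\langle z - z', \nabla \bar U_N(z) - \nabla \bar U_N(z')\rangle
={}& \langle z_\theta - z_\theta', \nabla_{z_\theta}\bar U_N(z) - \nabla_{z_\theta}\bar U_N(z')\rangle \\
&+ \sum_{i=1}^N \langle z_i - z_i', \nabla_{z_i}\bar U_N(z) - \nabla_{z_i}\bar U_N(z')\rangle,
\end{align*}
and then substitute the explicit gradient formulas from \eqref{eq:gradbarU}, namely $\nabla_{z_\theta}\bar U_N = \frac{1}{N}\sum_i \nabla_\theta U(z_\theta, \sqrt N z_i)$ and $\nabla_{z_i}\bar U_N = N^{-1/2}\nabla_x U(z_\theta, \sqrt N z_i)$.

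The key step is a change of variables $w_i := \sqrt N z_i$ (and $w_i' := \sqrt N z_i'$), which turns the scaled arguments into natural arguments of $U$ and makes each index $i$ contribute a complete joint inner product for $U$. Grouping the parameter contribution with the $i$-th position contribution, and using $z_i - z_i' = N^{-1/2}(w_i - w_i')$ to absorb the $N^{-1/2}$ coming from the position gradient, the $i$-th summand reduces to $\tfrac{1}{N}\langle (z_\theta, w_i) - (z_\theta', w_i'), \nabla U(z_\theta, w_i) - \nabla U(z_\theta', w_i')\rangle$. Assumption~\ref{assump:strongconvexity} then bounds this bracket below by $\mu(\|z_\theta - z_\theta'\|^2 + \|w_i - w_i'\|^2) = \mu(\|z_\theta - z_\theta'\|^2 + N\|z_i - z_i'\|^2)$.

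Summing over $i$ and applying the $1/N$ prefactor yields $\mu(\|z_\theta - z_\theta'\|^2 + \sum_i \|z_i - z_i'\|^2) = \mu\|z - z'\|^2$, which is the claim. The only real care needed -- and the step I expect to be the main (indeed essentially the sole) obstacle -- is the bookkeeping of the three interacting scalings: the $1/N$ averaging in $\bar U_N$, the $N^{-1/2}$ in the position gradients of \eqref{eq:gradbarU}, and the $\sqrt N$ in the change of variables. In particular one must notice that the shared parameter coordinate $z_\theta$ accrues the term $\mu\|z_\theta - z_\theta'\|^2$ in each of the $N$ summands, and that it is precisely the $1/N$ factor that collapses these $N$ identical contributions back to a single $\mu\|z_\theta - z_\theta'\|^2$, while in the position directions the factor $N$ from $\|w_i - w_i'\|^2 = N\|z_i - z_i'\|^2$ cancels the $1/N$, leaving $\mu\|z_i - z_i'\|^2$. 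Because these cancellations are exact, the strong-convexity constant is preserved at $\mu$ independently of $N$, as stated.
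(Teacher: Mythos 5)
Your proposal is correct and follows essentially the same route as the paper's proof: both expand the inner product blockwise using the gradient formulas in \eqref{eq:gradbarU}, regroup the parameter and $i$-th latent contributions so that each index $i$ yields a joint inner product for $U$ at the rescaled points $(z_\theta,\sqrt{N}z_i)$, apply \ref{assump:strongconvexity} termwise, and observe the exact cancellation of the $1/N$ and $N$ factors. Your change of variables $w_i=\sqrt{N}z_i$ is only a notational repackaging of the paper's step of writing $\langle N^{1/2}z_i - N^{1/2}z_i^\prime,\ \nabla_x U(z_\theta,\sqrt{N}z_i)-\nabla_x U(z_\theta^\prime,\sqrt{N}z_i^\prime)\rangle$, so the two arguments are the same.
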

\begin{proof}
Let
\[
z=(z_\theta,z_1,\dots,z_N),\qquad
z'=(z_\theta',z_1',\dots,z_N').
\]
From
\[
\bar U_N(z)=\frac1N\sum_{i=1}^N U(z_\theta,\sqrt N\,z_i),
\]
we have
\begin{align*}
\nabla_{z_\theta}\bar U_N(z) =\frac1N\sum_{i=1}^N \nabla_\theta U(z_\theta,\sqrt N z_i), \quad
\nabla_{z_i}\bar U_N(z) = \frac1{\sqrt N}\nabla_x U(z_\theta,\sqrt N z_i).
\end{align*}
Hence
\begin{align*}
\langle z-z', \nabla \bar U_N(z)-\nabla \bar U_N(z')\rangle =\frac1N\sum_{i=1}^N
\langle z_\theta-z_\theta',\Delta_{\theta,i}\rangle
+\frac1{\sqrt N}\sum_{i=1}^N
\langle z_i-z_i',\Delta_{x,i}\rangle,
\end{align*}
where
\[
\Delta_{\theta,i}:=\nabla_\theta U(z_\theta,\sqrt N z_i)-\nabla_\theta U(z_\theta',\sqrt N z_i'),
\]
\[
\Delta_{x,i}:=\nabla_x U(z_\theta,\sqrt N z_i)-\nabla_x U(z_\theta',\sqrt N z_i').
\]
Now apply Assumption~\ref{assump:strongconvexity} to the two points
\[
(z_\theta,\sqrt N z_i),\qquad (z_\theta',\sqrt N z_i').
\]
For each \(i\),
\begin{align*}
\langle z_\theta-z_\theta',\Delta_{\theta,i}\rangle
+\sqrt N\,\langle z_i-z_i',\Delta_{x,i}\rangle 
\ge \mu\big(\|z_\theta-z_\theta'\|^2+N\|z_i-z_i'\|^2\big).
\end{align*}
Divide by \(N\) and sum over \(i\):
\begin{align*}
\langle z-z', \nabla \bar U_N(z)- \nabla \bar U_N(z')\rangle \ge
\mu\|z_\theta-z_\theta'\|^2+\mu\sum_{i=1}^N\|z_i-z_i'\|^2
=\mu\|z-z'\|^2.
\end{align*}
So \(\bar U_N\) is \(\mu\)-strongly convex.
\end{proof}

Next, we show that the function $\bar{U}_N$ is $L$-gradient Lipschitz.
\begin{lemma}[Gradient Lipschitzness of $\bar{U}_N$]\label{lem:gradientlipschitz_barU}
Under \ref{assump:lipschitz}, the function $\bar{U}_N: \bR^{d_\theta + Nd_x} \to \bR$ as defined in \eqref{eq:barU} is $L$-gradient Lipschitz
\begin{align*}
    \|\nabla \bar{U}_N(z) - \nabla \bar{U}_N(z^\prime)\| \leq L \|z-z^\prime\|,
\end{align*}
for all $z, z^\prime \in \bR^{d_\theta + Nd_x}$ and $N \in \bN$.
\end{lemma}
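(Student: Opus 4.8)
The plan is to mirror the structure of the strong-convexity argument in Lemma~\ref{lem:strongconvexity_barU}, exploiting the same block decomposition of $\nabla \bar{U}_N$ recorded in \eqref{eq:gradbarU}. Writing $z = (z_\theta, z_1, \ldots, z_N)$ and $z' = (z_\theta', z_1', \ldots, z_N')$, I would first expand the squared norm $\|\nabla \bar{U}_N(z) - \nabla \bar{U}_N(z')\|^2$ into its $\theta$-block and its $N$ latent blocks. For brevity set
\begin{equation*}
a_i = \nabla_\theta U(z_\theta, \sqrt{N} z_i) - \nabla_\theta U(z_\theta', \sqrt{N} z_i'), \qquad b_i = \nabla_x U(z_\theta, \sqrt{N} z_i) - \nabla_x U(z_\theta', \sqrt{N} z_i').
\end{equation*}
From \eqref{eq:gradbarU} the $\theta$-block of the gradient difference is $\frac{1}{N}\sum_i a_i$ and the $i$-th latent block is $N^{-1/2} b_i$.

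The crucial estimate comes from applying \ref{assump:lipschitz} to $U$ at the points $(z_\theta, \sqrt{N} z_i)$ and $(z_\theta', \sqrt{N} z_i')$, which, since $\nabla U = (\nabla_\theta U, \nabla_x U)$, yields $\|a_i\|^2 + \|b_i\|^2 \leq L^2(\|z_\theta - z_\theta'\|^2 + N\|z_i - z_i'\|^2)$ --- note the factor $N$ arising from the $\sqrt{N}$ scaling of the latent arguments. Next I would control the $\theta$-block by Jensen's inequality (convexity of the squared norm), $\|\frac{1}{N}\sum_i a_i\|^2 \le \frac{1}{N}\sum_i \|a_i\|^2$, while the latent blocks contribute exactly $\sum_i N^{-1}\|b_i\|^2$. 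Summing the two contributions gives $\|\nabla \bar{U}_N(z) - \nabla \bar{U}_N(z')\|^2 \le \frac{1}{N}\sum_i (\|a_i\|^2 + \|b_i\|^2)$.

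Finally, substituting the Lipschitz bound and summing over $i$, the $N$ factors cancel precisely as in Lemma~\ref{lem:strongconvexity_barU}: the $\theta$-terms yield $L^2\|z_\theta - z_\theta'\|^2$ (the prefactor $1/N$ cancelling the $N$ identical copies), and the latent terms yield $L^2 \sum_i \|z_i - z_i'\|^2$ (the prefactor $1/N$ cancelling the $N$ inside the bound). Together these sum to $L^2\|z - z'\|^2$, and taking square roots delivers the claim. I do not expect a genuine obstacle here; the only point requiring care is the bookkeeping of the $N$-scalings, ensuring that the $1/N$ from averaging, the $N^{-1/2}$ from the change of variables $\Z^i = N^{-1/2}\X^i$, and the $N$ produced by differentiating $U(\cdot, \sqrt{N}\,\cdot)$ in its second argument balance to leave a dimension-free constant $L$.
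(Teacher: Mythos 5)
Your proposal is correct and takes essentially the same route as the paper's proof: the same block decomposition of $\nabla \bar{U}_N$ from \eqref{eq:gradbarU}, Jensen's inequality on the averaged $\theta$-block, and application of \ref{assump:lipschitz} at the rescaled points $(z_\theta, \sqrt{N}z_i)$, $(z_\theta', \sqrt{N}z_i')$ so that the $1/N$ and $N$ factors cancel to give the dimension-free constant $L$. If anything, you spell out more explicitly than the paper the joint bound $\|a_i\|^2 + \|b_i\|^2 \leq L^2(\|z_\theta - z_\theta'\|^2 + N\|z_i - z_i'\|^2)$, which the paper invokes implicitly in its final step.
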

\begin{proof}
Let $z = (z_\theta, z_1, \ldots, z_N)$ and $z^\prime = (z_\theta^\prime, z_1^\prime, \ldots, z_N^\prime)$. Then
\begin{align*}
\|\nabla \bar{U}_N(z) - \nabla \bar{U}_N(z^\prime)\|^2 
=&  \left\| \frac{1}{N} \sum_{i=1}^N  \left(\nabla_\theta U(z_\theta, \sqrt{N}z_i) - \nabla_\theta U(z_\theta^\prime, \sqrt{N}z_i^\prime)\right)\right\|^2\\
&+ \frac{1}{N} \sum_{i=1}^N \left\|\nabla_x U(z_\theta, \sqrt{N}z_i) - \nabla_x U(z_\theta^\prime, \sqrt{N}z_i^\prime)\right\|^2.
\end{align*}
Using the fact that $\|\cdot\|^2$ is a convex function and Jensen's inequality for the first term, we obtain
\begin{align*}
\|\nabla \bar{U}_N(z) - \nabla \bar{U}_N(z^\prime)\|^2 \leq& \frac{1}{N} \sum_{i=1}^N \left\|\nabla_\theta U(z_\theta, \sqrt{N}z_i) - \nabla_\theta U(z_\theta^\prime, \sqrt{N}z_i^\prime)\right\|^2\\
&+ \frac{1}{N} \sum_{i=1}^N \left\|\nabla_x U(z_\theta, \sqrt{N}z_i) - \nabla_x U(z_\theta^\prime, \sqrt{N}z_i^\prime)\right\|^2.
\end{align*}
Using \ref{assump:lipschitz}, we have
\begin{align*}
\|\nabla \bar{U}_N(z) - \nabla \bar{U}_N(z^\prime)\|^2 \leq\frac{L^2}{N} \sum_{i=1}^N \left(\|z_\theta - z_\theta^\prime\|^2 + N \|z_i - z_i^\prime\|^2\right)= L^2 \|z - z^\prime\|^2,
\end{align*}
which completes the proof.
\end{proof}

\section{Equivalence of algorithms}\label{app:algderiv}
We prove in Proposition~\ref{prop:kipld_standard_ud} that the \gls*{kipld} can be rewritten as a single underdamped Langevin diffusion evolving in $\bR^{d_\theta + N d_x}$. This brings the natural question about the algorithms we derived, namely, \gls*{kiplmc1} and \gls*{kiplmc2}, which are discretisations of the KIPLD, and how they relate to the standard discretisation schemes the rescaled diffusion in \eqref{eq:simplesde_scaled}. We show below that the standard discretisations of \eqref{eq:simplesde_scaled} are precisely the \gls*{kiplmc1} and \gls*{kiplmc2} algorithms we defined in Section~\ref{sec:algorithms}. This will later allow us to leverage existing results on the convergence of these standard discretisation schemes to obtain convergence results for our algorithms.

\glsunset{klmc1}

\subsection{KIMPLC1}\label{app:algderivKIPLMC1}
We show below the exponential integrator algorithm, presented in \citet{dalalyan2018sampling}, for our diffusion \eqref{eq:KIPLD} matches the \gls*{kiplmc1} algorithm we defined in Section~\ref{sec:algorithms}.

\begin{proposition}\label{prop:kipld_kiplmc1}
Let the sequence $(\widetilde{Z}_k,\widetilde{V}_k^z)_{k\geq 0}$ be the exponential-integrator discretisation of the rescaled underdamped diffusion \eqref{eq:simplesde_scaled} with friction parameter $\widetilde{\gamma}=\sqrt{N}\gamma$ and step-size $\widetilde{\eta}=\eta/\sqrt{N}$. Define the unscaled variables by
\begin{align*}
\widetilde{\theta}_k &= \theta_k,\qquad \widetilde{X}_k^i = N^{-1/2}X_k^i,\\
\widetilde{V}_k^\theta &= \sqrt{N}V_k^\theta,\qquad \widetilde{V}_k^{x_i} = V_k^{x_i},
\end{align*}
for each $i\in[N]$. Then, after the corresponding rescaling of the Gaussian noise variables, the resulting recursion in $(\theta_k,X_k^i,V_k^\theta,V_k^{x_i})_{i\in[N]}$ coincides exactly with the \gls*{kiplmc1} scheme defined in Section~\ref{subsec:kiplmc1}.
\end{proposition}
\begin{proof}
 Let us recall that, by Proposition~\ref{prop:kipld_standard_ud}, we can rewrite \eqref{eq:KIPLD} as the standard underdamped \gls*{sde}
\begin{align*}
\md \widetilde{\Z}_t &= \widetilde{\V}_t^z\,\md t, \\
\md \widetilde{\V}_t^z &= -\widetilde{\gamma}\,\widetilde{\V}_t^z\,\md t
- N\nabla_z \bar U_N(\widetilde{\Z}_t)\,\md t
+ \sqrt{2\widetilde{\gamma}}\,\md \widetilde{\B}_t,
\end{align*}
where $\tilde{\gamma} = \sqrt{N} \gamma$ is the friction parameter and we also implicitly adjust its initial conditioned to be rescaled. We discretise this diffusion using the exponential integrator scheme (termed \gls*{klmc1}) developed by \citet{dalalyan2018sampling}. In order to do this, define a sequence of functions $\tilde{\psi}_{k+1}^t = \int_0^t \tilde{\psi}_k^s \md s$ where $\tilde{\psi}_0^t = e^{-\tilde{\gamma} t}$. With this definition, the exponential integrator scheme with step-size $\tilde{\eta}$ is defined as
\begin{align*}
\widetilde{Z}_{k+1} &= \widetilde{Z}_k + \tilde{\psi}_1^{\tilde{\eta}} \tilde{V}_k - \tilde{\psi}_2^{\tilde{\eta}} N\nabla_z \bar{U}_N(\widetilde{Z}_k) + \sqrt{2\tilde{\gamma}}\zeta_{k+1}^\prime\\
\widetilde{V}_{k+1} &= \tilde{\psi}_0^{\tilde{\eta}} \widetilde{V}_k - \tilde{\psi}_1^{\tilde{\eta}} N \nabla_z \bar{U}_N(\tilde{Z}_k) + \sqrt{2\tilde{\gamma}} \zeta_{k+1},
\end{align*}
where the stacked vector $[(\zeta_k)_1, \dots, (\zeta_k)_{d_z},(\zeta'_k)_1, \dots (\zeta'_k)_{d_z}]^\top$ is a centred Gaussian random variable with covariance $\widetilde{C}\otimes I_{d_z}$, where $\widetilde{C}$ is given by
\begin{align*}
\widetilde{C} = \int_0^{\tilde{\eta}} \begin{bmatrix} \tilde{\psi}_0^t \\ \tilde{\psi}_1^t \end{bmatrix} \begin{bmatrix} \tilde{\psi}_0^t & \tilde{\psi}_1^t \end{bmatrix} \mathrm{d}t.
\end{align*}
Written explicitly via the definition of $\tilde{\psi}_i^t$, we have,
\small
\begin{equation*}
\widetilde{C}= \frac{1}{2\tilde{\gamma}} \begin{pmatrix}
    (1-e^{-2\tilde{\gamma}\tilde{\eta}}) &  \frac{(e^{-\tilde{\gamma}\tilde{\eta}} -1)^2}{\tilde{\gamma}} \\
    \frac{(e^{-\tilde{\gamma}\tilde{\eta}} -1)^2}{\tilde{\gamma}} &  \frac{2\tilde{\gamma}\tilde{\eta} + 4 e^{-\tilde{\gamma}\tilde{\eta}}- e^{-2\tilde{\gamma}\tilde{\eta}} -3}{\tilde{\gamma}^2}
\end{pmatrix}
\end{equation*}
\normalsize
Let us now relate the matrix $\widetilde{C}$ to the matrix we have defined in \eqref{eq:covariancemat} for the \gls*{kiplmc1} scheme. Written explicitly with the definitions in Section~\ref{subsec:kiplmc1}, \gls*{kiplmc1} instead defines functions $\psi_0^\eta = e^{-\gamma \eta}$ and $\psi_{i+1}^\eta = \int_0^\eta \psi_i^t \md t$, and the covariance matrix in \eqref{eq:covariancemat} is given as
\begin{align*}
C=\frac{1}{2\gamma}
\begin{pmatrix}
    (1-e^{- 2 \gamma\eta}) & \frac{(e^{- \gamma\eta}-1)^2}{\gamma}\\
    \frac{(e^{-\gamma\eta}-1)^2}{\gamma} & \frac{2\gamma\eta + 4e^{-\gamma\eta} - e^{-2\gamma\eta}-3}{\gamma^2}
\end{pmatrix}.
\end{align*}
Recall that the stacked vector $[(\varepsilon_k)_1, (\varepsilon'_k)_1, (\varepsilon_k)_2, (\varepsilon'_k)_2, \dots, ((\varepsilon_k)_{d_z}, (\varepsilon'_k)_{d_z})]^\top$ is a centred Gaussian random variable with covariance $C\otimes I_{d_z}$ as defined in Section~\ref{subsec:kiplmc1}. Note the relation between $\widetilde{C}$ and $C$ is given by
\begin{equation*}
    \widetilde{C}= \begin{pmatrix}
        N^{-\frac{1}{2}} C_{11} & N^{-1} C_{12}\\
        N^{-1} C_{21} & N^{-\frac{3}{2}} C_{22}
    \end{pmatrix}.
\end{equation*}
Hence, we recover
\begin{align*}
\zeta_{k} &= (N)^{-\frac{1}{4}} (\varepsilon_k^0, \dots, \varepsilon_k^N)^\top,\\
\zeta_k^\prime &= (N)^{-\frac{3}{4}} (\varepsilon^{0,\prime}_k, \dots, \varepsilon_k^{N,\prime})^\top.
\end{align*}
We first write the full exponential-integrator scheme coordinatewise, without yet substituting the relations between $\tilde{\eta}$ and $\eta$ or between the rescaled and original variables. This gives
\small
\begin{align*}
\widetilde{\theta}_{k+1} &= \widetilde{\theta}_k + \tilde{\psi}_1^{\tilde{\eta}} \widetilde{V}^\theta_k - \tilde{\psi}_2^{\tilde{\eta}} N \nabla_\theta \bar U_N(\widetilde{Z}_k) + \sqrt{2\tilde{\gamma}}(\zeta_{k}^\prime)^0,\\
\widetilde{X}_{k+1}^i &= \widetilde{X}_k^i + \tilde{\psi}_1^{\tilde{\eta}} \widetilde{V}_k^{x_i} - \tilde{\psi}_2^{\tilde{\eta}} N \nabla_{z_i} \bar U_N(\widetilde{Z}_k) + \sqrt{2\tilde{\gamma}}(\zeta_{k}^\prime)^i,\\
\widetilde{V}_{k+1}^\theta &= \tilde{\psi}_0^{\tilde{\eta}} \widetilde{V}_k^\theta - \tilde{\psi}_1^{\tilde{\eta}} N \nabla_\theta \bar U_N(\widetilde{Z}_k) + \sqrt{2\tilde{\gamma}}(\zeta_{k})^0,\\
\widetilde{V}_{k+1}^{x_i} &= \tilde{\psi}_0^{\tilde{\eta}} \widetilde{V}_k^{x_i} - \tilde{\psi}_1^{\tilde{\eta}} N \nabla_{z_i} \bar U_N(\widetilde{Z}_k) + \sqrt{2\tilde{\gamma}}(\zeta_{k})^i,
\end{align*}
\normalsize
where the noise variables $(\zeta_k)^0$, $(\zeta_k^\prime)^0$, $(\zeta_k)^i$ and $(\zeta_k^\prime)^i$ are the appropriate coordinates of $\zeta_k$ and $\zeta_k^\prime$ respectively. Now recall from Proposition~\ref{prop:kipld_standard_ud} that $\tilde{\gamma} = \sqrt{N}\gamma$, and choose $\tilde{\eta}=\eta/\sqrt{N}$. Then
\begin{align*}
\tilde{\psi}_0^{\tilde{\eta}} = e^{-\gamma\eta} = \psi_0^\eta,\quad
\tilde{\psi}_1^{\tilde{\eta}} = \frac{1}{\gamma\sqrt{N}} (1-e^{-\gamma\eta}) = \frac{\psi_1^\eta}{\sqrt{N}},\quad
\tilde{\psi}_2^{\tilde{\eta}} = \frac{1}{\gamma N}(\eta - \frac{1}{\gamma}(1-e^{-\gamma\eta})) = \frac{\psi_2^\eta}{N}.
\end{align*}
Using also the identities
\begin{equation}
\begin{aligned}
N\nabla_\theta \bar U_N(\widetilde{Z}_k) &= \sum_{i=1}^N \nabla_\theta U(\widetilde{\theta}_k,\sqrt{N}\widetilde{X}_k^i),\\
N\nabla_{z_i} \bar U_N(\widetilde{Z}_k) &= \sqrt{N}\nabla_x U(\widetilde{\theta}_k,\sqrt{N}\widetilde{Z
X}_k^i),
\end{aligned}\label{eq:grad_rescaling}
\end{equation}
together with $\widetilde{\theta}_k=\theta_k$, $\widetilde{X}_k^i=N^{-1/2}X_k^i$, $\widetilde{V}_k^\theta=\sqrt{N}V_k^\theta$, $\widetilde{V}_k^{x_i}=V_k^{x_i}$, $(\zeta_k)^0=N^{-1/4}\varepsilon_k^0$, $(\zeta_k^\prime)^0=N^{-3/4}\varepsilon_k^{0,\prime}$, $(\zeta_k)^i=N^{-1/4}\varepsilon_k^i$, and $(\zeta_k^\prime)^i=N^{-3/4}\varepsilon_k^{i,\prime}$, we obtain for the $\theta$-coordinates
\small
\begin{align*}
\theta_{k+1} &= \theta_k + \frac{\psi_1^\eta}{\sqrt{N}}\widetilde{V}^\theta_k - \frac{\psi_2^\eta}{N} \sum_{i=1}^N \nabla_\theta U(\theta_k, X_k^i)+ \sqrt{\frac{2\gamma}{N}} \varepsilon^{0,\prime}_{k},\\
\widetilde{V}^\theta_{k+1} &= \psi_0^\eta \widetilde{V}^\theta_k - \frac{\psi_1^\eta}{\sqrt{N}} \sum_{i=1}^N \nabla_\theta U(\theta_k, X_k^i) + \sqrt{2\gamma} \varepsilon^0_{k}.
\end{align*}
\normalsize
Since $\widetilde{V}_k^\theta=\sqrt{N}V_k^\theta$, substituting this relation into the first line and dividing the second line by $\sqrt{N}$ yields
\small
\begin{align*}
\theta_{k+1} &= \theta_k + \psi_1^\eta V^\theta_k - \frac{\psi_2^\eta}{N} \sum_{i=1}^N \nabla_\theta U(\theta_k, X_k^i)+ \sqrt{\frac{2\gamma}{N}} \varepsilon^{0,\prime}_{k},\\
V^\theta_{k+1} &= \psi_0^\eta V^\theta_k - \frac{\psi_1^\eta}{N} \sum_{i=1}^N \nabla_\theta U(\theta_k, X_k^i) + \sqrt{\frac{2\gamma}{N}} \varepsilon^0_{k}.
\end{align*}
\normalsize
For each particle coordinate $i\in[N]$, the same substitutions give
\small
\begin{align*}
\widetilde{X}_{k+1}^i &= \widetilde{X}_k^i + \frac{\psi_1^\eta}{\sqrt{N}}V^{x_i}_k - \frac{\psi_2^\eta}{\sqrt{N}} \nabla_x U(\theta_k, X_k^i)+ \sqrt{\frac{2\gamma}{N}} \varepsilon^{i,\prime}_{k},\\
V^{x_i}_{k+1} &= \psi_0^\eta V^{x_i}_k - \psi_1^\eta \nabla_x U(\theta_k, X_k^i) + \sqrt{2\gamma} \varepsilon^i_{k}.
\end{align*}
\normalsize
Finally, multiplying the first equation by $\sqrt{N}$, i.e. using $X_k^i=\sqrt{N}\widetilde{X}_k^i$, yields
\small
\begin{align*}
X^i_{k+1} &= X^i_k + \psi_1^\eta V^{x_i}_k - \psi_2^\eta \nabla_x U(\theta_k, X_k^i)+ \sqrt{2\gamma} \varepsilon^{i,\prime}_{k}\\
V^{x_i}_{k+1} &= \psi_0^\eta V^{x_i}_k - \psi_1^\eta \nabla_x U(\theta_k, X_k^i) + \sqrt{2\gamma} \varepsilon^i_{k}.
\end{align*}
\normalsize
Together with the $\theta$-update above, this is precisely the \gls*{kiplmc1} scheme.
\end{proof}
\subsection{KIPLMC2}\label{app:algderivKIPLMC2}
\begin{proposition}\label{prop:kipld_kiplmc2}
Let $(\widetilde{Z}_k,\widetilde{V}_k^z)_{k\geq 0}$ be the OBABO discretisation of the rescaled underdamped diffusion \eqref{eq:simplesde_scaled} with friction parameter $\widetilde{\gamma}=\sqrt{N}\gamma$ and step-size $\widetilde{\eta}=\eta/\sqrt{N}$. Define the unscaled variables by
\begin{align*}
\widetilde{\theta}_k &= \theta_k,\qquad \widetilde{X}_k^i = N^{-1/2}X_k^i,\\
\widetilde{V}_k^\theta &= \sqrt{N}V_k^\theta,\qquad \widetilde{V}_k^{x_i} = V_k^{x_i},
\end{align*}
for each $i\in[N]$. Then the resulting recursion in $(\theta_k,X_k^i,V_k^\theta,V_k^{x_i})_{i\in[N]}$ coincides exactly with the \gls*{kiplmc2} scheme defined in Algorithm~\ref{alg:KIPLMC2}.
\end{proposition}
\begin{proof}
In this case we again apply Proposition~\ref{prop:kipld_standard_ud} to apply the OBABO results from \citet{monmarche2021high} for the standard underdamped diffusion, to our own. Let us recall the standard form, given in Proposition~\ref{prop:kipld_standard_ud} with $\tilde{\gamma}=\sqrt{N}\gamma$,
\begin{align*}
\md \widetilde{\Z}_t &= \widetilde{\V}_t^z\,\md t, \\
\md \widetilde{\V}_t^z &= -\widetilde{\gamma}\,\widetilde{\V}_t^z\,\md t
- N\nabla_z \bar U_N(\widetilde{\Z}_t)\,\md t
+ \sqrt{2\widetilde{\gamma}}\,\md \widetilde{\B}_t,
\end{align*}
This diffusion is discretised with the OBABO scheme with step-size $\tilde{\eta}$, to obtain,
\small
\begin{align*}
\widetilde{Z}_{k+1} &= \widetilde{Z}_k + \tilde{\eta} \left(\tilde{\delta}\widetilde{V}_k + \sqrt{1-\tilde{\delta}^2}\xi_{k+1}\right)-\frac{\tilde{\eta}^2}{2} N\nabla_z\bar{U}_N(\widetilde{Z}_k)\\
\widetilde{V}_{k+1}&= \tilde{\delta}^2 \widetilde{V}_{k} + \sqrt{1-\tilde{\delta}^2} (\tilde{\delta}\xi_{k+1} + \xi^\prime_{k+1})-\frac{\tilde{\eta}\tilde{\delta}}{2} N (\nabla_z \bar{U}_N(\widetilde{Z}_k) + \nabla_z \bar{U}_N(\widetilde{Z}_{k+1})),
\end{align*}
\normalsize
where we recall that $\tilde{\delta}=e^{-\tilde{\gamma}\tilde{\eta}/2}$. Let us begin at looking at the discretisation coordinatewise in the scaled case,
\small
\begin{align*}
\widetilde{\theta}_{k+1} &= \widetilde{\theta}_k + \tilde{\eta}\left(\tilde{\delta} \widetilde{V}^\theta_{k} + \sqrt{1-\tilde{\delta}^2} \xi^0_{k+1}\right) - \frac{\tilde{\eta}^2}{2} N \nabla_\theta \bar{U}_N(\widetilde{Z}_k),\\
\widetilde{X}^i_{k+1} &= \widetilde{X}^i_k + \tilde{\eta}\left(\tilde{\delta} \widetilde{V}^{x_i}_{k} + \sqrt{1-\tilde{\delta}^2} \xi^i_{k+1}\right) - \frac{\tilde{\eta}^2}{2} N \nabla_{z_i} \bar{U}_N(\widetilde{Z}_k),\\
\widetilde{V}^\theta_{k+1} &= \tilde{\delta}^2 \widetilde{V}^\theta_k + \sqrt{1-\tilde{\delta}^2} (\tilde{\delta} \xi^0_{k+1} + \xi_{k+1}^{0,\prime}) - \frac{\tilde{\eta}\tilde{\delta}}{2} N(\nabla_\theta \bar{U}_N(\widetilde{Z}_k) + \nabla_\theta \bar{U}_N(\widetilde{Z}_{k+1}),\\
\widetilde{V}^{x_i}_{k+1} &= \tilde{\delta}^2 \widetilde{V}^{x_i}_k + \sqrt{1-\tilde{\delta}^2} (\tilde{\delta} \xi^i_{k+1} + \xi_{k+1}^{i,\prime}) - \frac{\tilde{\eta}\tilde{\delta}}{2} N(\nabla_{z_i} \bar{U}_N(\widetilde{Z}_k) + \nabla_{z_i} \bar{U}_N(\widetilde{Z}_{k+1})).
\end{align*}
\normalsize
We now proceed to rescaling the algorithm by observing that $\tilde{\delta}=\delta$ and recalling that $\tilde{\psi}_0^{\tilde{\eta}} = \psi_0^\eta$. Using this, $\tilde{\eta} = \eta/ \sqrt{N}$ and $\tilde{\gamma} = \sqrt{N}\gamma$ and the identities \eqref{eq:grad_rescaling}, we obtain for the $\theta$-coordinates,
\small
\begin{align*}
\theta_{k+1} &= \theta_k + \frac{\eta}{\sqrt{N}} \left(\delta \widetilde{V}^\theta_k + \sqrt{1-\delta^2}\xi_{k+1}\right)- \frac{\eta^2}{2N}\sum_{i=1}^N\nabla_\theta U(\theta_k, X_k^i)\\
\widetilde{V}^\theta_{k+1} &= \delta^2 \widetilde{V}^\theta_k  + \sqrt{1-\delta^2} (\delta\xi_{k+1} + \xi^\prime_{k+1}) 
- \frac{\eta\delta}{2\sqrt{N}}\sum_{i=1}^N \bigg(\nabla_\theta U(\theta_k, X_k^i) + \nabla_\theta U(\theta_{k+1}, X_{k+1}^i)\bigg),
\end{align*}
\normalsize
Substituting in the relationship $\widetilde{V}_k^\theta=\sqrt{N}V_k^\theta$ and $\widetilde{X}_k^i = X_k^i/\sqrt{N}$, we obtain,
\small
\begin{align*}
\theta_{k+1} &= \theta_k + \eta \left(\delta V^\theta_k + \sqrt{\frac{1-\delta^2}{N}}\xi_{k+1}\right) - \frac{\eta^2}{2N}\sum_{i=1}^N\nabla_\theta U(\theta_k, X_k^i)\\
V^\theta_{k+1} &= \delta^2 V^\theta_k  + \sqrt{\frac{1-\delta^2}{N}} (\delta\xi_{k+1} + \xi^\prime_{k+1})
- \frac{\eta\delta}{2N}\sum_{i=1}^N \bigg(\nabla_\theta U(\theta_k, X_k^i) + \nabla_\theta U(\theta_{k+1}, X_{k+1}^i)\bigg).
\end{align*}
\normalsize
For the particles $i\in [N]$, the same substitutions give,
\small
\begin{align*}
\widetilde{X}^i_{k+1} &= \widetilde{X}^i_k + \frac{\eta}{\sqrt{N}}\left(\delta V^{x_i}_{k} + \sqrt{1-\delta^2} \xi^i_{k+1}\right) - \frac{\eta^2}{2N} \nabla_{x_i} U(\theta_k , \sqrt{N}\widetilde{X}_k^i),\\
V^{x_i}_{k+1} &= \delta^2 V^{x_i}_k + \sqrt{1-\delta^2} (\delta \xi^i_{k+1} + \xi_{k+1}^{i,\prime}) - \frac{\eta\delta}{2\sqrt{N}} (\nabla_{x_i} U(\theta_k, \sqrt{N}\widetilde{X}^i_k) + \nabla_{x_i} U(\theta_{k+1}, \sqrt{N}\widetilde{X}^i_{k+1})).
\end{align*}
\normalsize
Finally, using $X_k^i = \sqrt{N}\widetilde{X}_k^i$,
we recover,
\small
\begin{align*}
X^i_{k+1} &= X^i_k + \eta \left(\delta V^{x_i}_k + \sqrt{1-\delta^2}\xi_{k+1}\right) - \frac{\eta^2}{2}\nabla_x U(\theta_k, X_k^i)\\
V^{x_i}_{k+1} &= \delta^2 V^{x_i}_k  + \sqrt{1-\delta^2} (\delta\xi_{k+1} + \xi^\prime_{k+1}) - \frac{\eta\delta}{2} \bigg(\nabla_x U(\theta_k, X_k^i) + \nabla_x U(\theta_{k+1}, X_{k+1}^i)\bigg).
\end{align*}
\normalsize
This shows that the OBABO discretisation of the rescaled diffusion, with step-size $\tilde{\eta} = \eta/\sqrt{N}$, is precisely the \gls*{kiplmc2} scheme.
\end{proof}
\begin{remark} The results in Propositions~\ref{prop:kipld_standard_ud}--\ref{prop:kipld_kiplmc2} significantly streamline our proofs, since they allow us to connect \gls*{kipld} and our methods \gls*{kiplmc1} and \gls*{kiplmc2} to the standard underdamped diffusion and its discretisations, for which there is a rich literature of results that we can apply.
\end{remark}
\section{Proofs of main results}\label{app:proofs}

\subsection{Proof of Proposition~\ref{prop:stationary}}\label{app:proofstationary}
Given Proposition~\ref{prop:kipld_standard_ud}, the system \eqref{eq:simplesde_scaled} has a stationary measure given by the density
\begin{equation*}
\tilde{\pi}(z, \bar{v}) \propto \exp \left(-N\bar{U}_N(z) - \frac{1}{2}\|\bar{v}\|^2 \right),
\end{equation*}
where $\bar{U}_N$ is given in \eqref{eq:barU}. Let us now look at $\theta$-marginal of this density, which can be written as
\begin{equation*}
\pi_\Theta (\theta) \propto \int
e^{-\sum_{i=1}^N U(\theta, \sqrt{N} z_i) - \frac{1}{2} \|\bar{v}^z\|^2}  \md z_x \md \bar{v}.
\end{equation*}
where $z_x = (z_1, \ldots, z_N) \in \bR^{N d_x}$. Note now that, integrating out $\bar{v}$, using a change of variables $x_i' = \sqrt{N} z_i$, and setting $x' = (x_1', \ldots, x_N')$, we have
\begin{align*}
\pi_\Theta (\theta) &\propto\int_{\mathbb{R}^{Nd_x}} 
e^{-\sum_{i=1}^N U(\theta, x'_i)} \md x'\propto \left(\int_{\bR^{d_x}} e^{-U(\theta, x') } \md x'\right)^N= \exp(N \log p_\theta(y)),
\end{align*}
since $p_\theta(y) = \int e^{-U(\theta, x)} \md x$ by definition, where $U(\theta, x) = -\log p_\theta(x, y)$.

\subsection{Proof of Proposition~\ref{prop:mmleerror}}\label{app:proofmmleerror}
Note that, by \ref{assump:strongconvexity}, $U(\theta, x)$ is jointly $\mu$-strongly convex. Let $\kappa(\theta) = - \log p_\theta(y)$. Using the Pr\'{e}kopa-Leindler inequality for strongly log-concave distributions \citep[Theorem~3.8]{saumard2014log}, we can see that
\begin{align*}
\langle \theta - \theta', \nabla \kappa(\theta) - \nabla \kappa(\theta') \rangle \geq \mu \|\theta - \theta'\|^2.
\end{align*}
The bound now follows using Lemma~A.8 from \citet{altschuler2023fasterhighaccuracylogconcavesampling}.

\subsection{Proof of Proposition~\ref{prop:kipld_conv}}\label{app:proofkipld_conv}
By Proposition~\ref{prop:kipld_standard_ud}, the rescaled process
\((\widetilde{\Z}_t,\widetilde{\V}_t^z)\) solves the standard underdamped \gls*{sde}
\begin{equation}\label{eq:proof_kipld_scaled}
\begin{aligned}
\md \widetilde{\Z}_t &= \widetilde{\V}_t^{z}\md t,\\
\md \widetilde{\V}_t^z
&= -\widetilde{\gamma}\widetilde{\V}_t^z\md t
- \nabla N\bar{U}_N(\widetilde{\Z}_t)\md t
+ \sqrt{2\widetilde{\gamma}}\,\md \widetilde{\B}_t,
\end{aligned}
\end{equation}
with $\widetilde{\gamma}=\sqrt{N}\gamma$. By Lemmas~\ref{lem:strongconvexity_barU} and \ref{lem:gradientlipschitz_barU},
\(\bar U_N\) is \(\mu\)-strongly convex and \(L\)-gradient-Lipschitz, hence
\(N\bar{U}_N\) is \(N\mu\)-strongly convex and \(NL\)-gradient-Lipschitz. Define
\[
\widetilde{\mu}:=N\mu,\qquad \widetilde{L}:=NL.
\]
The condition \(\gamma\ge \sqrt{\mu+L}\) implies
\[
\widetilde{\gamma}=\sqrt{N}\gamma \ge \sqrt{N(\mu+L)}=\sqrt{\widetilde{\mu}+\widetilde{L}}.
\]
Therefore Theorem~1 in \citet{dalalyan2018sampling} applies to
\eqref{eq:proof_kipld_scaled} and yields, for \(s\ge 0\),
\begin{align*}
W_2\!\left(\mathcal{L}(\widetilde{\Z}_s),\widetilde{\pi}_{\Z}\right)
&\le\;
\sqrt{2}\exp\!\left(-\frac{\widetilde{\mu}}{\widetilde{\gamma}}s\right)
W_2\!\left(\mathcal{L}(\widetilde{\Z}_0),\widetilde{\pi}_{\Z}\right).
\end{align*}
where \(\widetilde{\pi}_{\Z}\) is the \(\widetilde{\Z}\)-marginal of \(\widetilde{\pi}\)
(Lemma~\ref{lem:scaledstat}).

Now relate rescaled and original time: by definition,
\[
\widetilde{\Z}_{t/\sqrt{N}}
=\big(\bm{\theta}_{t},N^{-1/2}\X_t^1,\ldots,N^{-1/2}\X_t^N\big)^\intercal.
\]
Hence the \(\theta\)-component of \(\widetilde{\Z}_{t/\sqrt{N}}\) is \(\bm{\theta}_t\), and
the \(\theta\)-marginal of \(\widetilde{\pi}_{\Z}\) is \(\pi_\Theta\). We note
\begin{equation*}
W_2\!\left(\mathcal{L}(\bm{\theta}_t),\pi_\Theta\right)
\le
W_2\!\left(\mathcal{L}(\widetilde{\Z}_{t/\sqrt{N}}),\widetilde{\pi}_{\Z}\right).
\end{equation*}
Using \(s=t/\sqrt{N}\) gives
\[
\frac{\widetilde{\mu}}{\widetilde{\gamma}}\,\frac{t}{\sqrt{N}}
=\frac{N\mu}{\sqrt{N}\gamma}\,\frac{t}{\sqrt{N}}
=\frac{\mu}{\gamma}t.
\]
Therefore
\begin{equation*}
W_2\!\left(\mathcal{L}(\bm{\theta}_t),\pi_\Theta\right)
\le
\sqrt{2}\exp\!\left(-\frac{\mu}{\gamma}t\right)
W_2\!\left(\mathcal{L}(\widetilde{\Z}_0),\widetilde{\pi}_{\Z}\right).
\end{equation*}
Finally, with \(\bar Z_\star\sim \widetilde{\pi}_{\Z}\),
\[
W_2\!\left(\mathcal{L}(\widetilde{\Z}_0),\widetilde{\pi}_{\Z}\right)
\le \mathbb{E}\!\left[\|\widetilde{\Z}_0-\bar Z_\star\|^2\right]^{1/2},
\]
which is exactly the stated bound (up to the same notation
\(\widetilde{\Z}_0\equiv \Z_0\) used in the proposition).

\subsection{Proof of Theorem~\ref{thm:KPLMC1}}\label{app:proofnumericalKPLMC1}
By Proposition~\ref{prop:kipld_standard_ud}, we can rewrite KIPLD as the standard underdamped diffusion with stationary measure
\begin{align*}
\widetilde{\pi}(z, v) \propto e^{-N \bar{U}_N(z) - \|v\|^2/2}.
\end{align*}
We will now look at the standard kinetic Langevin Monte Carlo discretisation of the \gls*{sde} and show that this scheme coincides with \gls*{kiplmc1} to utilize the bounds in \citet{dalalyan2018sampling} directly.

Let us write the Exponential Integrator discretisation of this scheme as we do in \ref{app:algderivKIPLMC1},
\small
\begin{equation}\label{eq:underdamped_normal_1}
\begin{aligned}
\widetilde{Z}_{n+1} &= \widetilde{Z}_{n} + \widetilde{\psi}_1^{\tilde{\eta}} \widetilde{V}_n^z - \widetilde{\psi}_2^{\tilde{\eta}} N \nabla \bar{U}_N(\widetilde{Z}_n)+ \sqrt{2 \tilde{\gamma}} \zeta_{n+1}',\\
\widetilde{V}_{n+1}^z &= \widetilde{\psi}_0^{\tilde{\eta}} \widetilde{V}_n^z - \widetilde{\psi}_1^{\tilde{\eta}} N \nabla \bar{U}_N(\widetilde{Z}_n) + \sqrt{2 \tilde{\gamma}} \zeta_{n+1}. \end{aligned}
\end{equation}
\normalsize
Since the function $z \mapsto N \bar{U}_N(z)$ is $N\mu$ strongly convex and $NL$-gradient-Lipschitz, for $\tilde{\gamma} \geq \sqrt{N\mu + N L}$ and $\tilde{\eta} \leq \mu / (4 \tilde{\gamma} L)$, \citet[Theorem~2]{dalalyan2018sampling} directly implies that
\small
\begin{align*}
W_2(\nu_n, \widetilde{\pi}) &\leq \sqrt{2} \mathbb{E}[\|Z_0-\bar{Z}_\star\|^2]^{1/2} \left( 1 - \frac{0.75 \tilde{\mu} \tilde{\eta}}{\tilde{\gamma}}\right)^n + \frac{\tilde{\eta}\tilde{L}}{\tilde{\mu}}\sqrt{2d_z},
\end{align*}
\normalsize
where $\nu_n$ denotes the law of the discretisation \eqref{eq:underdamped_normal_1} at time $n$, $\widetilde{\pi}$ the stationary measure given in Lemma \ref{lem:scaledstat} and $\bar{Z}_\star\sim \tilde{\pi}$. We now recall from \ref{app:algderivKIPLMC1}, that $\tilde{\eta}=\eta/\sqrt{N}$ and $\tilde{\gamma}=\sqrt{N}\gamma$, from which we obtain,
\small
\begin{equation*}
W_2(\nu_n, \tilde{\pi}) \leq C_1\left(1-\frac{0.75 \mu\eta}{\gamma}\right)^n + \eta C_2,
\end{equation*}
\normalsize
where,
\small
\begin{align*}
C_1&= \sqrt{2} \bE[\|Z_0-\bar{Z}_\star\|^2]^{1/2}, \qquad \bar{Z}_\star\sim \widetilde{\pi}\\
C_2&=\sqrt{2}\frac{L}{\mu} \sqrt{\frac{d_z}{N}}.
\end{align*}
\normalsize

Let us now recall that $\mathcal{L}(\theta_n)$ and $\pi_\Theta$ are restrictions of the measures $\nu_n$ and $\widetilde{\pi}$ respectively. Applying this and a triangle inequality, we obtain,
\small
\begin{align*}
\mathbb{E}[\|\theta_n-\bar{\theta}_\star\|^2]^{1/2} &= W_2(\mathcal{L}(\theta_n), \delta_{\bar{\theta}_\star})\\
&\leq W_2(\mathcal{L}(\theta_n), \pi_\Theta) + W_2(\pi_\Theta, \delta_{\bar{\theta}_\star})\\
&\leq C_1 \left(1-\frac{0.75 \mu\eta}{\gamma}\right)^n + \eta C_2 + \sqrt{\frac{C_3}{N}},
\end{align*}
\normalsize
where the last term is bounded by Proposition~\ref{prop:mmleerror} with $C_3=d_\theta/\mu$.

\subsection{Proof of Theorem~\ref{thm:KIPLMC2}}\label{app:proofnumericalKPLMC2}
Again, we consider our rescaled standard underdamped diffusion, given by \eqref{eq:simplesde_scaled} and seek to directly apply the results from \citet{monmarche2021high} for our scheme.

We now write the splitting scheme as in \ref{app:algderivKIPLMC2},
\small
\begin{equation}
\begin{aligned}\label{eq:underdamped_normal_2}
\widetilde{Z}_{n+1} &= \widetilde{Z}_n + \tilde{\eta} (\tilde{\delta} \widetilde{V}^z_n + \sqrt{1-\tilde{\delta}^2} \xi_{n+1})- \frac{\tilde{\eta}^2}{2}N \nabla \bar{U}_N(\widetilde{Z}_n)\\
\widetilde{V}^z_{n+1} &= \tilde{\delta}^2 \widetilde{V}_n^z - \frac{\tilde{\eta}\tilde{\delta}}{2}(N\nabla \bar{U}_N(\widetilde{Z}_n) + N \nabla\bar{U}_N (\widetilde{Z}_{n+1}) ) + \sqrt{1-\tilde{\delta}^2}(\tilde{\delta}\xi_{n+1} + \xi_{n+1}^\prime),
\end{aligned}
\end{equation}
\normalsize
where $\tilde{\delta}=e^{-\tilde{\gamma}\tilde{\eta}/2}$. Using the $\tilde{L}$-gradient-Lipschitzness and $\tilde{\mu}$ strong convexity of the function $z\mapsto N \bar{U}_N(z)$, we are able to apply Theorem~1 from \citet{monmarche2021high}. This result states that, for $\tilde{\gamma} \geq 2\sqrt{\tilde{L}}$ and $\tilde{\eta} \leq N\mu/(33\tilde{\gamma}^3)$,
\small
\begin{align*}
W_2(\nu_n, \widetilde{\pi})&\leq \tilde{C}_1 \left(1-\frac{\tilde{\eta}\tilde{\mu}}{3\tilde{\gamma}}\right)^{n/2} + \tilde{\eta}K_1,
\end{align*}
\normalsize
where $\nu_n$ is now the law of the discretisation \eqref{eq:underdamped_normal_2} at time $n$, $\widetilde{\pi}$ is the stationary measure from Lemma \ref{lem:scaledstat} and,
\small
\begin{align*}
K_1 &= \sqrt{2d_z} \frac{6\tilde{\gamma} \tilde{K}}{\tilde{\mu}} \sqrt{3N} (\sqrt{L}\lor \sqrt{L}^{-1}) ,\quad
\tilde{K}= \tilde{L}\left(1+e^{\tilde{L}\tilde{\eta}^2} \left(\frac{\tilde{\eta}}{6}+\frac{\tilde{\eta}^2\tilde{L}}{24}\right)\right)\left(1+\frac{\tilde{\eta}\tilde{L}}{2\sqrt{\tilde{\mu}}}\right).
\end{align*}
\normalsize

\begin{figure}[!b]
    \centering
    \includegraphics[width=\linewidth]{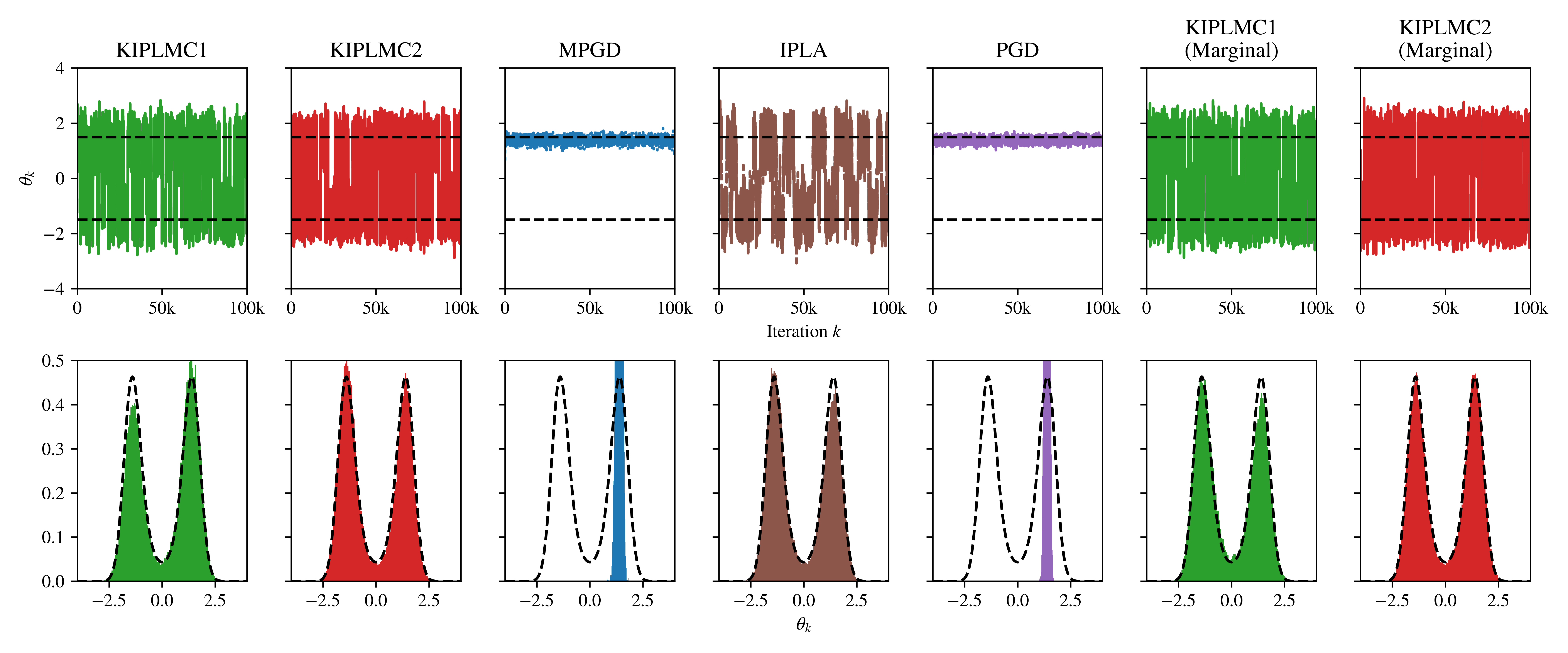}
    \caption{\textbf{Multi-well Experiment.} The trace plots for \gls*{kiplmc1}, \gls*{kiplmc2}, \gls*{mpgd}, \gls*{ipla} and \gls*{pgd} on the multi-well experiment, together with density plots. We have shown in black, dashed lines the ``true'' values of $\bar{\theta}_\star$ in the trace plots and the ``true'' posterior density for $\theta$ in the density plots. $N=10$ and the friction and step-size are optimised for each algorithm.}
    \label{fig:multiwell}
\end{figure}

Let us now recall from Theorem~\ref{thm:KIPLMC2}, the defined constants,
\begin{align*}
\tilde{C}_1 =\sqrt{3} (\sqrt{L} \lor \sqrt{L}^{-1}) \mathbb{E}[\|Z_0 - \bar{Z}^\star\|^2&]^{1/2} ,\quad
\tilde{C}_2 = \sqrt{2 d_z} \frac{6\gamma K}{\mu} \sqrt{3} (\sqrt{L} \lor \sqrt{L}^{-1}),\\
K = L\bigg(1+e^{L\eta^2}&\left(\frac{\eta}{6}+ \frac{\eta^2 L}{24}\right)\bigg)\bigg(1+\frac{\eta L}{2\sqrt{\mu}}\bigg).
\end{align*}
Further, recall that, $\tilde{\mu}=N\mu$, $\tilde{L} =NL$, $\tilde{\eta}=\eta/\sqrt{N}$ and $\tilde{\gamma}=\sqrt{N}\gamma$. With this we obtain,
\small
\begin{align*}
W_2(\nu_n, \widetilde{\pi})&\leq \tilde{C}_1 \left(1-\frac{\tilde{\eta}N\mu}{3\tilde{\gamma}}\right)^{n/2} + \sqrt{N}\tilde{\eta}\tilde{C}_2.
\end{align*}
\normalsize
Again, applying the triangle inequality and Proposition~\ref{prop:mmleerror}, gives us,
\small
\begin{align*}
\bE[\|\theta_n-\bar{\theta}_\star\|^2]^{1/2} \leq \tilde{C}_1 \left(1-\frac{\eta \mu}{3\gamma}\right)^{n/2} + \eta\tilde{C}_2 + \sqrt{\frac{\tilde{C}_3}{N}},
\end{align*}
\normalsize
with $C_3=d_\theta/\mu$.

\section{Results}


\subsection{Multi-well Experiment}\label{sec:multiwell}
We include the following multi-well experiment to highlight the advantage of the noised $\theta$ dynamics, the core difference between the proposed methods and \gls*{mpgd} \citep{lim2023momentum}. To do this, we consider a synthetic example with non-convex posterior. Consider the joint probability density over $\mathbb{R}$,
\begin{align*}
    p_\theta(x,y) &= \mathcal{N}(y; x, \sigma^2_y)p_\theta(x),\\
    p_\theta(x) &= \frac{1}{2}\mathcal{N}(x; -\theta,\sigma^2_x) + \frac{1}{2}\mathcal{N}(x;\theta,\sigma_x^2).
\end{align*}
One can note that given small enough choices of $\sigma_x$ and $\sigma_y$, as well as, large enough $\bar{\theta}_\star$, this density has a multi-modal $\theta$-marginal. Indeed, let us observe that the parameter posterior is given as,
\begin{equation*}
p_\theta(y) = \frac{1}{2} \mathcal{N}(y;\theta, \sigma_x^2+\sigma_y^2) + \frac{1}{2}\mathcal{N}(y;-\theta, \sigma_x^2 +\sigma_y^2).
\end{equation*}

In this case, we seek to show that whilst the \gls*{mpgd} algorithm gets stuck in one of the local minima, the \gls*{kiplmc1} algorithm escapes these under the same hyperparameters. For the experiment we generate 1 data-points at $\theta=1.5$, from which we obtain true $\bar{\theta}_\star=\pm1.5$, and we fix $\sigma_x^2=1$ and $\sigma_y^2=0.25$. In Fig.~\ref{fig:multiwell}, we show the trace plots of $\theta$, together with the associated empirical density plots, compared to $\bar{\theta}_\star$ and the marginal likelihood $p_\theta(y)$ respectively. For sake of comparison, we also compare the dynamics of \gls*{kiplmc1} and \gls*{kiplmc2} on the marginal dynamics, i.e. we consider \gls*{kipld} to be given simply as 
\begin{align*}
\mathrm{d}\bm{\theta}_t &= \V_t^\theta \mathrm{d}t,\\
\mathrm{d}\V^\theta_t &= -\gamma \V_t^\theta -\nabla_\theta \kappa(\theta)\mathrm{d}t + \sqrt{\frac{2\gamma}{N}}\mathrm{d}\B_t,
\end{align*}
for a $\mathbb{R}^{d_\theta}$ Brownian motion $\B_t$ and we recall from Proposition~\ref{prop:stationary} that $\kappa(\theta)=-\log p_\theta(y)$. In Fig.~\ref{fig:multiwell} we observe that all methods converge close to the true solution, but only the ``noised'' algorithms \gls*{kiplmc1}, \gls*{kiplmc2} and \gls*{ipla} successfully have support on both true solutions reliably.

\begin{figure}[b!]
    \centering
    \includegraphics[width=.4\linewidth]{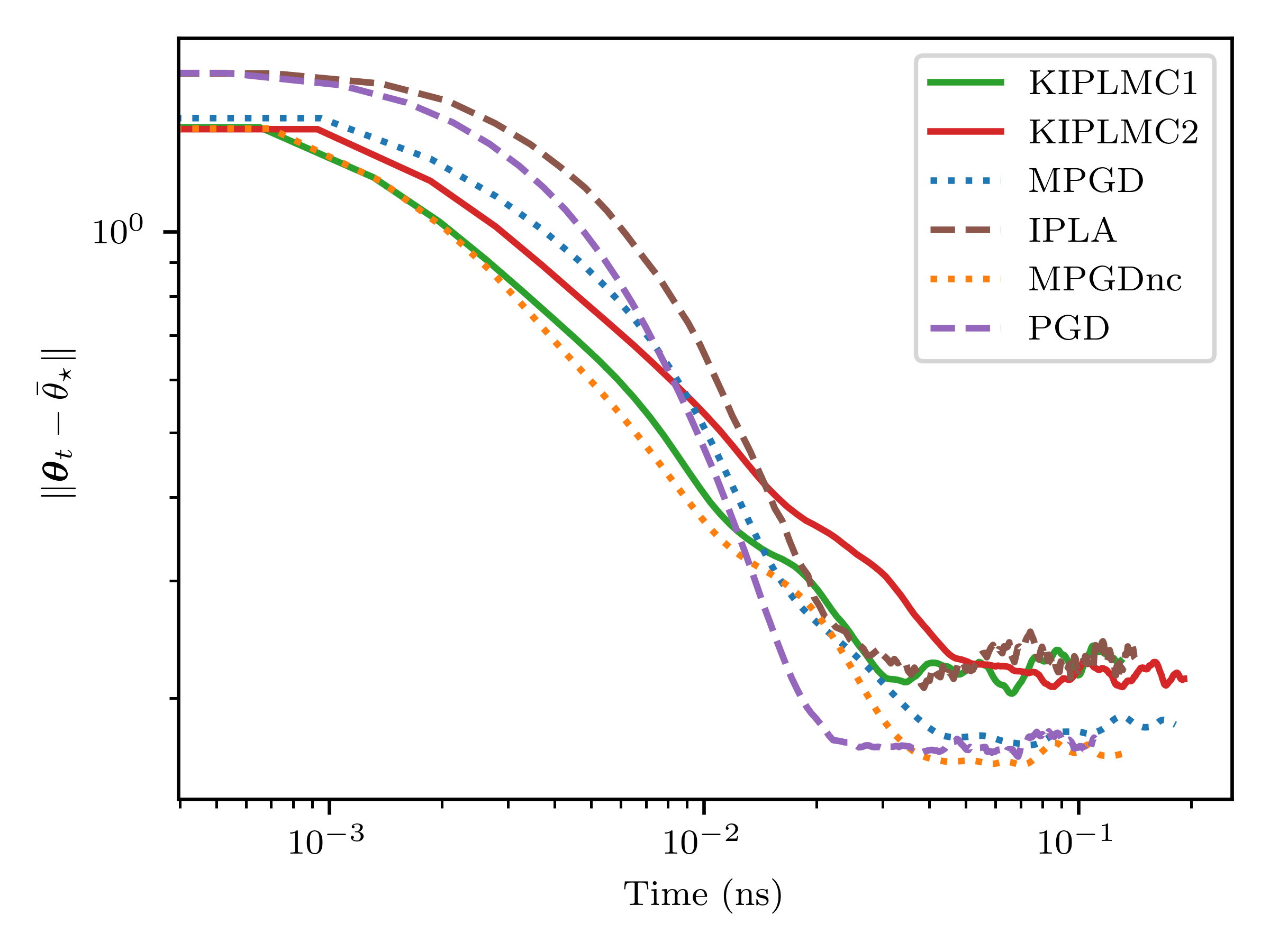}
    \caption{\textbf{Runtime Comparison.} We compare the MSE for the synthetic logistic regression experiment performed in Sec.~\ref{sec:toyex} with the same hyperparameters $d_x=d_\theta=3$, $d_y=100$, $\gamma=2.2$ and $N=100$. The results shown are the average of 100 Monte Carlo simulations.}
    \label{fig:timing}
\end{figure}

\subsection{Computational Cost Comparison}\label{app:compcost}
We further perform an additional experiment on the Sec.~\ref{sec:toyex} example, to compare MSE against computational cost. Indeed, this allows for more thorough comparisons with the methods, as \gls*{kiplmc1} generates correlated noise, \gls*{kiplmc2} performs a gradient correction and \gls*{mpgd} does both. Further all these methods require twice as much space in memory, when compared to \gls*{pgd} and \gls*{ipla}, due to the presence of the momentum variables.

In Fig.~\ref{fig:timing}, we observe the faster initial convergence rate of the momentum based methods, despite their greater computational cost. Further, we observe that the increased cost of noising the $\theta$-dynamics in our proposed methods, is negligible when compared to a second gradient evaluation per iteration, required by \gls*{mpgd} and \gls*{kiplmc2}.

\subsection{Area Between Curve Calculation (ABC)}\label{sec:ABC}
To compare the behaviour of the different algorithms, we employ a measure that quantifies how accurately and quickly each algorithm converges to the true solution, namely the ABC, as in \citet{lim2023momentum}. Consider $c^k:\mathbb{N}\to \mathbb{R}$ and $c^m:\mathbb{N}\to \mathbb{R}$ to be the accuracy curves of \gls*{kiplmc2} and \gls*{ipla} respectively, i.e. for $\{\theta_n\}_{n=1}^\infty$ generated by the \gls*{kiplmc2} algorithm, $c^k(n)=\|\theta_n -\bar{\theta}^\star\|$. The ABC is given as,
\[\sum_{i=1}^M w(i) (c^m(i)-c^k(i)),\]
where $M$ is the total number of time-steps and $w$ acts as a weighting function $w(i)=2i/(M(M+1))$, matching \citet{kuntz2023particle}. Hence, the ABC computes a signed, weighted area between the error curves of \gls*{ipla} and \gls*{kiplmc2}. In particular, when $c^m$ is dominated by $c^k$, the ABC is negative, whilst positive in the converse case.

\subsection{Error Metrics for the BNN}\label{sec:err}
Recall that the output of our model for latent variable $x=(w,v)$, image features $f$ is,
\begin{equation*}
p(l|f, x) \propto \exp\left(\sum_{j=1}^{40} v_{lj} \tanh\left(\sum_{i=1}^{784} w_{ji} f_i\right)\right).
\end{equation*}
To normalise the outputs of the RHS over the labels we apply softmax to the estimates for all $l$. Denote this as $g(l|f,x)$ and we give the output of the model as
\begin{equation*}
\hat{l}(f|x) = \argmax_{l\in\{0,1\}} g(l|f,x).
\end{equation*}

For the BNN in \ref{sec:bnn} we use two metrics to evaluate the performance of our models, which we evaluate on test set of 200 images and labels $\mathcal{Y}_\text{test}$. In particular we use a relative error metric, measuring the percentage accuracy on the test set,
\begin{equation*}
\frac{1}{|\mathcal{Y}_\text{test}|}\sum_{(f,l)\in\mathcal{Y}_\text{test}} |l - \hat{l}(f|x)|.
\end{equation*}

A second measure, which in our case is more discriminating, is the log pointwise predictive density (LPPD). This returns the average log probability assigned by the model to the correct label, given as,
\begin{equation*}
\frac{1}{|\mathcal{Y}_\text{test}|}\sum_{(f,l)\in\mathcal{Y}_\text{test}} \log g(l|f,x).
\end{equation*}
Indeed, by \citet{kuntz2023particle}, assuming that the data generating process produces independent samples, then this metric approximates the difference of the Kullback-Leibler divergence and the entropy of the data generating process. Indeed, the LPPD approximates
\begin{align*}
\int \log g(l|f,x) p(\mathrm{d}l, \mathrm{d}f)=& \int \int \log\left(\frac{g(l|f)}{p(l|f)}\right)p(\mathrm{d}l|\mathrm{d}f) p(\mathrm{d}f)+ \int \log p(l|f) p(\mathrm{d}l, \mathrm{d}f)\\
=& -\int \text{KL}(g(\cdot| f)\|p(\cdot|f))p(\mathrm{d}f) + \int \log p(l|f) p(\mathrm{d}l, \mathrm{d}f),
\end{align*}
where $p(l,f)$ is the density from which the data is independently sampled. Hence, the larger the LPPD, the smaller the Kullback-Leibler divergence between our estimate and the true distribution.

\end{document}